\newcommand{\0}{\mathbf{0}}
\renewcommand{\H}{\mathbf{H}}
\newcommand{\A}{\mathbf{A}}
\renewcommand{\b}{\mathbf{b}}
\newcommand{\B}{\mathbf{B}}
\newcommand{\I}{\mathbf{I}}
\newcommand{\M}{\mathbf{M}}
\newcommand{\s}{\mathbf{s}}
\renewcommand{\v}{\mathbf{v}}
\newcommand{\V}{\mathbf{V}}
\newcommand{\x}{\mathbf{x}}
\newcommand{\Z}{\mathbf{Z}}
\newcommand{\bbeta}{\boldsymbol{\beta}}
\newcommand{\cbeta}{\check{\boldsymbol{\beta}}}
\newcommand{\hbeta}{\hat{\boldsymbol{\beta}}}
\newcommand{\tbeta}{\tilde{\boldsymbol{\beta}}}
\newcommand{\bLambda}{\boldsymbol{\Lambda}}
\newcommand{\bkappa}{\boldsymbol{\kappa}}
\newcommand{\bSigma}{\boldsymbol{\Sigma}}
\newcommand{\bvsigma}{\boldsymbol{\varsigma}}
\newcommand{\eeta}{\boldsymbol{\eta}}%
\newcommand{\onen}{\frac{1}{n}}
\newcommand{\oneN}{\frac{1}{N}}
\newcommand{\op}{o_{P}(1)}
\newcommand{\opb}{o_{{P}|\hat\bbeta_0}(1)}
\newcommand{\Op}{O_{P}(1)}
\newcommand{\Opf}{O_{{P}|\Fn}(1)}
\newcommand{\Fn}{\mathcal{D}_N}
\newcommand{\sumn}{\sum_{i=1}^{n}}
\newcommand{\sumN}{\sum_{i=1}^{N}}
\newcommand{\tp}{^{\rm T}}
\newcommand{\tr}{\mathrm{tr}}
\renewcommand{\Pr}{\mathbb{P}}
\newcommand{\Exp}{\mathbb{E}}
\newcommand{\Nor}{\mathbb{N}}
\newcommand{\Var}{\mathbb{V}}
\newcommand{\mmse}{{\mathrm{Aopt}}}
\newcommand{\mvc}{{\mathrm{Lopt}}}
\newcommand{\os}{{\mathrm{OS}}}
\newtheorem{assumption}{\bf Assumption}
\newcommand{\mle}{{\textnormal{\tiny MLE}}}%
\newcommand{\wmle}{{\textnormal{\tiny wMLE}}}%
\begin{document}

\title{More Efficient Estimation for Logistic Regression with Optimal Subsamples}

\author{\name HaiYing Wang \email haiying.wang@uconn.edu \\
       \addr Department of Statistics\\
       University of Connecticut\\
       Storrs, CT 06269, USA}

\editor{Tong Zhang}

\maketitle

\begin{abstract}
  In this paper, we propose improved estimation method for logistic regression based on subsamples taken according the optimal subsampling probabilities developed in \cite{WangZhuMa2017}. Both asymptotic results and numerical results show that the new estimator has a higher estimation efficiency. We also develop a new algorithm based on Poisson subsampling, which does not require to approximate the optimal subsampling probabilities all at once. This is computationally advantageous when available random-access memory is not enough to hold the full data. Interestingly, asymptotic distributions also show that Poisson subsampling produces a more efficient estimator if the sampling ratio, the ratio of the subsample size to the full data sample size, does not converge to zero. We also obtain the unconditional asymptotic distribution for the estimator based on Poisson subsampling. Pilot estimators are required to calculate subsampling probabilities and to correct biases in un-weighted estimators; interestingly, even if pilot estimators are inconsistent, the proposed method still produce consistent and asymptotically normal estimators. 

\end{abstract}

\begin{keywords}
  Asymptotic Distribution, Logistic Regression, Massive Data, Optimal Subsampling, Poisson Sampling.
\end{keywords}

\section{Introduction}
Extraordinary amounts of data that are collected offer unparalleled opportunities for advancing complicated scientific problems. %
However, the incredible sizes of big data bring new challenges for data analysis. 
A major challenge of big data analysis lies with the thirst for computing resources. Faced with this, subsampling has been widely used to reduce the computational burden, in which intended calculations are carried out on a subsample that is drawn from the full data, see \cite{drineas2006fast1, drineas2006fast2,
  drineas2006fast3, mahoney2009cur,Drineas:11, mahoney2011randomized,
  halko2011finding, clarkson2013low, kleiner2014scalable,
  mcwilliams2014fast, yang2015randomized}, among others. %

A key to success of a subsampling method is to specify nonuniform sampling probabilities so that more informative data points are sampled with higher probabilities. For this purpose, normalized statistical leverage scores or its variants are often used as subsampling probabilities in the context of linear regression, and this approach is termed {\it algorithmic leveraging} \citep{PingMa2014-JMLR}. It has demonstrated remarkable performance in better using of a fixed amount of computing power \citep{AMT10,MSM14_SISC}. Statistical leverage scores only contain information in the covariates and do not take into account the information contained in the observed responses. 
\cite{WangZhuMa2017} derived optimal subsampling probabilities that minimize the asymptotic mean squared error (MSE) of the subsampling-based estimator in the context of logistic regression. The optimal subsampling probabilities directly depend on both the covariates and the responses to take more informative subsamples. \cite{WangZhuMa2017} used a inverse probability weighted estimator based on the optimal subsample, where more informative data points are assigned smaller weights in the objective function. Thus, we can improve the estimation efficiency based on the optimal subsample by using a better weighting scheme. %

In this paper, we propose more efficient estimators based on subsamples taken randomly according to the optimal subsampling probabilities. We will derive asymptotic distributions to show that asymptotic variance-covariance matrices of the new estimators are smaller, in Loewner ordering, than that of the weighted estimator in \cite{WangZhuMa2017}. We also consider to use Poisson subsampling. %
Asymptotic distributions show that Poisson subsampling is more efficient in parameter estimation when the subsample size is proportional to the full data sample size. It is also computationally beneficial to use Poisson subsampling because there is no need to calculate and use subsampling probabilities for all data points simultaneously.

Before presenting the framework of the paper, we give a brief review of the emerging field of subsampling-based methods.
For linear regression, \cite{drineas2006sampling}
developed a subsampling method and focused on finding influential data
units for the least squares (LS) estimates. 
\cite{Drineas:11} developed an algorithm by processing the data with
randomized Hadamard transform and then using uniform subsampling to
approximate LS estimates. \cite{Drineas:12} developed an algorithm to
approximate statistical leverage scores that are used for algorithmic leveraging. \cite{yang2015explicit} showed that using normalized
square roots of statistical leverage scores as subsampling
probabilities yields better approximation than using original statistical leverage scores, if they are very nonuniform. The
aforementioned studies focused on developing algorithms for fast
approximation of LS estimates. \cite{PingMa2014-JMLR} considered the statistical properties of algorithmic leveraging. They derived biases and variances of
leverage-based subsampling estimators in linear regression and
proposed a shrinkage algorithmic leveraging method to improve the
performance. \cite{raskutti2014statistical} considered both the
algorithmic and statistical aspects of solving large-scale LS problems using
random sketching. \cite{WangYangStufken2018} and \cite{wang2019divide} developed an information-based optimal subdata selection method to select subsample deterministically for ordinary LS in linear regression. 
The aforesaid results were obtained
exclusively within the context of linear
models. \cite{fithian2014local} proposed a computationally efficient
local case-control subsampling method for logistic regression with
large imbalanced data. \cite{han2016local} developed a local uncertainty sampling approach for multi-class logistic regression. 
Recently, \cite{WangZhuMa2017} developed an Optimal Subsampling Method under the A-optimality Criterion (OSMAC) for logistic regression; \cite{Yao2019} and \cite{ai2018optimal} extended this method to include multi-class logistic regression and generalized linear regression models, respectively. Although they derived optimal subsampling probabilities, they did not investigate whether a better weighting scheme can further improve the estimation efficiency. 

This paper focuses on logistic regression models, which are widely used for statistical inference in many disciplines, such as business, computer science, education, and genetics, among others \citep{hosmer2013applied}. Based on optimal subsamples taken according to OSMAC developed in \cite{WangZhuMa2017}, more efficient methods, in terms of both parameter estimation and numerical computation, will be proposed. %
The reminder of the paper is organized as follows. Model setups and notations are introduced in Section~\ref{sec:notations}. The OSMAC will also be briefly reviewed in this section. Section~\ref{sec:more-effic-infer} presents the more efficient estimator and its asymptotic properties. Section~\ref{sec:poisson-sampling} considers Poisson subsampling. Section~\ref{sec:pilot-estim-pract} discusses issues related to practical implementation and summaries the methods from Sections~\ref{sec:more-effic-infer} and \ref{sec:poisson-sampling} into two practical algorithms. Section~\ref{sec:uncond-distr} gives unconditional asymptotic distributions for the estimator from Poisson subsampling. Section~\ref{sec:misspecifications} discusses asymptotic distributions with pilot and model misspecifications. Section~\ref{sec:numerical-examples} evaluates the practical performance of the proposed methods using numerical experiments. Section~\ref{sec:summary} concludes, and the appendix contains proofs and technical details.

\section{Model setup and optimal subsampling}
\label{sec:notations}
Let $y$ $\in\{0,1\}$ be a binary response variable and $\x$ be a $d$ dimensional covariate. A logistic regression model describes the conditional probability of $y=1$ given $\x$, and it has the following form,
\begin{equation}\label{eq:1}
  \Pr(y=1|\x)=p(\x,\bbeta)
  =\frac{e^{\x\tp\bbeta}}{1+e^{\x\tp\bbeta}},
\end{equation}
where $\bbeta$ is a $d\times1$ vector of unknown regression coefficients belonging to a compact subset of $\mathbb{R}^d$.

With independent full data of size $N$ from Model~\eqref{eq:1}, say, $\Fn=\{(\x_1, y_1), ..., (\x_N, y_N)\}$, 
the unknown parameter $\bbeta$ is
often estimated by the maximum likelihood estimator (MLE), denoted as $\hbeta_{\mle}$. It is the maximizer of the log-likelihood function, namely,
\begin{equation*}
  \hbeta_{\mle}=\arg\max_{\bbeta}\ \ell_f(\bbeta)
  =\arg\max_{\bbeta}\sumN
  \big\{y_i\x_i\tp\bbeta-\log\big(1+e^{\bbeta\tp\x_i}\big)\big\}.
\end{equation*}
Since there is no general closed-form solution to the MLE, Newton's method or iteratively reweighted least squares method \citep{mccullagh1989generalized} is often adopted to find it numerically. This typically takes $O(\zeta Nd^2)$ time, where $\zeta$ is the number of iterations in the optimization procedure. %
For super-large data set, the computing time $O(\zeta Nd^2)$ may be too long to afford, and iterative computation is infeasible if the data volume is larger than the available random-access memory (RAM). To overcome this computational bottleneck for the application of logistic regression to massive data, \cite{WangZhuMa2017} developed the OSMAC under the subsampling framework. %

Let $\pi_1$, ..., $\pi_N$ be subsampling probabilities such that $\sumN\pi_i=1$. Using subsampling with replacement, draw a random subsample of size $n$ %
according to the probabilities $\{\pi_i\}_{i=1}^{N}$ from the full data. We use $^*$ to indicate quantities for a subsample, namely, denote the covariates,
responses, and subsampling probabilities in a subsample as
${\mathbf{x}}^{*}_i$, $y^{*}_i$, and $\pi_i^{*}$, respectively, for
$i=1, ..., n$. \cite{WangZhuMa2017} define the weighted subsample estimator $\hbeta_w^{\pi}$ to be %
\begin{equation*}
  \hbeta_w^{\pi}=\arg\max_{\bbeta}\ell_w^*(\bbeta)
  =\arg\max_{\bbeta}\sumn\frac{
    y_i^*\bbeta\tp\x_i^*-\log\big(1+e^{\bbeta\tp\x_i^*}\big)}{\pi_i^*}.
\end{equation*}

The key to success here is how to specify the values for $\pi_i$'s so that more informative data points are sampled with higher probabilities. \cite{WangZhuMa2017} derived optimal subsampling probabilities that minimize the asymptotic MSE of $\hbeta_w^{\pi}$. They first showed that $\hbeta_w^{\pi}$ is asymptotically normal. Specifically, for large $n$ and $N$, the conditional distribution of $\sqrt{n}(\hbeta_w^{\pi}-\hbeta_{\mle})$ given the full data $\Fn$ can be approximated by a normal distribution with mean $\0$ and variance-covariance matrix $\V_N=\M_N^{-1}\V_{Nc}\M_N^{-1}$, in which
\begin{equation*}
  \M_N=\oneN\sumN\phi_i(\hbeta_{\mle})\x_i\x_i\tp, \quad
  \V_{Nc}=\oneN\sumN\frac{|y_i-p(\x_i,\hbeta_{\mle})|^2\x_i\x_i\tp}{N\pi_i},
\end{equation*}
and $\phi_i(\bbeta)=p(\x_i,\bbeta)\{1-p(\x_i,\bbeta)\}$ with $p(\x_i,\bbeta)={e^{\x_i\tp\bbeta}}/{(1+e^{\x_i\tp\bbeta})}$. 
Based on this asymptotic distribution, they derive the following two optimal subsampling probabilities 
\begin{align}
  \pi_i^{\mmse}(\hbeta_{\mle})&=\label{eq:35}
  \frac{|y_i-p(\x_i,\hbeta_{\mle})|\|\M_N^{-1}\x_i\|}
 {\sum_{j=1}^N|y_j-p(\x_j,\hbeta_{\mle})|\|\M_N^{-1}\x_j\|},
  \quad i=1, ..., N;\\
  \pi_i^{\mvc}(\hbeta_{\mle})&=\label{eq:38}
  \frac{|y_i-p(\x_i,\hbeta_{\mle})|\|\x_i\|}
  {\sum_{j=1}^N|y_j-p(\x_j,\hbeta_{\mle})|\|\x_j\|},
  \quad i=1, ..., N.
\end{align}
Here, $\{\pi_i^{\mmse}(\hbeta_{\mle})\}_{i=1}^N$ minimize $\tr(\V_N)$, the trace of $\V_N$, and this is the A-optimality criterion in optimum experimental designs \citep{atkinson2007optimum};  $\{\pi_i^{\mvc}(\hbeta_{\mle})\}_{i=1}^N$ minimize $\tr(\V_{Nc})$, %
and this is a choice of the L-optimality criterion. %
These subsampling probabilities have a lot of nice properties and meaningful interpretations. More details can be found in Section 3 of \cite{WangZhuMa2017}. %

For ease of presentation, use the following general notation to denote subsampling probabilities
\begin{equation}\label{eq:2}
  \pi_i^{\os}(\bbeta)=
  \frac{|y_i-p(\x_i,\bbeta)|h(\x_i)}
  {\sum_{j=1}^N|y_j-p(\x_j,\bbeta)|h(\x_j)},
  \quad i=1, ..., N,
\end{equation}
where $h(\x)$ is a univariate function of $\x$. {We provide some intuitions on choosing $h(\x)$. Let $L$ be a matrix with $d$ columns. Choosing $h(\x)=\|L\M_N^{-1}\x\|$ minimizes the trace of $L\V_NL\tp$, which is the conditional asymptotic variance-covariance matrix of $L\hbeta_w^{\pi}$ (scaled by $n$) given the full data $\Fn$. Two special choices of $h(\x)$ correspond to $L=\I$ (the identity matrix) and $L=\M_N$. If $L=\I$, then $h(\x)=\|\M_N^{-1}\x\|$ and $\pi_i^{\os}(\bbeta)$ becomes  $\pi_i^{\mmse}(\bbeta)$; if $L=\M_N$, then $h(\x)=\|\x\|$ and $\pi_i^{\os}(\bbeta)$ becomes $\pi_i^{\mvc}(\bbeta)$. If one is interested in a specific component of $\bbeta$, say $\beta_j$, then $L$ can be chosen as a row vector with the $j$-th element being one and all other elements being zero. With this choice, $h(\x)=\|\M_{N,\ \centerdot j}^{-1}\x\|$ where $\M_{N,\ \centerdot j}^{-1}$ means the $j$-th row of $\M_{N}^{-1}$, and the asymptotic variance of $\hbeta_{w,j}^{\pi}$ is minimized. 
  If $h(\x)=1$, then $\pi_i^{\os}(\bbeta)$'s are proportional to the local case-control subsampling probabilities \citep{fithian2014local}.}

Note that $\{\pi_i^{\os}(\bbeta)\}_{i=1}^N$ depend on the unknown $\bbeta$, so a pilot estimate of $\bbeta$ is required to approximate them. 
Let $\hbeta_0$ be a pilot estimator from a pilot subsample taken from the full data, for which we will provide more details in Section~\ref{sec:pilot-estim-pract}. The original weighted OSMAC estimator is
\begin{equation}\label{eq:3}
  \hbeta_w=\arg\max_{\bbeta}\sumn
  \frac{y_i^*\bbeta\tp\x_i^*-\log\big(1+e^{\bbeta\tp\x_i^*}\big)}
  {\pi_i^{\os}(\hbeta_0)^*}.
\end{equation}

In \cite{WangZhuMa2017}, $\hbeta_w$ has exceptional performance because $\{\pi_i^{\os}(\hbeta_0)\}_{i=1}^N$ are able to include more informative data points in the subsample. {However, we can improve the weighting scheme adopted in \eqref{eq:3}.
  Intuitively, a larger $\pi_i^{\os}(\hbeta_0)$ means that the data point $(\x_i, y_i)$ contains more information about $\bbeta$, but it has a smaller weight in the objective function in \eqref{eq:3}. This reduces contributions of more informative data points to the objective function for parameter estimation.}

The weighted estimator in \eqref{eq:3} is used because $\{\pi_i^{\os}(\hbeta_0)\}_{i=1}^N$ depend on the responses $y_i$'s and an un-weighted estimator is biased. If the bias can be corrected, then the resultant estimator can be more efficient in parameter estimation, 
{because an un-weighted estimator often has a smaller variance-covariance matrix compared with an inverse probability  weighted estimator. Intuitively, if some data points with very small values of $\pi_i^{\os}(\hbeta_0)$ are selected in the subsample, then the target function in \eqref{eq:3} would be dominated by these data points. As a result, the variance-covariance matrix of the weighted estimator would be inflated by small values of $\pi_i^{\os}(\hbeta_0)$. Note that $\pi_i$'s appear in the denominator of $\V_{Nc}$ in the asymptotic variance-covariance matrix of the weighted estimator. A major goal of this paper is to develop un-weighted estimation procedures. Interestingly, for the subsampling probabilities in \eqref{eq:2}, the bright idea proposed in \cite{fithian2014local} can be used to correct the bias of the un-weighted estimator.}

\section{More efficient estimator %
}
\label{sec:more-effic-infer}
Let $\{(\x_1^*,y_1^*), ..., (\x_n^*,y_n^*)\}$ be a random subsample of size $n$ taken from the full data using sampling with replacement
according to the probabilities $\{\pi_i^{\os}(\hbeta_0)\}_{i=1}^{N}$ defined in \eqref{eq:2}. {Using this subsample, we present a more efficient estimation procedure based on un-weighted estimator with bias correction. 
Remember that a pilot estimate is required, and we use $\hbeta_0$ to denote it. Here, we focus the discussion on the new estimation procedure and assume that $\hbeta_0$ is obtained based on a pilot subsample of size $n_0$ and it is consistent. More details about this pilot estimator will be provided in Section~\ref{sec:pilot-estim-pract}, and the scenario that $\hbeta_0$ is inconsistent will be investigated in Section~\ref{sec:pilot-estim-missp}.} The following procedure describes how to obtain the un-weighted estimator with bias correction, denoted as $\hbeta_{uw}$.

Calculate the naive un-weighted estimator
\begin{equation}\label{eq:4}
  \tbeta_{uw}=\arg\max_{\bbeta}\ell_{uw}^*(\bbeta)
  =\arg\max_{\bbeta}\sumn
  \big\{\bbeta\tp\x_i^*y_i^*-\log\big(1+e^{\bbeta\tp\x_i^*}\big)\big\},
  \end{equation}
and then let
\begin{equation}\label{eq:5}
\hbeta_{uw}=\tbeta_{uw}+\hbeta_0.
\end{equation}

The naive un-weighted estimator $\tbeta_{uw}$ in \eqref{eq:4} is biased, and the bias is corrected in \eqref{eq:5} using $\hbeta_0$. We will show in the following that $\hbeta_{uw}$ is asymptotically unbiased. This, together with the fact that $\hbeta_0$ is consistent, shows the interesting fact that $\tbeta_{uw}$ converges to $\0$ in probability as $n_0$, $n$, and $N$ go to infinity. 

To investigate the asymptotic properties, we use $\bbeta_t$ to denote the true value of $\bbeta$, and summarize some regularity conditions in the following.

\begin{assumption}\label{as:1}
  The matrix $\Exp\{\phi(\bbeta_t)h(\x)\x\x\tp\}$ is finite and
  positive-definite.
\end{assumption}
\begin{assumption}\label{as:2}
  The covariate $\x$ and function $h(\cdot)$ satisfy that
  $\Exp\{\|\x\|^2h^2(\x)\}<\infty$,
  and $\Exp\{\|\x\|^2h(\x)\}<\infty$.
\end{assumption}
\begin{assumption}\label{as:3}
  As $n\rightarrow\infty$, $n\Exp\{h(\x)I(\|\x\|^2>n)\}\rightarrow0$, 
  where $I()$ is the indicator function. 
\end{assumption}
Assumption~\ref{as:1} is required to establish the asymptotic normality. This is a commonly used assumption, e.g., in \cite{fithian2014local,WangZhuMa2017}, among others. 
Assumptions~\ref{as:2} and \ref{as:3} impose moment conditions on the covariate distribution and the function $h(\x)$. When $h(\x)=1$, if $\Exp\|\x\|^2<\infty$, then both the two conditions in Assumption~\ref{as:2} and the condition in Assumption~\ref{as:3} hold. Thus, the assumptions  required in this paper are not stronger than those required by \cite{fithian2014local}. 
When $h(\x)=\|\x\|$, by H\"{o}lder's inequality,
\begin{align*}
  n\Exp\{h(\x)I(\|\x\|^2>n)\}
  &\le n(\Exp\|\x\|^3)^{1/3}\{\Exp I(\|\x\|^2>n)\}^{2/3}\\
  &=(\Exp\|\x\|^3)^{1/3}\{n^{3/2}\Pr(\|\x\|^3>n^{3/2})\}^{2/3}.
\end{align*}
{Note that $n^{3/2}I(\|\x\|^3>n^{3/2})\le \|\x\|^3$ and $I(\|\x\|^3>n^{3/2})\rightarrow0$ in probability. Thus, if $\Exp(\|\x\|^3)<\infty$, then
  $n^{3/2}\Pr(\|\x\|^3>n^{3/2})
  =\Exp\{n^{3/2}I(\|\x\|^3>n^{3/2})\}\rightarrow0$ \citep[see Theorem 1.3.6 of][]{Serfling1980}.} 
Therefore, if $\Exp(\|\x\|^3)<\infty$, Assumption~\ref{as:3} holds. This shows that $\Exp\|\x\|^4<\infty$ implies all the three conditions required in Assumptions~\ref{as:2} and \ref{as:3}. Note that \cite{WangZhuMa2017} requires that $\Exp(e^{\v\tp\x})<\infty$ for any $\v\in\mathbb{R}^d$ in order to establish the asymptotic properties when a pilot estimate is used to approximate optimal subsampling probabilities. Thus, the required conditions in this paper are weaker than those required in \cite{WangZhuMa2017}. Assumptions~\ref{as:1} and \ref{as:2} are required in all the theorems in this paper while Assumption~\ref{as:3} is only required in Theorems~\ref{thm:1}, \ref{thm:R3}, and \ref{thm:R1}. 

\begin{theorem}\label{thm:1}
  Under Assumptions \ref{as:1}-\ref{as:3}, conditional on
  $\Fn$, if $\hbeta_0$ is consistent, then as $n_0$, $n$, and $N$ go to infinity,
  \begin{equation}\label{eq:21}
    \sqrt{n}(\hbeta_{uw}-\hbeta_{\wmle})
    \longrightarrow \Nor\big(\0,\ \bSigma_{\bbeta_t}\big),
  \end{equation}
  in distribution; furthermore, if $n/N\rightarrow0$, then
  \begin{equation}\label{eq:13}
    \sqrt{n}(\hbeta_{uw}-\bbeta_t)
    \longrightarrow \Nor\big(\0,\ \bSigma_{\bbeta_t}\big)
  \end{equation}
  in distribution, where
  \begin{equation*}
    \bSigma_{\bbeta}
    =\bigg[\frac{\Exp\{\phi(\bbeta)h(\x)\x\x\tp\}}
    {4\Phi(\bbeta)}\bigg]^{-1}, 
    \quad
    \Phi(\bbeta)=\Exp\{\phi(\bbeta)h(\x)\},
    \quad
    \phi(\bbeta)=p(\x,\bbeta)\{1-p(\x,\bbeta)\},
  \end{equation*}
  and $\hbeta_{\wmle}$ is a weighted MLE based on the full data defined as
  \begin{equation}\label{eq:65}
    \hbeta_{\wmle}=%
    \arg\max_{\bbeta}\sumN|y_i-p(\x_i,\hbeta_0)|h(\x_i)
    \big[y_i\x_i\tp(\bbeta-\hbeta_0)
  -\log\{1+e^{\x_i\tp(\bbeta-\hbeta_0)}\}\big].
\end{equation}
Here $\hbeta_{\wmle}$ satisfies that
\begin{equation}\label{eq:6}
  \sqrt{N}(\hbeta_{\wmle}-\bbeta_t)
  \longrightarrow \Nor\big(\0,\ \bSigma_{\wmle}\big),
  \end{equation}
in distribution if $\hbeta_0$ is obtained from a uniform pilot subsample of size $n_0$ such that $n_0/\sqrt{N}=o(1)$ or if $\hbeta_0$ is independent of $\Fn$, where
\begin{align*}
  \bSigma_{\wmle}
  &=[\Exp\{\phi(\bbeta_t)h(\x)\x\x\tp\}]^{-1}
    \Exp\{\phi(\bbeta_t)h^2(\x)\x\x\}
    [\Exp\{\phi(\bbeta_t)h(\x)\x\x\tp\}]^{-1}.
\end{align*}
\end{theorem}

\begin{remark}\label{remark1}
  Theorem~\ref{thm:1} shows that the un-weighted estimator $\hbeta_{uw}$ is $\sqrt{n}$-consistent to $\hbeta_{\wmle}$, a weighted MLE based on the full data in conditional probability, while Theorem 5 of \cite{WangZhuMa2017} shows that the weighted estimator $\hbeta_w$ is $\sqrt{n}$-consistent to $\hbeta_{\mle}$, the un-weighted MLE based on the full data in conditional probability. Specifically, \eqref{eq:21} implies that %
  given $\Fn$ %
  in probability,
  \begin{equation}\label{eq:19}
    \hbeta_{uw}-\hbeta_{\wmle}=O_{{P}|\Fn}(n^{-1/2}).
  \end{equation}
  The $O_{{P}|\Fn}(n^{-1/2})$ expression in \eqref{eq:19} means that for any $\epsilon>0$, there exist a $\delta_\epsilon$ such that as $n, N\rightarrow\infty$,
\begin{align*}
  &\Pr\Big\{\sup_n\Pr(\|\hbeta_{uw}-\hbeta_{\wmle}\|
    >n^{-1/2}\delta_\epsilon|\Fn)
    \le\epsilon\Big\}\rightarrow1.
\end{align*}
Note that if a sequence is bounded in conditional probability, then it is bounded in unconditional probability, i.e., if $a_n=\Opf$, then $a_n=\Op$ \citep{xiong2008some, cheng2010bootstrap}. Therefore, \eqref{eq:19} implies that $\hbeta_{uw}-\hbeta_{\wmle}=O_P(n^{-1/2})$. Similarly, \eqref{eq:6} implies that $\hbeta_{\wmle}-\bbeta_t=O_P(N^{-1/2})$. Thus, $\hbeta_{uw}-\bbeta_t=O_P(n^{-1/2}+N^{-1/2})=O_P(n^{-1/2})$, showing the $\sqrt{n}$-consistency of $\hbeta_{uw}$ to the true parameter under the unconditional distribution. %
\end{remark}

\begin{remark}
For $\hbeta_{\wmle}$, if $\hbeta_0$ is fixed, say $\hbeta_0=\bbeta_0$, then the population log-likelihood for the objective function in \eqref{eq:65} is 
\begin{align*}
  \Exp\Big(a(\x,\bbeta,\bbeta_0)
  \Big[p(\x,\bbeta-\bbeta_0)\x\tp(\bbeta-\bbeta_0)
  -\log\{1+e^{\x\tp(\bbeta-\bbeta_0)}\}\Big]\Big),
\end{align*}
where 
$a(\x,\bbeta,\bbeta_0)=\big[p(\x,\bbeta)
  \{1-p(\x,\bbeta_0)\} + \{1-p(\x,\bbeta)\}
  p(\x,\bbeta_0)\big]h(\x)$. If $h(\x)=1$, then this population log-likelihood is identical to that for the local case-control subsampling estimator. For general $h(\x)$, since it does not rely on the response variable, we expect that $\hbeta_{\wmle}$ inherits the main properties of the the local case-control subsampling estimator, including those under model misspecification. Indeed this is the case, and more details for the scenarios of misspecifications will be presented in Section~\ref{sec:misspecifications}.
\end{remark}

Theorem~\ref{thm:1} shows that, asymptotically, the distribution of $\hbeta_{uw}$ given $\Fn$ %
is centered around $\hbeta_{\wmle}$ with variance-covariance matrix $n^{-1}\bSigma_{\bbeta_t}$, and the distribution of $\hbeta_{\wmle}$ is centered around $\bbeta_t$ with variance-covariance matrix $N^{-1}\bSigma_{\wmle}$. Thus, both $n^{-1}\bSigma_{\bbeta_t}$ and $N^{-1}\bSigma_{\wmle}$ should be considered in accessing the quality of $\hbeta_{uw}$ for estimating the true parameter $\bbeta_t$. However, in subsampling setting, it is expected that $n\ll N$; otherwise, the computational benefit is minimum. Thus, $n^{-1}\bSigma_{\bbeta_t}$ is the dominating term in quantifying the variation of $\hbeta_{uw}$. If $n/N\rightarrow0$, then the variation of $\hbeta_{\wmle}$ can be ignored as stated in \eqref{eq:13}. 

Now we compare the estimation efficiency of $\hbeta_{uw}$ with that of the weighted estimator $\hbeta_w$. With the optimal subsampling probabilities $\{\pi_i^{\os}(\hbeta_{\mle})\}_{i=1}^N$, the asymptotic variance-covariance matrix (scaled by $n$), $\V_N$, for the weighted estimator $\hbeta_w$ has a form of $\V_N^\os=\M_N^{-1}\V_{Nc}^\os\M_N^{-1}$, where
\begin{equation*}
 \V_{Nc}^\os=\bigg\{\oneN\sumN|y_i-p(\x_i,\hbeta_{\mle})|h(\x_i)\bigg\} 
 \bigg\{\oneN\sumN\frac{|y_i-p(\x_i,\hbeta_{\mle})|\x_i\x_i\tp}{h(\x_i)}\bigg\}.
\end{equation*}
Note that the full data MLE $\hbeta_{\mle}$ is consistent under Assumptions~\ref{as:1}-\ref{as:2}. If $\Exp\{\|\x\|^2/h(\x)\}<\infty$, then from Lemma~\ref{lem1} in the appendix and the law of large numbers, $\V_N^\os$ converges in probability to
$\V^\os=\M^{-1}\V_{c}^\os\M^{-1}$, where
\begin{equation*}
  \M=\Exp\{\phi(\bbeta_t)\x\x\tp\}
  \quad\text{and}\quad
  \V_{c}^\os=4\Phi(\bbeta_t)
  \Exp\bigg\{\frac{\phi(\bbeta_t)\x\x\tp}{h(\x)}\bigg\}. 
\end{equation*}

Note that the asymptotic distribution of $\hbeta_w$ given $\Fn$ is centered around $\hbeta_{\mle}$. It can be shown that under Assumptions~\ref{as:1}-\ref{as:2}, 
\begin{equation*}
  \sqrt{N}(\hbeta_{\mle}-\bbeta_t)
  \longrightarrow \Nor\big(\0, \M^{-1}\big),
\end{equation*}
in distribution. 
Thus, both $n^{-1}\V^\os$ and $N^{-1}\M^{-1}$ should be considered in accessing the quality of $\hbeta_w$ for estimating the true parameter $\bbeta_t$. However, similar to the case for $\hbeta_{uw}$, $N^{-1}\M^{-1}$ is small compared with $n^{-1}\V^\os$ if $n\ll N$, and it is negligible if $n/N\rightarrow0$. Therefore, the relative performance between $\hbeta_{uw}$ and $\hbeta_w$ are mainly determined by the relative magnitude between $\V^\os$ and $\bSigma_{\bbeta_t}$. We have the following result comparing $\V^\os$ and $\bSigma_{\bbeta_t}$. 

\begin{proposition}\label{prop1} 
  If $\M$, $\V_{c}^\os$, and $\bSigma_{\bbeta_t}$ are finite and positive definite matrices, then
  \begin{equation}\label{eq:12}
    \bSigma_{\bbeta_t}\le \V^\os.
  \end{equation}
  Here, the inequality is in the Loewner ordering, i.e., for positive semi-definite matrices $\A$ and $\B$, $\A\ge\B$ if and only if $\A-\B$ is positive semi-definite. If $h(\x)=1$, then the equality in \eqref{eq:12} holds.
  Furthermore, note that the asymptotic variance-covariance matrix (scaled by $n$) for uniform subsampling estimator is $\M^{-1}$. If $\bbeta_t\neq\0$ and $h(\x)=\|L\M^{-1}\x\|$ for some matrix $L$, then
\begin{equation}\label{eq:60}
  \tr(L\bSigma_{\bbeta_t}L\tp) \le \tr(L\V^\os L\tp)
  \le \Exp\{\phi(\bbeta_t)\}\tr(L\M^{-1}L\tp)
  <\tr(L\M^{-1}L\tp).
\end{equation}
\end{proposition}
\begin{remark}
This proposition shows that 
$\hbeta_{uw}$ is typically more efficient than $\hbeta_w$ in estimating $\bbeta_t$. The numerical results in Section~\ref{sec:numerical-examples} also confirm this. {Assume that $n/N\rightarrow\rho$. For the un-weighted estimator, the variation of $\sqrt{N}(\hbeta_{uw}-\hbeta_{\wmle})$ is measured by $\rho^{-1}\bSigma_{\bbeta_t}$ and the variation of $\sqrt{N}(\hbeta_{\wmle}-\bbeta_t)$ is measured by $\bSigma_{\wmle}$, while for the weighted estimator the variation of $\sqrt{N}(\hbeta_w-\hbeta_{\mle})$ is measured by $\rho^{-1}\V^\os$ and the variation of $\sqrt{N}(\hbeta_{\mle}-\bbeta_t)$ is measured by $\M^{-1}$. 
  Note that $\bSigma_{\bbeta_t}$, $\bSigma_{\wmle}$, $\V^\os$, and $\M^{-1}$ are all fixed constant matrices that do not depend on $\rho$,  $\bSigma_{\bbeta_t}\le \V^\os$, and $\bSigma_{\wmle}=\bSigma_{\mle}$ if $\bSigma_{\bbeta_t}=\V^\os$. Thus, if $\rho$ is small enough, 
$\hbeta_{uw}$ is more efficient than $\hbeta_w$ in estimating $\bbeta_t$, and we do not need to require that $n/N\rightarrow0$.
}
\end{remark}
Since the equality in \eqref{eq:12} holds if $h(\x)=1$, this indicates that for subsample obtained from local case-control subsampling with replacement, the weighted and un-weighted estimators have the same conditional asymptotic distribution.

\section{Poisson subsampling}
\label{sec:poisson-sampling}
For the more efficient estimator $\hbeta_{uw}$ in Section~\ref{sec:more-effic-infer} as well as the weighted estimator $\hbeta_w$, the subsampling procedure used is sampling with replacement, which is faster to compute than sampling without replacement for a fixed sample size. In addition, the resultant subsample are independent and identically distributed (i.i.d.) conditional on the full data. However, to implement sampling with replacement, subsampling probabilities $\{\pi_i^{\os}(\hbeta_0)\}_{i=1}^{N}$ need to be used all at once, and a large amount of random numbers need to be generated all at once. This may reduce the computational efficiency, and it may require a large RAM to implement the method. Furthermore, since a data point may be included for multiple times in the subsample, the resultant estimator may not be the most efficient.

To enhance the computation and estimation efficiency of the subsample estimator, we consider Poisson subsampling, which is also fast to compute and the resultant subsample can be independent without conditioning on the full data. Note that for subsampling with replacement, a resultant subsample is generally not independent, although it is i.i.d conditional on the full data. %
As another advantage with Poisson subsampling, there is no need to calculate subsampling probabilities all at once, nor to generate a large amount of random numbers all at once. Furthermore, a data point cannot be included in the subsample for more than one time. 
A limitation of Poisson subsampling is that the subsample size is always random. %
Due to this, {we use $n^*$ to denote the actual subsample size, and abuse the notation in this section to use $n$ to denote the expected subsample size, i.e., $\Exp(n^*)=n$.}

Note that $\{\pi_i^{\os}(\bbeta)\}_{i=1}^N$ depend on the full data through the term in the denominator, $\sumN|y_i-p(\x_i,\bbeta)|h(\x_i)$. Write  
$\Psi_N(\bbeta)=N^{-1}\sumN|y_i-p(\x_i,\bbeta)|h(\x_i)$, and denote its limit as $\Psi(\bbeta)=\Exp\{|y-p(\x,\bbeta)|h(\x)\}$. Note that $\Psi(\bbeta_t)=2\Phi(\bbeta_t)$. The pilot subsample can be used to obtain an estimator of $\Psi(\bbeta_t)$ to approximate $\Psi_N(\bbeta)$.  %
Let $\hat\Psi_0$ be a pilot estimator of $\Psi(\bbeta_t)$. Here, we focus on the Poisson subsampling procedure and assume that such $\hat\Psi_0$ is available and consistent. %
We will provide more details on $\hat\Psi_0$ in Section~\ref{sec:pilot-estim-pract} and Section~\ref{sec:misspecifications}. 

With $\hbeta_0$ and $\hat\Psi_0$ available, the Poisson subsampling procedure is described as the following. 
For $i=1, ..., N$, calculate $\pi_i^p=|y_i-p(\x_i,\hbeta_0)|h(\x_i)/(N\hat\Psi_0)$, generate $u_i\sim U(0,1)$, and include $(\x_i,y_i,\pi_i^p)$ in the subsample if $u_i\le n\pi_i^p$. For the obtained subsample, say $\{(\x_1^*,y_1^*,\pi_1^{p*}), ..., (\x_{n^*}^*,y_{n^*}^*,\pi_{n^*}^{p*})\}$, calculate
  \begin{equation}\label{eq:14}
    \tilde\bbeta_p=\arg\max_{\bbeta}\ell_{p}^*(\bbeta)
    =\arg\max_{\bbeta}\sum_{i=1}^{n^*}
    (n\pi_i^{p*}\vee1)
    \big\{\bbeta\tp\x_i^*y_i^*+\log(1+e^{\bbeta\tp\x_i^*})\big\},
  \end{equation}
  and let $\hbeta_p=\tbeta_p+\hbeta_0$. Note that here the actual subsample size $n^*$ is random. 

Poisson subsampling does not require to calculate $\pi_i^p$'s all at once; each $\pi_i^p$ is calculated for each individual data point when scanning through the full data. Thus, one pass through the data finishes the sampling. For the estimation step, if $\pi_i^p$ is large so that $n\pi_i^p>1$, then this more informative data point will be given a larger weight, $n\pi_i^p$, in the objective function in \eqref{eq:14}. The following theorem describes asymptotic properties of $\hbeta_p$. 

\begin{theorem}\label{thm:2}
Under Assumptions \ref{as:1}-\ref{as:2} and assume that $\hbeta_0$ is consistent, conditional on
$\Fn$, as $n_0$, $n$, and $N$ go to infinity, if $n/N\rightarrow0$, then
  \begin{equation*}
    \sqrt{n}(\hbeta_p-\bbeta_t)
    \longrightarrow \Nor(0,\ \bSigma_{\bbeta_t}),
  \end{equation*}
  in distribution; if $n/N\rightarrow\rho\in(0,1)$, then
  \begin{equation}\label{eq:64}
    \sqrt{n}(\hbeta_p-\hbeta_{\wmle})
    \longrightarrow \Nor(0,\ \bSigma_{\bbeta_t}\bLambda_{\rho}\bSigma_{\bbeta_t}),
  \end{equation}
  in distribution, where
  \begin{align*}
    \bLambda_{\rho}
    &=\frac{\Exp\big[|\psi(\bbeta_t)|h(\x)
    \{\Psi(\bbeta_t)-\rho|\psi(\bbeta_t)|h(\x)\}_+\x\x\tp\big]}
      {4\Psi^2(\bbeta_t)}
  \end{align*}
  with {$\psi(\bbeta)=y-p(\x,\bbeta)$ and $\Psi(\bbeta)=\Exp\{|y-p(\x,\bbeta)|h(\x)\}$}, 
  and $()_+$ means the positive part of the quantity, i.e., $a_+=aI(a>0)$. %
\end{theorem}
\begin{remark}
  Similar to the case of Theorem~\ref{thm:1}, \eqref{eq:64} implies that given $\Fn$ in probability, $\hbeta_p-\hbeta_{\wmle}=O_{{P}|\Fn}(n^{-1/2})$, which implies that $\hbeta_p-\hbeta_{\wmle}=O_P(n^{-1/2})$ unconditionally because if a sequence is stochastically bounded in conditional probability, then it is also stochastically bounded in unconditional probability \citep{xiong2008some, cheng2010bootstrap}. Since $\hbeta_{\wmle}-\bbeta_t=O_P(N^{-1/2})$, we have $\hbeta_p-\bbeta_t=O_P(n^{-1/2}+N^{-1/2})=O_P(n^{-1/2})$, showing that $\hbeta_p$ is $\sqrt{n}$-consistent to $\bbeta_t$ unconditionally on the full data.
\end{remark}

Theorem~\ref{thm:2} shows that with Poisson subsampling, the asymptotic variance-covariance matrices may differ for different sampling ratios $n/N$. In addition, comparing Theorems~\ref{thm:1} and~\ref{thm:2}, we know that $\hbeta_{uw}$ and $\hbeta_p$ have the same asymptotic distribution if $n/N\rightarrow0$. This is intuitive because if the sampling ratio $n/N$ is small, sampling with replacement has close performance to sampling without replacement. However, if the sampling ratio $n/N$ does not converge to zero, then $\hbeta_{uw}$ and $\hbeta_p$ have the same asymptotic mean but different asymptotic variance-covariance matrices. The following result compares the two asymptotic variance-covariance matrices. 

\begin{proposition}\label{prop2}
  If $\rho>0$ and $\bSigma_{\bbeta_t}$ is a finite and positive definite matrix, then
  \begin{align*}
    \bSigma_{\bbeta_t}\bLambda_{\rho}\bSigma_{\bbeta_t}<\bSigma_{\bbeta_t},
  \end{align*}
  under the Loewner ordering.
\end{proposition}
This proposition shows that Poisson subsampling is more efficient than sampling with replacement.  %

\section{Pilot estimate and practical implementation}
\label{sec:pilot-estim-pract}

Since $\{\pi_i^{\os}(\bbeta)\}_{i=1}^N$ depend on the unknown $\bbeta$, a pilot estimate of $\bbeta$ is required to approximate them. The pilot estimate can be obtained by taking a pilot subsample using uniform subsampling or case-control subsampling. For uniform subsampling, all subsampling probabilities are equal, while for case-control subsampling, the subsampling probability for the cases ($y_i=1$) is different from that for the controls ($y_i=0$). Let the subsampling probabilities used to take the pilot subsample be
\begin{equation}\label{eq:34}
  \pi_{0i}=\frac{c_0(1-y_i)+c_1y_i}{N},
\end{equation}
where $c_0$ and $c_1$ are two constants that can be used to balance the numbers of 0's and 1's in the responses for the pilot subsample. If $c_0=c_1=1$, then $\pi_{0i}=N^{-1}$ corresponds to the uniform subsampling. This choice is recommended due to its simplicity if the proportion of 1's is close to 0.5 \citep{WangZhuMa2017}. If $c_0\neq c_1$, then $\pi_{0i}$'s are the case-control subsampling probabilities. This choice is recommended for imbalanced full data. Often, some prior information about the marginal probability $\Pr(y=1)$ is available. If $p_{pr}$ is the prior marginal probability, we can choose $c_0=\{2(1-p_{pr})\}^{-1}$ and $c_1=(2p_{pr})^{-1}$. 
The pilot estimate $\hbeta_0$ can be obtained using the pilot subsample. For uniform subsampling, weighted and un-weighted estimators are the same. For case-control subsampling, we use un-weighted estimators with bias correction for both sampling with replacement and Poisson subsampling.

To obtain a final estimator, \cite{WangZhuMa2017} pooled the pilot subsample with the second stage subsample taken using approximated optimal subsampling probabilities. While this does not make a difference asymptotically since $n_0$ is typically a small term compared with $n$, i.e., $n_0=o(n)$, using the pilot subsample helps to improve the finite sample performance in practical applications. However, pooling the raw samples may not be the most computationally efficient way of utilizing the pilot subsample. Since $\hbeta_0$ is already calculated, we can use it directly to improve the second stage estimator using the aggregation procedure in the divide-and-conquer method \citep{LinXie2011,schifano2016online}. This avoids iterative calculations on the pilot subsample for the second time.

For subsampling with replacement, when the full data cannot be loaded into available RAM, special considerations have to be given in practical implementation. If the full data is larger than available RAM while subsampling probabilities $\{\pi_i^{\os}(\hbeta_0)\}_{i=1}^{N}$ can still be loaded in available RAM, one can calculate $\{\pi_i^{\os}(\hbeta_0)\}_{i=1}^{N}$ by reading the data from hard drive line-by-line or block-by-block, generate row indexes for a subsample, and then scan the data line-by-line or block-by-block to take the subsample. A detailed procedure is provided in Section~\ref{sec:subs-with-repl} of the appendix.

For Poisson subsampling, the pilot subsample can also be used to construct $\hat\Psi_0$ to approximate $\Psi_N(\bbeta)$. We use the following expression to obtain $\hat\Psi_0$.
\begin{align}\label{eq:18}
  \hat\Psi_0
  &=\oneN\sum_{i=1}^{n_0^*}
    \frac{|y_i^{*0}-p(\x_i^{*0},\hbeta_0)|h(\x_i^{*0})}{(n_0\pi_{0i}^*)\wedge1},
\end{align}
 where $(\x_i^{*0},y_i^{*0})$'s are observations in the pilot subsample. 
 If $h(\x)=\|L\M_N^{-1}\x\|$ for some $L$, then the pilot subsample is used to approximate $\M_N$ through
  \begin{equation*}
    \hat\M_0=\oneN\sum_{i=1}^{n_0^*}
  \frac{\phi_i^{*_0}(\hbeta_0)\x_i^{*_0}(\x_i^{*_0})\tp}
  {(n_0\pi_{0i}^*)\wedge1},
\end{equation*}
where $\phi_i^{*_0}(\bbeta)=p(\x_i^{*_0},\bbeta)\{1-p(\x_i^{*_0},\bbeta)\}$. 
It can be verified that $\hat\Psi_0$ and $\hat\M_0$ converge in probability to $\Psi(\bbeta_t)$ and $\M$, respectively.

Taking into account all aforementioned issues in this section, including how to obtain pilot estimates, how to combine them with the second stage estimates, as well as how to process data file line-by-line, we summarize practical implementation procedures in Algorithm~\ref{alg:2} for sampling with replacement and in Algorithm~\ref{alg:3} for Poisson subsampling.

\begin{algorithm}[htp]%
  \caption{More efficient estimation based on subsampling with replacement}
  \label{alg:2}
  {\bf Step 1: obtain the pilot $\hbeta_0$}
  \begin{enumerate}[(1), labelindent=\parindent]
  \item Take pilot subsample $(\x_i^{*_0},y_i^{*_0})$, $i=1, ..., n_0$ using sampling with replacement according to subsampling probabilities $\{\pi_{0i}\}_{i=1}^N$ in \eqref{eq:34}.
  \item Calculate\\[-0.4cm]
\begin{equation*}
  \tbeta_0 =\arg\max_{\bbeta}\ell_{uw}^{*_0}(\bbeta)
  =\arg\max_{\bbeta}\sum_{i=1}^{n_0}
  \big\{\bbeta\tp\x_i^{*_0}y_i^{*_0}
  -\log\big(1+e^{\bbeta\tp\x_i^{*_0}}\big)\big\},
\end{equation*}
and let $\hbeta_0=\tbeta_0+\mathbf{b}$, where
$\mathbf{b}=\{\log({c_0}/{c_1}), 0, ..., 0\}\tp$.
\end{enumerate}

{\bf Step 2: obtain the more efficient estimator $\hbeta_{uw}$}
\begin{enumerate}[(1), labelindent=\parindent]
\item Calculate $\{\pi_i^{\os}(\hbeta_0)\}_{i=1}^{N}$ defined in equation~\eqref{eq:2}; 
  take subsample $(\x_i^*,y_i^*)$, $i=1, ..., n$ according to sampling probabilities $\{\pi_i^{\os}(\hbeta_0)\}_{i=1}^{N}$ using sampling with replacement.

\item Calculate\\[-0.4cm]
\begin{equation*}
  \tbeta_{uw}=\arg\max_{\bbeta}\ell_{uw}^*(\bbeta)
    =\arg\max_{\bbeta}\sumn
    \big\{\bbeta\tp\x_i^*y_i^*-\log\big(1+e^{\bbeta\tp\x_i}\big)\big\},
  \end{equation*}
and let $\hbeta_{uw}=\tbeta_{uw}+\hbeta_0$. 
\end{enumerate}

{\bf Step 3: combine the two estimators $\hbeta_0$ and $\hbeta_{uw}$}
\begin{enumerate}[{ }]
\item Calculate%
\begin{align*}
  \cbeta_{uw}
  =\{\ddot\ell_{uw}^{*_0}(\tbeta_0)+\ddot\ell_{uw}^*(\tbeta_{uw})\}^{-1}
  \{\ddot\ell_{uw}^{*_0}(\tbeta_0)\hbeta_0
  +\ddot\ell_{uw}^*(\tbeta_{uw})\hbeta_{uw}\},
\end{align*}
where $\ddot\ell_{uw}^{*_0}(\tbeta_0)
=\sum_{i=1}^{n_0}\phi_i^{*_0}(\tbeta_0)\x_i^{*_0}(\x_i^{*_0})\tp$, $\ddot\ell_{uw}^*(\tbeta_{uw})=\sumn\phi_i^*(\tbeta_{uw})\x_i^*(\x_i^*)\tp$, and $\phi_i^*(\bbeta)=p(\x_i^*,\bbeta)\{1-p(\x_i^*,\bbeta)\}$. .

The variance-covariance matrix of $\cbeta_{uw}$ can be estimated by %
\begin{align}
  \hat\Var(\cbeta_{uw})
  &=\{\ddot\ell_{uw}^{*_0}(\tbeta_0)+\ddot\ell_{uw}^*(\tbeta_{uw})\}^{-1}
 \bigg[\sum_{i=1}^{n_0}\{\psi_i^{*_0}(\tbeta_0)\}^2\x_i^{*_0}(\x_i^{*_0})\tp\notag\\
 &\hspace{3cm} +\sumn\{\psi_i^*(\tbeta_{uw})\}^2\x_i^*(\x_i^*)\tp\bigg]
   \{\ddot\ell_{uw}^{*_0}(\tbeta_0)+\ddot\ell_{uw}^*(\tbeta_{uw})\}^{-1},
   \label{eq:36}
\end{align}
where $\psi_i^*(\bbeta)=y_i^*-p(\x_i^*,\bbeta)$.
\end{enumerate}%
\end{algorithm}

\begin{algorithm}[htp]%
\caption{More efficient estimation based on Poisson subsampling}
  \label{alg:3}
{\bf Step 1: obtain the pilots $\hbeta_0$ and $\hat\Psi_0$}
\begin{enumerate}[(1), labelindent=\parindent]
      \item For $i=1, ..., N$, calculate $\pi_{0i}=\frac{c_0(1-y_i)+c_1y_i}{N}$, generate $u_{0i}\sim U(0,1)$, and add $(\x_i,y_i, \pi_{i1})$ in the subsample if $u_{0i}\le n_0\pi_{0i}$.
      \item For the obtained subsample, say $(\x_i^{*_0},y_i^{*_0},\pi_{0i}^{*_0})$, $i=1, ..., n_0^*$, calculate
  \begin{equation*}
    \tbeta_0=\arg\max_{\bbeta}\ell_p^{*_0}(\bbeta)
    =\arg\max_{\bbeta}\sum_{i=1}^{n_0^*}(n\pi_{0i}^{*_0}\vee1)
    \big\{\bbeta\tp\x_i^{*_0}y_i^{*_0}
    +\log\big(1+e^{\bbeta\tp\x_i^{*_0}}\big)\big\},
  \end{equation*}
  let $\hbeta_0=\tbeta_0+\b$, and then calculate $\hat\Psi_0$ in equation~\eqref{eq:18}.
\end{enumerate}

{\bf Step 2: obtain the more efficient estimator $\hbeta_p$}
\begin{enumerate}[(1), labelindent=\parindent]
\item For $i=1, ..., N$, calculate $\pi_i^p=\frac{|y_i-p(\x_i,\hbeta_0)|h(\x_i)}{N\hat\Psi_0}$,  generate $u_i\sim U(0,1)$, and if $u_i\le n\pi_i^p$ add $(\x_i,y_i,\pi_i^p)$ in
    the subsample. 
\item For the obtained subsample, say $\{(\x_1^*,y_1^*,\pi_1^{p*}), ..., (\x_{n^*}^*,y_{n^*}^*,\pi_{n^*}^{p*})\}$, calculate
  \begin{equation*}
    \tilde\bbeta_p=\arg\max_{\bbeta}\ell_p^*(\bbeta)
    =\arg\max_{\bbeta}\sum_{i=1}^{n^*}(n\pi_i^{p*}\vee1)
    \big\{\bbeta\tp\x_i^*y_i^*+\log(1+e^{\bbeta\tp\x_i^*})\big\},
  \end{equation*}
  and let $\hbeta_p=\tbeta_p+\hbeta_0$.
\end{enumerate}

{\bf Step 3: combine the two estimators $\hbeta_0$ and $\hbeta_p$}
\begin{enumerate}[{ }]
\item Calculate
\begin{align*}
  \cbeta_p
  =\{\ddot\ell_p^{*_0}(\tbeta_0)+\ddot\ell_p^*(\tbeta_p)\}^{-1}
  \{\ddot\ell_p^{*_0}(\tbeta_0)\hbeta_0
  +\ddot\ell_p^*(\tbeta_p)\hbeta_p\},
\end{align*}
where $\ddot\ell_p^{*_0}(\tbeta_0)
=\sum_{i=1}^{n_0^*}\phi_i^{*_0}(\tbeta_0)\x_i^{*_0}(\x_i^{*_0})\tp$ and $\ddot\ell_p^*(\tbeta_p)=\sum_{i=1}^{n^*}\phi_i^*(\tbeta_p)\x_i^*(\x_i^*)\tp$.

The variance-covariance matrix of $\cbeta_p$ can be estimated by %
\begin{align}
  \hat\Var(\cbeta_p)
  &=\{\ddot\ell_p^{*_0}(\tbeta_0)+\ddot\ell_p^*(\tbeta_p)\}^{-1}
\bigg[\sum_{i=1}^{n_0^*}\{\psi_i^{*_0}(\tbeta_0)\}^2\x_i^{*_0}(\x_i^{*_0})\tp\notag\\
  &\hspace{4cm}+\sum_{i=1}^{n^*}\{\psi_i^*(\tbeta_p)\}^2\x_i^*(\x_i^*)\tp\bigg]
    \{\ddot\ell_p^{*_0}(\tbeta_0)+\ddot\ell_p^*(\tbeta_p)\}^{-1}.
    \label{eq:37}
\end{align}
\end{enumerate}%
\end{algorithm}

\begin{remark}
  In Algorithm~\ref{alg:2} and Algorithm~\ref{alg:3}, if $n_0=o(n)$, then the results for $\hbeta_{uw}$ in Theorem~\ref{thm:1} hold for $\cbeta_{uw}$ and the results for $\hbeta_p$ in Theorem~\ref{thm:2} hold for $\cbeta_p$ as well. This is because $\{\ddot\ell_{uw}^{*_0}(\tbeta_0)+\ddot\ell_{uw}^*(\tbeta_{uw})\}^{-1}
  \ddot\ell_{uw}^{*_0}(\tbeta_0)\sqrt{n}(\hbeta_0-\bbeta_t)
  =O_p(\sqrt{n_0}/\sqrt{n})=\op$ and $\{\ddot\ell_{uw}^{*_0}(\tbeta_0)+\ddot\ell_{uw}^*(\tbeta_{uw})\}^{-1}
  \ddot\ell_{uw}(\tbeta_{uw})\rightarrow1$ in probability. The reason for $\cbeta_p$ is similar. 
\end{remark}

\begin{remark}
  In Algorithm~\ref{alg:2} and Algorithm~\ref{alg:3}, 
  to combine the two stage estimates using the second derivative of the objective functions, the inconsistent estimators $\tbeta_0$, and $\tbeta_{uw}$ or $\tbeta_p$ should be used, because their limits correspond to the terms in the asymptotic variance-covariance matrices of the more efficient estimators. This is an advantage of the proposed estimators for implementation using existing software that fit logistic regression. One can use the inverse of the estimated variance-covariance matrix from the software output to replace the second derivative of the objective function. 
\end{remark}

\begin{remark}
  The variance-covariance estimators $\hat\Var(\cbeta_{uw})$ in \eqref{eq:36} and $\hat\Var(\cbeta_p)$ in \eqref{eq:37} can be replaced by the following simplified estimators,
\begin{align*}
  \hat\Var_s(\cbeta_{uw})=
  \{\ddot\ell_{uw}^{*_0}(\tbeta_0)+\ddot\ell_{uw}^*(\tbeta_{uw})\}^{-1}
  \qquad\text{ and }\qquad
  \hat\Var_s(\cbeta_p)=
  \{\ddot\ell_p^{*_0}(\tbeta_0)+\ddot\ell_p^*(\tbeta_p)\}^{-1},
\end{align*}
respectively. If the subsampling ratio $n/N$ is much smaller than one, then $\hat\Var_s(\cbeta_{uw})$ and $\hat\Var_s(\cbeta_p)$ perform very similarly to $\hat\Var(\cbeta_{uw})$ and $\hat\Var(\cbeta_p)$, respectively.
\end{remark}

\begin{remark}
  The time complexity of Algorithm~\ref{alg:2} is the same as that of Algorithm 2 in \cite{WangZhuMa2017}. The major computing time is to calculate $\{\pi_i^{\os}(\hbeta_0)\}_{i=1}^{N}$ in Step 2, but it does not require iterative calculations on the full data. Once $\{\pi_i^{\os}(\hbeta_0)\}_{i=1}^{N}$ are available, %
  the calculations of $\hbeta_{uw}$ and $\cbeta_{uw}$ are fast because they are done on the subsamples only. 
  To calculate $\{\pi_i^{\os}(\hbeta_0)\}_{i=1}^{N}$, the required time varies. For $\pi_i^{\mvc}$, the required time is $O(Nd)$; for $\pi_i^{\mmse}$, the required time is $O(Nd^2)$. Thus, the time complexity of Algorithm~\ref{alg:2} with $\pi_i^{\mvc}$ is $O(Nd)$ and the time complexity with $\pi_i^{\mmse}$ is $O(Nd^2)$, if the sampling ratio $n/N$ is much smaller than one.
\end{remark}

\section{Unconditional distribution}
\label{sec:uncond-distr}
Asymptotic distributional results in Sections~\ref{sec:more-effic-infer} and \ref{sec:poisson-sampling}, as well as in \cite{WangZhuMa2017}, are about conditional distributions, i.e., they are about conditional distributions of subsample-based estimators given the full data. We investigate the unconditional distribution of $\hbeta_p$ in this section. 

\begin{theorem}\label{thm:3}
  Under Assumptions \ref{as:1} and \ref{as:2}, if the pilot estimators are obtained {from a uniform subsample of sample size $n_0=o(\sqrt{N})$} and $\Exp\{h^3(\x)\|\x\|^3\}$, $\Exp\{h^3(\x)\|\x\|^2\}$, $\Exp\{h(\x)\|\x\|^3\}$, and $\Exp\{h^2(\x)\}$ are finite, or if $\hbeta_0$ and $\hat\Psi_0$ are independent of the data $\Fn$, then %
  as $n_0$, $n$, and $N$ go to infinity such that $n/N\rightarrow\rho\in[0,1)$, we have
  \begin{equation}\label{eq:62}
    \sqrt{n}(\hbeta_p-\bbeta_t)
    \longrightarrow \Nor(0,\ \bSigma_{\bbeta_t}\bLambda_{u}\bSigma_{\bbeta_t}),
  \end{equation}
  in distribution, where
\begin{align*}
  \bLambda_{u}
  =\frac{\Exp[|\psi(\bbeta_t)|\{
    \rho|\psi(\bbeta_t)|h(\x)\vee\Psi(\bbeta_t)\}
  h(\x)\x\x\tp]}{4\Psi^2(\bbeta_t)}.
\end{align*}
\end{theorem}

\begin{remark}
  If the pilot estimators $\hbeta_0$ and $\hat\Psi_0$ are obtained through the full data $\Fn$, stronger moment conditions are required. Note that $h(\x)$ is often a function of the norm of $\x$, such as in $\pi_i^{\mvc}$, $\pi_i^{\mmse}$, and the local case-control subsampling. In general, if $h(\x)=\|\A\x\|^a$ for some matrix $\A$ and constant $a\ge0$, then the four additional moment conditions reduce to one requirement of $\Exp\{h^3(\x)\|\x\|^3\}<\infty$.
\end{remark}

\begin{remark}
  If $\rho|\psi(\bbeta_t)|h(\x)\le\Psi(\bbeta_t)$ almost surely, then $\bLambda_{u}$ reduced to $\bSigma_{\bbeta_t}^{-1}$ and as a result $\bSigma_{\bbeta_t}\bLambda_{u}\bSigma_{\bbeta_t}$ reduces to $\bSigma_{\bbeta_t}$. Furthermore, if the subsampling probabilities are propositional to the local case-control subsampling probabilities, i.e., $h(\x)=1$, then $\bSigma_{\bbeta_t}\bLambda_{u}\bSigma_{\bbeta_t}$ reduces to $4\Exp\{\phi(\bbeta)\}\M^{-1}$. For the uniform Poisson subsampling estimator, the unconditional asymptotic variance-covariance matrix (scaled by $n$) is $\M^{-1}$. 
  From \eqref{eq:60}, with the same expected subsample size, the proposed method has a higher estimation efficiency than subsampling proportional to the local case-control subsampling probabilities, which is more efficient than the uniform Poisson subsampling approach.
\end{remark}

\begin{remark}
  \cite{fithian2014local}'s investigation corresponds to the case of $h(\x)=1$ and $\rho=2\Exp\{\phi(\bbeta_t)\}$. For this scenario in Theorem~\ref{thm:3}, the asymptotic variance-covariance matrix of $\hbeta_p$ reduces to $2N^{-1}\M^{-1}$, which is the same as obtained in \cite{fithian2014local}. This result is particularly neat in the fact that this asymptotic variance-covariance matrix is proportional to that from the full data MLE with a multiplier of 2. The result in Theorem~\ref{thm:3} is more general. It shows that if $h(\x)=1$, then as long as $\rho|\psi(\bbeta_t)|\le2\Exp\{\phi(\bbeta_t)\}$ (which is satisfied if $\rho=2\Exp\{\phi(\bbeta_t)\}$), the asymptotic variance-covariance matrix of $\hbeta_p$ can be written as
\begin{equation*}
  \frac{4\Exp\{\phi(\bbeta_t)\}}{\rho N}\M^{-1},
\end{equation*}
which is proportional to that of the full data MLE with a multiplier of $4\rho^{-1}\Exp\{\phi(\bbeta_t)\}$. We need to emphasize that this simple representation holds only when $\rho|\psi(\bbeta_t)|\le2\Exp\{\phi(\bbeta_t)\}$ almost surely. If the subsampling ratio $\rho$ gets closer to one, the asymptotic variance-covariance matrix in \eqref{eq:62} may not be simplified. 
\end{remark}

{From Theorems~\ref{thm:2} and~\ref{thm:3}, the conditional asymptotic distribution and unconditional asymptotic distribution of $\hbeta_p$ are the same if $n/N\rightarrow0$.} This is intuitive, because if the sampling ratio $n/N$ is small, the variation of $\hbeta_p$ due to the variation of the full data is small compared with the variation due to the variation of the subsampling. However, if the sampling ratio $n/N$ does not converge to zero, then the conditional asymptotic distribution and unconditional asymptotic distribution of $\hbeta_p$ are quite different. First, we notice that under the unconditional distribution, $\hbeta_p$ is asymptotically unbiased to $\bbeta_t$, while under the conditional distribution, $\hbeta_p$ is asymptotically biased with the bias being $\hbeta_{\wmle}-\bbeta_t=O_P(N^{-1/2})$. Second, since the variation of $\hbeta_p$ due to the variation of the full data is not negligible, we  expect that the asymptotic variance-covariance matrix for the unconditional distribution to be larger than that for the conditional distribution. Indeed this is true, and we present it in the following proposition. 

\begin{proposition}\label{prop3}
  If $\rho>0$ and $\bSigma_{\bbeta_t}$ is a finite and positive definite matrix, then
  \begin{align}\label{eq:39}
    \bSigma_{\bbeta_t}\bLambda_{u}\bSigma_{\bbeta_t}
    \ge\bSigma_{\bbeta_t}
    >\bSigma_{\bbeta_t}\bLambda_{\rho}\bSigma_{\bbeta_t},
  \end{align}
  under the Loewner ordering. Furthermore, if $\Pr\{\rho|\psi(\bbeta_t)|h(\x)>\Psi(\bbeta_t)\}>0$, then the ``$\ge$'' sign in~\eqref{eq:39} can be replaced by ``$>$'', the strict great sign. 
\end{proposition}

\cite{fithian2014local} obtained unconditional distribution of local case-control estimator by assuming that the pilot estimate is independent of the data. Our Theorem~\ref{thm:3} includes this scenario, and the required assumptions are the same as those required in \cite{fithian2014local}. In practice, a consistent pilot estimator that is independent of the data may not be available and a pilot subsample from the full data is required to construct it. For this scenario, a pilot estimator is dependent on the data, and we need a stronger moment condition to establish the asymptotic normality. For local case-control subsampling, $h(\x)=1$, and the additional moment requirement is that $\Exp(\|\x\|^3)<\infty$.

\section{Misspecifications}
\label{sec:misspecifications}
In this section, we discuss the effect when the pilot estimates are misspecified or when the model is misspecified. Pilot estimates misspecification often occurs when they are from other data sources or when they are calculated based on convenient subsamples, e.g., using the first $n_0$ observations in the full data to calculate them. In these cases, it is reasonable to assume that the pilot estimates are independent of $\Fn$ and we use this assumption in this section.

\subsection{Pilot estimates misspecification}
\label{sec:pilot-estim-missp}
Here, we assume that the model is correctly specified but the pilot estimates $\hbeta_0$ and $\hat\Psi_0$ converge to limits that are different from the true parameters for the current data. Interestingly, in this case, the proposed estimators are still consistent and no specific convergence rate is required for $\hbeta_0$ or $\hat\Psi_0$.

The following theorem describes the asymptotic distribution of $\hat\bbeta_{uw}$, the estimator based on subsampling with replacement. Note that $\hat\Psi_0$ is not required by $\hat\bbeta_{uw}$. 

\begin{theorem}\label{thm:R3}
  When the logistic regression model in \eqref{eq:1} is correctly specified and the pilot estimator $\hbeta_0$ that is independent of $\Fn$ is inconsistent, i.e., $\hbeta_0\rightarrow\bbeta_0$ in probability for some $\bbeta_0$ that is different from $\bbeta_t$, then under Assumptions \ref{as:1}-\ref{as:3},  conditional on $\Fn$, as $n$, and $N$ go to infinity,
\begin{align*}
  \sqrt{n}(\hat\bbeta_{uw}-\hbeta_{\wmle}) \longrightarrow
  \Nor\big\{\0,\ \Psi(\bbeta_0)\bvsigma_{a}^{-1}\big\},
\end{align*}
  in distribution; furthermore, if $n/N\rightarrow0$, then
\begin{align*}
  \sqrt{n}(\hat\bbeta_{uw}-\bbeta_t) \longrightarrow
  \Nor\big\{\0,\ \Psi(\bbeta_0)\bvsigma_{a}^{-1}\big\},
\end{align*}
  in distribution, where $\Psi(\bbeta_0)=\Exp\{|\psi(\bbeta_0)|h(\x)\}$ and
  \begin{align*}
    \bvsigma_{a}
    &=\Exp[\{1-p(\x,\bbeta_t)\}p(\x,\bbeta_0)
     p(\x,\bbeta_t-\bbeta_0)h(\x)\x\x\tp].
  \end{align*}
  Here $\hbeta_{\wmle}$ satisfies that, %
\begin{equation*}
  \sqrt{N}(\hbeta_{\wmle}-\bbeta_t)
  \longrightarrow \Nor\big(\0,\ 
  \bvsigma_{a}^{-1}\bvsigma_{b}\bvsigma_{a}^{-1}\big),
  \end{equation*}
in distribution, where
  \begin{align*}
    \bvsigma_{b}
    =\Exp\big\{\phi(\bbeta_0)\phi(\bbeta_t-\bbeta_0)
    h^2(\x)\x\x\tp\big\}.
  \end{align*}
\end{theorem}

\begin{remark}
  If $\bbeta_0=\bbeta_t$, then direct calculations show that  $\Psi(\bbeta_0)\bvsigma_{a}^{-1}=\bSigma_{\bbeta_t}$ and $\bvsigma_{a}^{-1}\bvsigma_{b}\bvsigma_{a}^{-1}=\bSigma_{\wmle}$, that is, the results in Theorem~\ref{thm:R3} reduce to the same results in Theorem~\ref{thm:1}.   
\end{remark}
\begin{remark}\label{prop:R1}
If the pilot estimator $\hbeta_0$ is very wrong such that $\bbeta_t\tp\x\x\tp\bbeta_0<0$, i.e., $p(\x,\bbeta_t)>0.5>p(\x,\bbeta_0)$ or $p(\x,\bbeta_t)<0.5<p(\x,\bbeta_0)$, then it can be shown that
\begin{align*}
  \Psi(\bbeta_0)\bvsigma_{a}^{-1}>\bSigma_{\bbeta_t}.
\end{align*}
Detailed proof for this result is presented in Section~\ref{sec:proofs-pilot-missp} of the appendix. 
\end{remark}

The following theorem describes the asymptotic distribution of $\hat\bbeta_p$, the estimator based on Poisson subsampling. Note that $\hat\bbeta_p$ requires both $\hbeta_0$ and $\hat\Psi_0$.

\begin{theorem}\label{thm:R4}
  Assume that the logistic regression model is correctly specified, and the pilot estimators $\hbeta_0$ and $\hat\Psi_0$ are independent of $\Fn$ and they are inconsistent, i.e., $\hbeta_0\rightarrow\bbeta_0$ and $\hat\Psi_0\rightarrow\Psi_0$ in probability for some $\bbeta_0$ and $\Psi_0$, respectively. Under Assumptions \ref{as:1}-\ref{as:2}, conditional on $\Fn$, as $n$ and $N$ go to infinity, if $n/N\rightarrow0$, then
\begin{align*}
  \sqrt{n}(\hat\bbeta_p-\bbeta_t)
  \longrightarrow\Nor\big(\0,\ \Psi_0\bvsigma_{a}^{-1}\big),
\end{align*}
in distribution; if $n/N\rightarrow\rho$, then
\begin{align*}
  \sqrt{n}(\hat\bbeta_p-\hbeta_{\wmle})
  \longrightarrow\Nor\big(\0,\
  \Psi_0\bvsigma_{a}^{-1}\bvsigma_{c}\bvsigma_{a}^{-1}\big),
\end{align*}
  in distribution,
 where %
\begin{align*}
  \bvsigma_{c}
  &=\Exp\Big[|\psi(\bbeta_0)|
    \big\{1-\rho\Psi_0^{-1}|\psi(\bbeta_0)|h(\x)\big\}_+
    \psi^2(\bbeta_t-\bbeta_0)h(\x)\x\x\tp\Big].
\end{align*}
\end{theorem}

\begin{remark}
  If $\bbeta_0=\bbeta_t$ and $\Psi_0=\Psi(\bbeta_t)$, then direct calculations show that $\Psi_0^{-1}\bvsigma_{c}=\bLambda_{\rho}$, and thus the results in Theorem~\ref{thm:R4} reduce to the same results in Theorem~\ref{thm:2}.
\end{remark}

\begin{remark}
We have a result similar to that in Proposition~\ref{prop2}. By direct calculation, we know that
\begin{align*}
\bvsigma_{c}
  &<\Exp\Big[|\psi(\bbeta_0)|
    \psi^2(\bbeta_t-\bbeta_0)h(\x)\x\x\tp\Big]
  =\bvsigma_{a},
\end{align*}
under the Loewner ordering, 
which indicates that 
\begin{align*}
  \Psi_0\bvsigma_{a}^{-1}\bvsigma_{c}\bvsigma_{a}^{-1}
  <\Psi_0\bvsigma_{a}^{-1}.
\end{align*}
Thus, when the pilot estimators are misspecified, Poisson subsampling still has a higher estimation efficiency compared with subsampling with replacement if $\Psi_0=%
\Exp\{|\psi(\bbeta_0)|h(\x)\}$, which is the case if $\hat\Psi_0$ is constructed from a pilot subsample. 
\end{remark}

\subsection{Model misspecification}
\label{sec:model-missp}
In this section, we consider the case when the logistic regression model is misspecified, namely, the model in \eqref{eq:1} is not correct. Instead, we assume that the true probability of $y=1$ given $\x$ is
\begin{equation*}
  \Pr(y=1|\x)=p_t(\x),%
\end{equation*}
for some unknown function $p_t(\x)$. %
When the logistic regression model is misspecified, we need to define the meaning of consistency because there is no true $\bbeta$ any more. In this case, consistency often means that the estimator converges to a limit that minimizes expected loss with respect to a specified loss function. Here, if we denote the limit as $\bbeta_l$ and define it to be the minimizer of 
\begin{equation*}
  \Exp\Big\{-p_t(\x)h(\x)\x\tp\bbeta
  +h(\x)\log\big(1+e^{\bbeta\tp\x}\big)\Big\},
\end{equation*}
then $\bbeta_l$ satisfies 
\begin{align*}
  \Exp\big[\big\{p_t(\x)-p(\x,\bbeta_l)\big\}h(\x)\x\big]=\0,
\end{align*}
where $p(\x,\bbeta)={e^{\x\tp\bbeta}}{(1+e^{\x\tp\bbeta})^{-1}}$.

Now we investigate the asymptotic properties of the proposed estimators under model misspecification. In this case, we need to assume that the pilot estimators are consistent which is also required in the local case-control subsampling method. In addition, to investigate the asymptotic normality, we also need an additional assumption on the convergence rate of the pilot estimator $\hbeta_0$.

The following theorem describes the asymptotic behavior of the estimator $\hat\bbeta_{uw}$ based on subsampling with replacement. 

\begin{theorem}\label{thm:R1}
  Assume that the pilot sample is independent of $\Fn$ and the pilot estimator $\hbeta_0$ satisfies that
  $\sqrt{n_0}(\hbeta_0-\bbeta_l)\rightarrow N(\0,\bSigma_0)$ in distribution. 
  Under Assumptions \ref{as:1}-\ref{as:3}, if $n_0/N\rightarrow \rho_0$ and $n/N\rightarrow \rho$ with $\rho_0, \rho\in(0,1)$, then conditional on
  $\Fn$, as $n_0$, $n$, and $N$ go to infinity,
\begin{align}\label{eq:49}
  \sqrt{n}(\hat\bbeta_{uw}-\hbeta_{\wmle})
  &\longrightarrow \Nor\big(\0,\ \omega\bkappa_{a}^{-1}\big)
\end{align}
in distribution, where
\begin{align*}
  \bkappa_{a}
  &%
  =\frac{1}{4}\Exp\big[\{p_t(\x)-2p_t(\x)p(\x,\bbeta_l)
    +p(\x,\bbeta_l)\}h(\x)\x\x\tp\big],\\
  \omega
  &%
  =\Exp\big[\{p_t(\x)-2p_t(\x)p(\x,\bbeta_l)
    +p(\x,\bbeta_l)\}h(\x)\big],
\end{align*}
and $\hbeta_{\wmle}$ satisfies that
\begin{align}\label{eq:47}
  \sqrt{N}(\hbeta_{\wmle}-\bbeta_l)
  \longrightarrow \Nor\Big\{\0,\
  \bkappa_{a}^{-1}\big(\bkappa_{b}+\rho_0^{-1}
    \bkappa_{c}\bSigma_0\bkappa_{c}\big)\bkappa_{a}^{-1}\Big\},
\end{align}
in distribution, with
\begin{align*}
  \bkappa_{b}
  &%
  =\frac{1}{4}\Exp\big[\{p_t(\x)-2p_t(\x)p(\x,\bbeta_l)
    +p^2(\x,\bbeta_l)\}h^2(\x)\x\x\tp\big], \text{ and }\\
  \bkappa_{c}
  &%
=\frac{1}{4}\Exp\big[\{1-2p(\x,\bbeta_l)\}
    \{p_t(\x)-p(\x,\bbeta_l)\}h(\x)\x\x\tp\big].
\end{align*}
\end{theorem}
\begin{remark}
  If the model is correctly specified, i.e., $p_t(\x)=p(\x,\bbeta_t)$, then $\omega\bkappa_{a}^{-1}=\bSigma_{\bbeta_t}$,  $\bkappa_{b}=\frac{1}{4}\Exp\big\{\phi(\bbeta_t)
  h^2(\x)\x\x\tp\big\}$ and $\bkappa_{c}=\0$, and therefore the results in Theorem \ref{thm:R1} reduce to the same expressions as those in Theorem \ref{thm:1}. 
\end{remark}

From Theorem~\ref{thm:R1}, with model misspecification, it is critical to have a good pilot estimator $\hbeta_0$. Note that the pilot sample size is typically much smaller than the full data sample size, so $\rho_0$ can be close to zero. From \eqref{eq:47}, we see that the asymptotic variance-covariance matrix of $\hbeta_{\wmle}$ can be inflated by a small pilot sample size.

The following theorem presents asymptotic results for the estimator based on Poisson subsampling. 

\begin{theorem}\label{thm:R2}
  Assume that the pilot sample is independent of $\Fn$ and the pilot estimators satisfy that
  $\sqrt{n_0}(\hbeta_0-\bbeta_l)\rightarrow N(\0,\bSigma_0)$ in distribution and $\hat\Psi_0\rightarrow\omega$ in probability. 
  Under Assumptions \ref{as:1}-\ref{as:2}, if $n_0/N\rightarrow \rho_0$ and $n/N\rightarrow \rho$ with $\rho_0, \rho\in(0,1)$, then conditional on
  $\Fn$, as $n_0$, $n$, and $N$ go to infinity,
\begin{align*}
  \sqrt{n}(\hat\bbeta_p-\hbeta_{\wmle})
  &\longrightarrow \Nor\big(\0,\ \omega\bkappa_{a}^{-1}\bkappa_{d}\bkappa_{a}^{-1}\big).
\end{align*}
  in distribution, where $\omega$, $\bkappa_{a}$ and $\bkappa_{b}$ are defined in Theorem~\ref{thm:R1}, and 
\begin{align*}
  \bkappa_{d}
  &=\frac{1}{4}\Exp\Big[|\psi(\bbeta_l)|
    \{1-\rho\omega^{-1}|\psi(\bbeta_l)|h(\x)\}_+
    h(\x)\x\x\tp\Big].%
\end{align*}
\end{theorem}

\begin{remark}
  Similarly to Proposition~\ref{prop2}, we have that
  \begin{align*}
    \bkappa_{a}^{-1}\bkappa_{d}\bkappa_{a}^{-1}<\bkappa_{a}^{-1}
  \end{align*}
  under the Loewner ordering, indicating that the estimator based on Poisson subsampling has a smaller conditional variance-covariance matrix.
\end{remark}

For Poisson subsampling, compared with $\hbeta_0$, we require a much weaker assumption on $\hat\Psi_0$; we only need it to converge without specifying certain convergence rate. The reason is that the effect of $\hat\Psi_0$ on all the subsampling probabilities are the same and it mainly controls the expected subsample size, while $\hbeta_0$ affects individual subsampling probabilities differently corresponding to different values of $\x_i$ and $y_i$. 

\section{Numerical evaluations}
\label{sec:numerical-examples}
We evaluate the performance of the more efficient estimators in terms of both estimation efficiency and computational efficiency in this section.
 
\subsection{Estimation efficiency}
\label{sec:estim-effic}
In this section, we use numerical experiments based on simulated and real data sets to evaluate the estimators proposed in this paper. For simulation, to compare with the original OSMAC estimator, we use exactly the same setup used in Section 5.1 of \cite{WangZhuMa2017}. Specifically, the full data sample size $N=10,000$ and the true value of $\bbeta$, $\bbeta_t$, is a $7\times1$ vector of 0.5. The following 6 distributions of $\x$ are considered: multivariate normal distribution with mean zero (mzNormal), multivariate normal distribution
  with nonzero mean (nzNormal), multivariate normal distribution
  with mean zero and unequal variances (ueNormal), mixture of two multivariate normal distributions with different means (mixNormal), multivariate $t$ distribution
  with degrees of freedom 3 ($T_3$), and exponential distribution (EXP). Detailed explanations of these
  distributions can be found in Section 5.1 of \cite{WangZhuMa2017}.

To evaluate the estimation performance of the new estimators compared with the original weighted OSMAC estimator, we define the estimation efficiency of $\cbeta_{\text{new}}$ relative to $\cbeta_w$ as
  \begin{equation*}
    \text{Relative Efficiency}=
    \frac{\text{MSE}(\cbeta_w)}{\text{MSE}(\cbeta_{\text{new}})},
  \end{equation*}
  where $\cbeta_{\text{new}}=\cbeta_{uw}$ for the subsampling with
  replacement estimator described in Algorithm~\ref{alg:2} and
  $\cbeta_{\text{new}}=\cbeta_p$ for Poisson subsampling estimator
  described in Algorithm~\ref{alg:3}. We calculate empirical MSEs from
  $S=1000$ subsamples using
  \begin{equation}\label{eq:63}
    \text{MSE}(\cbeta)=\frac{1}{S}\sum_{s=1}^S
    \|\cbeta^{(s)}-\bbeta_t\|^2,
  \end{equation}
  where $\cbeta^{(s)}$ is the estimate from the $s$-th subsample. We
  fixed the first step sample size $n_0=200$ and choose $n$ to be 100,
  200, 400, 600, 800, and 1000. This is the same setup used in
  \cite{WangZhuMa2017}.

  Figure~\ref{fig:1} presents the relative efficiency of $\cbeta_{uw}$ and $\cbeta_p$ based on two different choices of $\pi_i^{\os}$: $\pi_i^{\mmse}$ and $\pi_i^{\mvc}$. It is seen that in general $\cbeta_{uw}$ and $\cbeta_p$ are more efficient than $\cbeta_w$. Among the six cases, the only case that $\cbeta_w$ can be more efficient is when $\x$ has a $T_3$ distribution and $\pi^{\mvc}$ is used, but the difference is not very significant. %
  For all other cases, $\cbeta_{uw}$ and $\cbeta_p$ are more efficient. For example, when $\x$ has the nzNormal distribution, $\cbeta_p$ can be 250\% as efficient as $\cbeta_w$ if $\pi^{\mmse}$ is used. Between $\cbeta_{uw}$ and $\cbeta_p$, $\cbeta_p$ is more efficient than $\cbeta_{uw}$ for all cases. We also calculate the empirical unconditional MSE by generating the full data in each repetition of the simulation. The results are similar and thus are omitted.

To evaluate the performance of the proposed method with different choices of the subsampling probabilities for subsampling with replacement and Poisson subsampling, Figure~\ref{fig:4} plots empirical MSEs of using $\pi^{\mmse}$, $\pi^{\mvc}$, $\pi^{\mathrm{lcc}}$ (local case-control), and the uniform subsampling probability. In general, $\pi^{\mmse}$ with Poisson subsampling has the smallest empirical MSEs while uniform subsampling with replacement has the worst estimation efficiency. This agrees with our theoretical results: 1) $\pi^{\mmse}$ minimizes the asymptotic MSE of the parameter estimator which corresponds to the empirical MSE defined in~\eqref{eq:63} for the experiments, while $\pi^{\mvc}$ minimizes the asymptotic MSE of a transformed parameter estimator, and 2) Poisson subsampling has a higher estimation efficiency compared with subsampling with replacement.  

  To assess the performance of $\hat\Var(\cbeta_{uw})$ in \eqref{eq:36} and $\hat\Var(\cbeta_p)$ in \eqref{eq:37}, we use  $\tr\{\hat\Var(\cbeta_{uw})\}$ and $\tr\{\hat\Var(\cbeta_p)\}$ to estimate the MSEs of $\cbeta_{uw}$ and $\cbeta_p$, and compare the average estimated MSEs with the unconditional empirical MSEs. We focus on the unconditional MSE because conditional inference may not be appropriate if $n/N$ is not small enough. 
  Figure~\ref{fig:2} presents the results for using 
  $\pi^{\mvc}$. Results for using $\pi^{\mmse}$ are similar and thus are omitted. Note that our purpose here is to evaluate the quality of $\hat\Var(\cbeta_{uw})$ in \eqref{eq:36} and $\hat\Var(\cbeta_p)$ in \eqref{eq:37}, so in this figure we plot the original empirical and estimated MSEs without scaling then using the MSEs of $\cbeta_w$. Here, a closer value between the estimated MSE and the empirical MSE indicates a better performance of $\hat\Var(\cbeta_{uw})$ or $\hat\Var(\cbeta_p)$. From Figure~\ref{fig:2}, the estimated MSEs are
  very close to the empirical MSEs, except for the case of nzNormal
  covariate for subsampling with replacement. In this case, the responses are imbalanced with about 95\% being 1's. For this scenario, the variance-covariance estimator for $\cbeta_w$ proposed in \cite{WangZhuMa2017} also has a similar problem of underestimation. For Poisson subsampling, the problem of underestimation from $\hat\Var(\cbeta_p)$ is not significant. 

We also apply the more efficient estimation methods to a supersymmetric (SUSY) benchmark data set \citep{Baldi2014} available from the Machine Learning Repository \citep{Dua:2017}. The data set contains a binary response variable indicating whether a process produces new supersymmetric particles or not and 18 covariates that are kinematic features about the process. The full sample size is $N=5,000,000$ and the data file is about 2.4 gigabytes. About 54.24\% of the responses in the full data are from the background process. We use the more efficient estimation methods with subsample size $n$ to estimate parameters in logistic regression.

Figures~\ref{fig:3} gives the relative efficiency of $\cbeta_{uw}$ and $\cbeta_p$ to $\cbeta_w$ for both $\pi_i^{\mvc}$ and $\pi_i^{\mmse}$. It is seen that when $\pi_i^{\mmse}$ are used, $\cbeta_{uw}$ and $\cbeta_p$ always outperform $\cbeta_w$. When $\pi_i^{\mvc}$ are used, $\cbeta_{uw}$ and $\cbeta_p$ may not be as efficient as $\cbeta_w$, but they become more efficient when the second stage sample size $n$ gets larger. It is also seen that $\cbeta_p$ dominates $\cbeta_{uw}$ and $\pi^{\mmse}$ dominates $\pi^{\mvc}$ in estimation efficiency. 

\newpage
\begin{figure}[H]
  \centering
  \begin{subfigure}{0.49\textwidth}
    \includegraphics[width=\textwidth,page=1]{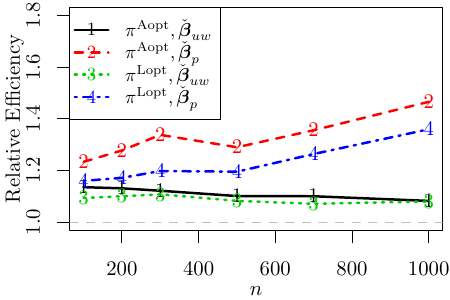}
    \caption{mzNormal}
  \end{subfigure}
  \begin{subfigure}{0.49\textwidth}
    \includegraphics[width=\textwidth,page=2]{relativeMSE.pdf}
    \caption{nzNormal}
  \end{subfigure}\\[5mm]
  \begin{subfigure}{0.49\textwidth}
    \includegraphics[width=\textwidth,page=3]{relativeMSE.pdf}
    \caption{ueNormal}
  \end{subfigure}
  \begin{subfigure}{0.49\textwidth}
    \includegraphics[width=\textwidth,page=4]{relativeMSE.pdf}
    \caption{mixNormal}
  \end{subfigure}\\[5mm]
  \begin{subfigure}{0.49\textwidth}
    \includegraphics[width=\textwidth,page=5]{relativeMSE.pdf}
    \caption{$T_3$}
  \end{subfigure}
  \begin{subfigure}{0.49\textwidth}
    \includegraphics[width=\textwidth,page=6]{relativeMSE.pdf}
    \caption{EXP}
  \end{subfigure}%
  \caption{Relative efficiency for different second step subsample size $n$ with the first step subsample size being fixed at $n_0=200$. A relative efficiency larger than one means the associate method is more efficient than the original OSMAC estimator. %
  }
  \label{fig:1}
\end{figure}

\begin{figure}[H]
  \centering
  \begin{subfigure}{0.49\textwidth}
    \includegraphics[width=\textwidth,page=1]{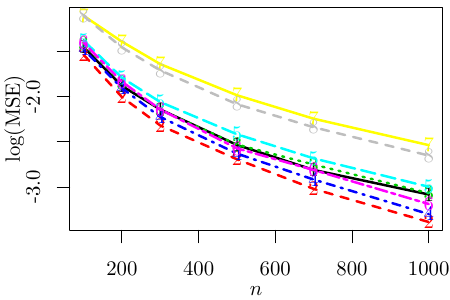}
    \caption{mzNormal}
  \end{subfigure}
  \begin{subfigure}{0.49\textwidth}
    \includegraphics[width=\textwidth,page=2]{MSE.pdf}
    \caption{nzNormal}
  \end{subfigure}\\[5mm]
  \begin{subfigure}{0.49\textwidth}
    \includegraphics[width=\textwidth,page=3]{MSE.pdf}
    \caption{ueNormal}
  \end{subfigure}
  \begin{subfigure}{0.49\textwidth}
    \includegraphics[width=\textwidth,page=4]{MSE.pdf}
    \caption{mixNormal}
  \end{subfigure}\\[5mm]
  \begin{subfigure}{0.49\textwidth}
    \includegraphics[width=\textwidth,page=5]{MSE.pdf}
    \caption{$T_3$}
  \end{subfigure}
  \begin{subfigure}{0.49\textwidth}
    \includegraphics[width=\textwidth,page=6]{MSE.pdf}
    \caption{EXP}
  \end{subfigure}%
  \caption{MSE for different subsampling probabilities with second step subsample size $n$ and a fixed first step subsample size $n_0=200$. Logarithm is taken on MSEs for better presentation.
  }
  \label{fig:4}
\end{figure}

\begin{figure}[H]
  \centering
  \begin{subfigure}{0.49\textwidth}
    \includegraphics[width=\textwidth,page=1]{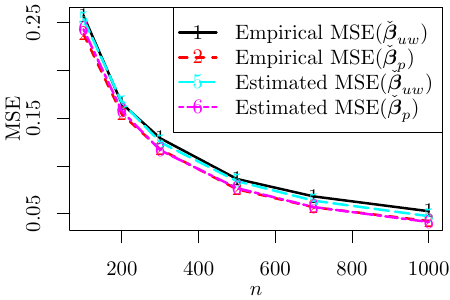}
    \caption{mzNormal}
  \end{subfigure}
  \begin{subfigure}{0.49\textwidth}
    \includegraphics[width=\textwidth,page=2]{amseunc.pdf}
    \caption{nzNormal}
  \end{subfigure}\\[5mm]
  \begin{subfigure}{0.49\textwidth}
    \includegraphics[width=\textwidth,page=3]{amseunc.pdf}
    \caption{ueNormal}
  \end{subfigure}
  \begin{subfigure}{0.49\textwidth}
    \includegraphics[width=\textwidth,page=4]{amseunc.pdf}
    \caption{mixNormal}
  \end{subfigure}\\[5mm]
  \begin{subfigure}{0.49\textwidth}
    \includegraphics[width=\textwidth,page=5]{amseunc.pdf}
    \caption{$T_3$}
  \end{subfigure}
  \begin{subfigure}{0.49\textwidth}
    \includegraphics[width=\textwidth,page=6]{amseunc.pdf}
    \caption{EXP}
  \end{subfigure}\\[5mm]
  \caption{Empirical and estimated MSEs for different second step subsample size $n$ based on $\pi^{\mvc}$ with the first step subsample size being fixed at $n_0=200$.}
  \label{fig:2}
\end{figure}

\begin{figure}[htp]
  \centering
    \includegraphics[width=0.5\textwidth]{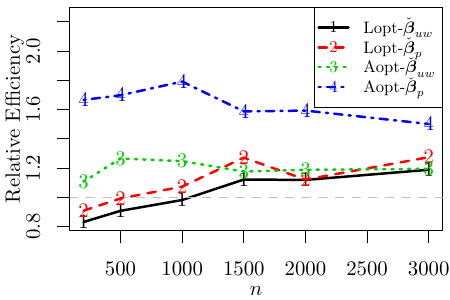}
  \caption{Relative efficiency for the SUSY data set with $n_0=200$ and different second step subsample size $n$. The gray horizontal dashed line is the reference line when relative efficiency is one.}
  \label{fig:3}
\end{figure}

\subsection{Computational efficiency}
\label{sec:comp-effic}
We consider the computational efficiency of the more efficient estimation methods in this section. Note that they have the same order of computational time complexity, so they should have similar computational efficiency as the weighted estimator. For Poisson subsampling, there is no need to calculate $\{\pi_i^p\}_{i=1}^N$ all at once and random numbers can be generated on the go, so it requires less RAM and may require less CPU times as well. To confirm this, we record the computing time of implementing each of them for the case  when $\x$ is mzNormal. All methods are implemented in the R programming language \citep{R}, and computations are carried out on a desktop running Ubuntu Linux 16.04 with an Intel I7 processor and 16GB RAM. Only one logical CPU is used for the calculation. We set the value of $d$ to $d=50$, the values of $N$ to be $N=10^4, 10^5, 10^6$ and $10^7$, and the subsample sizes to be $n_0=200$ and $n=1000$.

Table~\ref{tab:1} gives the required CPU times (in seconds) to obtain $\cbeta_w$, $\cbeta_{uw}$, and $\cbeta_p$, using $\pi_i^{\mvc}$ and $\pi_i^{\mmse}$. The computing times for using the full data
(Full) %
are also given for comparisons. It is seen that $\cbeta_{uw}$ and $\cbeta_p$ are a little faster than $\cbeta_w$ but the advantages are not very significant. {The reason is that the original OSMAC estimator $\cbeta_{uw}$ pools the pilot subsample with the second stage subsample and performs iterative calculations on the pilot subsample twice, while the proposed method combines the pilot estimator with the second stage estimator which only requires iterative calculations on the pilot subsample once. Since the pilot subsample size is small, the difference is not significant.} Note that these times are obtained when all the calculations are done in the RAM, and only the CPU times for implementing each method are counted while the time to generate the data is not counted.

\begin{table}[H]%
  \caption{CPU seconds when the full data are generated and kept in the RAM. Here $n_0=200$, $n=1000$, and the full data size $N$ varies; the covariates are from a $d=50$ dimensional multivariate normal distribution.}
\centering
\begin{tabular}{lrrrr}\hline
  Method & \multicolumn{4}{c}{$N$} \\
  \cline{2-5}
            & $10^4$ & $10^5$ & $10^6$ & $10^7$ \\ \hline
$\pi_i^{\mvc}$, $\cbeta_w$ & 0.14 & 0.13 & 0.45 & 5.24 \\ 
$\pi_i^{\mvc}$, $\cbeta_{uw}$ & 0.08 & 0.11 & 0.41 & 3.71 \\ 
$\pi_i^{\mvc}$, $\cbeta_p$ & 0.08 & 0.11 & 0.43 & 3.88 \\ 
$\pi_i^{\mmse}$, $\cbeta_w$ & 0.13 & 0.32 & 3.31 & 35.15 \\ 
$\pi_i^{\mmse}$, $\cbeta_{uw}$ & 0.12 & 0.31 & 3.29 & 34.98 \\ 
$\pi_i^{\mmse}$, $\cbeta_p$ & 0.12 & 0.31 & 3.29 & 35.06 \\ 
Full & 0.15 & 1.62 & 15.05 & 247.89 \\ 
  \hline
\end{tabular}
\label{tab:1}
\end{table}

For big data problem, it is common that the full data are larger than the size of the available RAM, and full data can not be loaded into the RAM. For this scenario, one has to load the data into RAM line-by-line or block-by-block. Note that communication between CPU and hard drive is much slower than communication between CPU and RAM. Thus, this will dramatically increase the computing time. To mimic this situation, we 
store the full data on hard drive and use \verb|readlines()| function to process data 1000 rows each time. We also use a smaller computer with 8GB RAM to implement the method. For the case when $N=10^7$, the full data is about 9.1GB which is larger than the available RAM. 

The computing times when data are scanned from hard drive are reported Table~\ref{tab:2}. Here the computing times can be over thousand times longer than those when data are loaded into RAM. Note that these computing times can be reduced dramatically if we use some other programming language like C++ \citep{stroustrup1986c} or Julia \citep{Julia}. However, for fair comparisons, we use the same programming language R here. Furthermore, our main purpose here is to demonstrate the computational advantage of subsampling so the real focus is on the relative performance among different methods. From Table~\ref{tab:2}, it is seen that using $\pi^{\mmse}$ does not cost much more time than using $\pi^{\mvc}$. The reason for this observation is that the major computing time is spent in data processing and the computing times used in calculating the subsampling probabilities are short. We also notice that Poisson subsampling is more computational efficient than subsampling with replacement since it calculates subsampling probabilities and generates random numbers on the go and requires one time less to scan the full data. Poisson subsampling only used about 2\% of the time required by implementing the full data approach. 

\begin{table}[H]%
  \caption{CPU seconds when the full data are scanned from hard drive. Here $n_0=200$, $n=1000$, and the full data size $N$ varies; the covariates are from a $d=50$ dimensional multivariate normal distribution.}
\centering
\begin{tabular}{lrrrr}\hline
  Method & \multicolumn{4}{c}{$N$} \\
  \cline{2-5}
         & $10^4$ & $10^5$ & $10^6$ & $10^7$ \\ \hline
$\pi_i^{\mvc}$, $\cbeta_w$  & 4.26   & 41.60   & 441.46   & 4374.94   \\ 
$\pi_i^{\mvc}$, $\cbeta_{uw}$  & 4.13   & 41.42   & 413.09   & 4384.99   \\ 
$\pi_i^{\mvc}$, $\cbeta_p$  & 2.77   & 27.58   & 272.32   & 2699.13   \\ 
$\pi_i^{\mmse}$, $\cbeta_w$ & 4.43   & 41.75   & 434.96   & 4393.38   \\ 
$\pi_i^{\mmse}$, $\cbeta_{uw}$ & 4.10   & 41.83   & 417.55   & 4369.04   \\ 
$\pi_i^{\mmse}$, $\cbeta_p$ & 2.88   & 27.93   & 273.24   & 2719.51   \\ 
Full        & 139.46 & 1411.78 & 14829.63 & 138134.69 \\   \hline
\end{tabular}
\label{tab:2}
\end{table}

\section{Summary}
\label{sec:summary}
In this paper, we have proposed a new un-weighted estimator for logistic regression based on an OSMAC subsample. We have derived conditional asymptotic distribution of the new estimator which has a smaller variance-covariance matrix compared with the weighted estimator.

We have also investigated the asymptotic properties if Poisson subsampling is used, and showed that the resultant estimator has the same conditional asymptotic distribution if the subsampling ratio converges to zero. %
However, if the subsampling ratio converges to a positive constant, %
the estimator based on Poisson subsampling has a smaller variance-covariance matrix.

In addition, we have derive the unconditional asymptotic distribution for the proposed estimator based on Poisson subsampling. Interestingly, if 
the subsampling ratio converges to zero, %
the unconditional asymptotic distribution is the same as the conditional asymptotic distribution, indicating that the variation of the full data can be ignored. If the subsampling ratio does not converge to zero, %
the unconditional asymptotic distribution has a larger variance-covariance matrix. Our results also include the local case-control sampling method. %
With a stronger moment condition that the third moment of the covariate is finite, we do not require the pilot estimate to be independent of the data.

Furthermore, %
we have proved consistency and asymptotic normality for the proposed estimators under two types of misspecifications: one is that pilot estimators %
are inconsistent, and the other is that the logistic regression model is misspecified.

\acks{The author would like to thank the reviewers for their insightful comments and suggestions, which greatly helped improve the paper and strengthen the proposed methods. %
  This work was partially supported by the NSF grant DMS-1812013 and a UConn REP grant.}

\appendix

\section{Subsampling with replacement from hard drive}
\label{sec:subs-with-repl}
If the full data can be loaded into available RAM, subsampling probabilities can be calculated in RAM and subsampling with replacement can be implemented directly. Otherwise, special considerations have to be given in practical implementation. If the full data is larger than available RAM while subsampling probabilities $\{\pi_i^{\os}(\hbeta_0)\}_{i=1}^{N}$ can still be loaded in available RAM, one can calculate $\{\pi_i^{\os}(\hbeta_0)\}_{i=1}^{N}$ by scanning the data from hard drive line-by-line or block-by-block, generate row indexes for a subsample, and then scan the data line-by-line or block-by-block to take the subsample. To be specific,
one can draw a subsample, say $\{idx_1, ..., idx_{n}\}$, from
$\{1, ..., N\}$, sort the indexes to have $\{idx_{(1)}, ..., idx_{(n)}\}$, and then use the Algorithm~\ref{alg:1} to scan the data
line-by-line or block-by-block in order to obtain the subsample.

\begin{algorithm}[H]%
  \caption{Obtain the subsample with the given indexes by scanning through the full data}
  \label{alg:1}
  \begin{algorithmic}
    \STATE {\bfseries Input:}
    data file, subsample indexes $\{idx_{(1)}, ..., idx_{(n)}\}$.
    \STATE $i \leftarrow 1$
    \STATE $j \leftarrow 1$
    \WHILE {$i \le N$ \AND $j \le n$}
    \STATE readline(data file)
    \IF {$i==idx_{(j)}$}
    \STATE include the $i$-th data point into the subsample
    \WHILE {$i==idx_{(j)}$}
    \STATE $j \leftarrow j+1$
    \ENDWHILE
    \ENDIF
    \STATE $i\leftarrow i+1$
    \ENDWHILE
    \end{algorithmic}
\end{algorithm}
Clearly, Algorithm~\ref{alg:1} takes no more than linear time to run.   Here, we assume that a generic function
  \verb|readline()| reads a single line (or multiple lines) from the
  data file and stop at the beginning of the next line (or next block)
  in the data
  file. No calculation is performed on a data line if it is not included in the subsample.
  Such functionality is provided by most programming languages. For example, Julia \citep{Julia} and Python \citep{vanRossum1995} has a function \verb|readline()| that read a file line-by-line; R \citep{R} has a function \verb|readLines()| that read one or multiple lines; C \citep{C} and C++ \citep{stroustrup1986c} has a function \verb|getline()| to read one line at a time. 

\section{Proofs and technical details}
\label{sec:proofs-techn-deta}
In this appendix, we provide proofs for the results in the paper.  Technical details related to sampling with replacement in Section~\ref{sec:more-effic-infer} are presented in Section~\ref{sec:proofs-subs-with}; technical details related to Poisson subsampling in Section~\ref{sec:poisson-sampling} are presented in Section~\ref{sec:proofs-poiss-subs}; technical details related to unconditional results in Section~\ref{sec:uncond-distr} are presented in Section~\ref{sec:proofs-uncond-distr}; and technical details related to model misspecification in Section~\ref{sec:misspecifications} are presented in Section~\ref{sec:proofs-pilot-missp}.

\subsection{Proofs for subsampling with replacement}
\label{sec:proofs-subs-with}
In this section we prove the results in
Section~\ref{sec:more-effic-infer}. %
For ease of presentation, we use notation $\lambda$ to denote the log-likelihood shifted by $\hbeta_0$. For the subsample, $\lambda_{uw}^*(\bbeta)=\ell_{uw}^*(\bbeta-\hbeta_0)$. Denote the first and second derivatives of $\lambda_{uw}^*(\bbeta)$ as $\dot\lambda_{uw}^*(\bbeta)=\partial\lambda_{uw}^*(\bbeta)/\partial\bbeta$ and $\ddot\lambda_{uw}^*(\bbeta)=\partial^2\lambda_{uw}^*(\bbeta)/(\partial\bbeta\partial\bbeta\tp)$.

Note that from \cite{xiong2008some,cheng2010bootstrap}, the fact that a sequence converges to 0 in conditional probability is equivalent to the fact that it converges to 0 in unconditional probability. This can also be proved directly by using the fact the probability measure is bounded by 1. Thus, in the following, we will use $\op$ to denote a sequence converging to 0 in probability without stating whether the underlying probability measure is  conditional or unconditional.

{We first present some lemmas that will be used to prove Theorem~\ref{thm:1}, and provide their proofs in Sections  \ref{sec:proof-lemma-reflem1} - \ref{sec:proof-lemma-reflem4}.}

\begin{lemma}\label{lem1}
  Let $\v_1, ..., \v_N$ be i.i.d. random vectors with the same distribution of $\v$. Let $g_{1N}$ be a bounded function %
  and $g_2$ be a fixed function that does not depend on $N$. If $g_{1N}(\v)=\op$ %
  and $\Exp|g_2(\v)|<\infty$, then
  \begin{equation*}
    \oneN\sumN g_{1N}(\v_i)g_2(\v_i)=\op.
  \end{equation*}
\end{lemma}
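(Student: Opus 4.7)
The plan is to use a truncation argument based on the integrability of $g_2$, combined with the bounded convergence theorem applied to the sequence $g_1(\v_i)$. Since $g_1$ is bounded (say by a constant $C$) and $g_1(\v_i)=\op$ for each fixed $i$, bounded convergence upgrades this to $L^1$ convergence: $\Exp|g_1(\v_i)|\to 0$ as $n\to\infty$. This is the key fact that lets us control the average even though $g_1(\v_i)$ and $g_2(\v_i)$ are dependent in each term, and the $g_1(\v_i)$'s across $i$ may themselves be dependent through the extra random variables that $g_1$ depends on.

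The main step is to bound the first absolute moment of the average and apply Markov's inequality. Concretely, I would split each summand by whether $|g_2(\v_i)|$ exceeds a threshold $M$:
\begin{align*}
\Exp\bigg|\onen\sumn g_1(\v_i)g_2(\v_i)\bigg|
&\le \onen\sumn \Exp|g_1(\v_i)|\,|g_2(\v_i)|I(|g_2(\v_i)|\le M) \\
&\quad+ \onen\sumn \Exp|g_1(\v_i)|\,|g_2(\v_i)|I(|g_2(\v_i)|>M) \\
&\le M\,\Exp|g_1(\v_1)| + C\,\Exp\{|g_2(\v)|I(|g_2(\v)|>M)\},
\end{align*}
using that the $\v_i$ are identically distributed and the uniform bound $|g_1|\le C$.

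Given $\epsilon>0$, I would first choose $M$ large enough that the tail term $C\,\Exp\{|g_2(\v)|I(|g_2(\v)|>M)\}<\epsilon/2$, which is possible since $\Exp|g_2(\v)|<\infty$. With $M$ fixed, the first term $M\,\Exp|g_1(\v_1)|$ tends to $0$ as $n\to\infty$ by the bounded convergence argument above, so it is less than $\epsilon/2$ for all sufficiently large $n$. Hence the expected absolute value of the average tends to zero, and Markov's inequality yields $\onen\sumn g_1(\v_i)g_2(\v_i)\cvp 0$, i.e., $\op$.

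The only subtle point is justifying $\Exp|g_1(\v_1)|\to 0$: since $g_1$ may depend on $n$ and on auxiliary random variables, one must treat $g_1(\v_1)$ as a sequence of bounded random variables (not just deterministic functions) that tends to $0$ in probability, and then invoke the dominated/bounded convergence theorem to get $L^1$ convergence. Once that is in hand, everything else is a routine truncation. I do not foresee any serious obstacle beyond this careful handling of the randomness inside $g_1$.
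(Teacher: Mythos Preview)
Your proposal is correct and follows essentially the same approach as the paper's own proof: both apply Markov's inequality to the average, truncate on $|g_2|$ at a level $M$ (the paper uses $K$), bound the tail piece by the constant bound on $g_1$ times $\Exp\{|g_2(\v)|I(|g_2(\v)|>M)\}$, and handle the truncated piece via $M\,\Exp|g_1(\v_1)|\to 0$, which the paper justifies exactly as you do by noting that boundedness together with $g_1(\v_i)=\op$ forces $\Exp|g_1(\v)|=o(1)$. The only cosmetic difference is that the paper applies Markov first and then splits, whereas you split the expectation first and then invoke Markov.
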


\begin{lemma}\label{lem2}
  Let $\eeta_i=|\psi_i(\hbeta_0)|\psi_i(\bbeta_t-\hbeta_0)h(\x_i)\x_i$, where $\psi_i(\bbeta)=y_i-p(\x_i,\bbeta)$. Under Assumptions~\ref{as:1} and \ref{as:2}, conditional on the consistent $\hbeta_0$, if $n_0/\sqrt{N}\rightarrow0$, then
  \begin{align*}
  \sqrt{N}(\hbeta_{\wmle}-\bbeta_t)
    =&\frac{\bSigma_{\bbeta_t}}
       {2\Exp\{\phi(\bbeta_t)h(\x)\}}
     \frac{1}{\sqrt{N}}\sumN\eeta_i+\op,
\end{align*}
which converges in distribution to a normal distribution with mean $\0$ and variance-covariance matrix $[\Exp\{\phi(\bbeta_t)h(\x)\x\x\tp\}]^{-1}
    \Exp\{\phi(\bbeta_t)h^2(\x)\x\x\}
    [\Exp\{\phi(\bbeta_t)h(\x)\x\x\tp\}]^{-1}$, as $n_0$ and $N$ go to infinity. 
\end{lemma}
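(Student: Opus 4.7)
The plan is to treat $\hbeta_{wf}$ as an M-estimator and apply the standard Taylor-expansion argument, carefully handling the randomness of the pilot $\hbeta_1$. Writing $L(\bbeta)$ for the weighted objective whose maximizer is $\hbeta_{wf}$, its score is
$$\dot L(\bbeta)=\sum_{i=1}^{N}|\psi_i(\hbeta_1)|h(\x_i)[y_i-p_i(\bbeta-\hbeta_1)]\x_i,$$
so $\dot L(\bbeta_t)=\sum_{i=1}^{N}\eta_i$. The objective is concave in $\bbeta$, and its population counterpart (with $\hbeta_1$ replaced by its limit $\bbeta_t$) is uniquely maximized at $\bbeta_t$ under Assumption~\ref{as:1}; standard M-estimation arguments therefore give $\hbeta_{wf}\cvp\bbeta_t$. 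A first-order Taylor expansion of $\0=\dot L(\hbeta_{wf})$ around $\bbeta_t$ then produces
$$\sqrt{N}(\hbeta_{wf}-\bbeta_t)=-[N^{-1}\ddot L(\tilde\bbeta)]^{-1}\frac{1}{\sqrt{N}}\sum_{i=1}^{N}\eta_i$$
for some intermediate $\tilde\bbeta$. The Hessian $\ddot L(\bbeta)=-\sum|\psi_i(\hbeta_1)|h(\x_i)\phi_i(\bbeta-\hbeta_1)\x_i\x_i\tp$ at $\bbeta=\bbeta_t$, combined with $\phi_i(\0)=1/4$, the Bernoulli identity $\Exp[|\psi(\bbeta_t)|\mid\x]=2\phi(\bbeta_t)$, Lemma~\ref{lem1}, and the LLN, converges after normalization to $-\tfrac12\Exp\{\phi(\bbeta_t)h(\x)\x\x\tp\}$, so the matrix coefficient in the linearization equals $2[\Exp\{\phi(\bbeta_t)h(\x)\x\x\tp\}]^{-1}=\bSigma_{\bbeta_t}/[2\Phi(\bbeta_t)]$, which is exactly the factor in the stated expansion.

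The crucial observation for the central-limit step is the exact identity
$$\Exp[|\psi_i(\hbeta_1)|\psi_i(\bbeta_t-\hbeta_1)\mid\x_i,\hbeta_1]=0,$$
which holds because the two contributions $p_i(\bbeta_t)[1-p_i(\hbeta_1)][1-p_i(\bbeta_t-\hbeta_1)]$ and $[1-p_i(\bbeta_t)]p_i(\hbeta_1)p_i(\bbeta_t-\hbeta_1)$ both simplify to $e^{\x_i\tp\bbeta_t}/\{(1+e^{\x_i\tp\bbeta_t})(1+e^{\x_i\tp\hbeta_1})(1+e^{\x_i\tp(\bbeta_t-\hbeta_1)})\}$. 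Splitting $\sum\eta_i=\sum_{i\in I_1^c}\eta_i+\sum_{i\in I_1}\eta_i$ over the pilot index set $I_1$ and its complement, the non-pilot terms are i.i.d.\ conditional on $\hbeta_1$ with mean $\0$ and conditional variance converging to $\tfrac14\Exp\{\phi(\bbeta_t)h^2(\x)\x\x\tp\}$ (the finiteness guaranteed by Assumption~\ref{as:2}), so a triangular-array Lindeberg CLT yields the desired Gaussian limit. The pilot contribution is controlled crudely by $N^{-1/2}\sum_{i\in I_1}|\eta_i|\le N^{-1/2}\sum_{i\in I_1}h(\x_i)\|\x_i\|=O_P(n_1/\sqrt{N})=\op$ under $n_1/\sqrt{N}\to0$.

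The main obstacle is the statistical dependence between $\hbeta_1$ and the full data: the clean conditional-mean identity above does \emph{not} apply to the $n_1$ pilot observations, since their responses are not Bernoulli$(p_i(\bbeta_t))$ given $\hbeta_1$, and it is precisely the rate $n_1=o(\sqrt{N})$ that renders their aggregate contribution to the score negligible. Combining the Hessian limit, the score CLT, and Slutsky's theorem produces the stated linearization and the normal limit with asymptotic covariance $(\bSigma_{\bbeta_t}/[2\Phi(\bbeta_t)])\cdot\tfrac14\Exp\{\phi(\bbeta_t)h^2(\x)\x\x\tp\}\cdot(\bSigma_{\bbeta_t}/[2\Phi(\bbeta_t)])=\bSigma_{wf}$, which agrees with the target after using $\bSigma_{\bbeta_t}=4\Phi(\bbeta_t)[\Exp\{\phi(\bbeta_t)h(\x)\x\x\tp\}]^{-1}$.
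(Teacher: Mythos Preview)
Your proposal is correct and follows essentially the same route as the paper's proof: the key mean-zero identity $\Exp[\eta_i\mid\x_i,\hbeta_1]=0$ is verified in the same way, the $n_1$ pilot terms are discarded via the same crude bound $\|\eta_i\|\le h(\x_i)\|\x_i\|$ and the hypothesis $n_1/\sqrt{N}\to0$, the Hessian is handled by Lemma~\ref{lem1} together with $\phi_i(\bbeta_t-\hbeta_1)\to1/4$ and $\Exp[|\psi(\bbeta_t)|\mid\x]=2\phi(\bbeta_t)$, and a Lindeberg--Feller CLT is applied to the non-pilot summands. The only difference is the linearization device: the paper uses the Hjort--Pollard ``Basic Corollary'' (quadratic approximation of the localized objective $\gamma_{wf}(\s)=\lambda_{wf}(\bbeta_t+\s/\sqrt{N})-\lambda_{wf}(\bbeta_t)$), which delivers the expansion without a separate consistency step, whereas you use the classical Taylor expansion of the score equation and must first argue $\hbeta_{wf}\cvp\bbeta_t$ and control the Hessian at the intermediate point $\tilde\bbeta$; both are standard and lead to the same conclusion.
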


\begin{lemma}\label{lem3}
  Let
  \begin{equation*}
  \dot\lambda_{uw}^*(\bbeta_t)
  =\sumn\big\{y_i^*-p_i^*(\bbeta_t-\hbeta_0)\big\}\x_i^*.
\end{equation*}
  Under Assumptions~\ref{as:1} and \ref{as:2}, conditional on $\Fn$ and the consistent $\hbeta_0$, as $n_0$, $n$, and $N$ go to infinity,
\begin{align*}
  \frac{\dot\lambda_{uw}^*(\bbeta_t)}{\sqrt{n}}
  -\frac{\sqrt{n}\sumN\eeta_i}{N\Psi_N(\hbeta_0)}
  \longrightarrow \Nor\big(\0,\ \bSigma_{\bbeta_t}^{-1}\big),
\end{align*}
in distribution.
\end{lemma}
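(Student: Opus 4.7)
The plan is to recognize $\dot\lambda_r^*(\bbeta_t) = \sum_{i=1}^n \xi_i^*$ where $\xi_i^* = \{y_i^*-p_i^*(\bbeta_t-\hbeta_1)\}\x_i^*$, and apply a Lindeberg--Feller central limit theorem conditionally on $(\Fn,\hbeta_1)$, exploiting the fact that the $\xi_i^*$ are conditionally i.i.d.\ draws (with replacement). The first step is a direct calculation of the conditional mean, using $\pi_j^{\os}(\hbeta_1) = |y_j-p_j(\hbeta_1)|h(\x_j)/\{N\Psi_N(\hbeta_1)\}$:
\begin{equation*}
  \Exp(\xi_i^*\mid\Fn,\hbeta_1)
  =\sum_{j=1}^N\pi_j^{\os}(\hbeta_1)\{y_j-p_j(\bbeta_t-\hbeta_1)\}\x_j
  =\frac{1}{N\Psi_N(\hbeta_1)}\sumN\eta_j,
\end{equation*}
which identifies the subtracted term as $\Exp\{\dot\lambda_r^*(\bbeta_t)/\sqrt n\mid\Fn,\hbeta_1\}$ and reduces the lemma to a CLT for the centered sum $n^{-1/2}\sum_i\{\xi_i^*-\Exp(\xi_i^*\mid\Fn,\hbeta_1)\}$.

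Second, I would compute the conditional variance. Since the $\xi_i^*$ are conditionally i.i.d., $\Var(\dot\lambda_r^*(\bbeta_t)/\sqrt n\mid\Fn,\hbeta_1)=\Var(\xi_1^*\mid\Fn,\hbeta_1)$, whose dominant piece is
\begin{equation*}
  \Exp(\xi_1^*\xi_1^{*\tp}\mid\Fn,\hbeta_1)
  =\frac{1}{N\Psi_N(\hbeta_1)}\sumN|y_j-p_j(\hbeta_1)|h(\x_j)\{y_j-p_j(\bbeta_t-\hbeta_1)\}^2\x_j\x_j\tp.
\end{equation*}
Using consistency of $\hbeta_1$, Lemma~\ref{lem1} lets me replace $p_j(\hbeta_1)$ by $p_j(\bbeta_t)$ and $p_j(\bbeta_t-\hbeta_1)$ by $p_j(\0)=1/2$ at negligible cost (the $\op$ remainders are absorbed because $\Exp\{\|\x\|^2 h(\x)\}<\infty$ by Assumption~\ref{as:2}). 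The strong law of large numbers then sends this average to $\Exp\{|\psi(\bbeta_t)|h(\x)\x\x\tp\}/\{4\,2\Phi(\bbeta_t)\}$. Finally, conditioning on $\x$ gives $\Exp\{|\psi(\bbeta_t)|\mid\x\}=2\phi(\bbeta_t)$, yielding the limit $\Exp\{\phi(\bbeta_t)h(\x)\x\x\tp\}/\{4\Phi(\bbeta_t)\}=\bSigma_{\bbeta_t}^{-1}$. The squared-mean subtraction $\Exp(\xi_1^*\mid\Fn,\hbeta_1)^{\otimes 2}$ is of order $O_P(N^{-1})$ by Lemma~\ref{lem2} and thus vanishes.

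Third, I would verify the conditional Lindeberg condition: for every $\epsilon>0$,
\begin{equation*}
  \Exp\!\left(\|\xi_1^*\|^2\,I(\|\xi_1^*\|>\epsilon\sqrt n)\mid\Fn,\hbeta_1\right)\cvpfb 0.
\end{equation*}
Because $|y_i^*-p_i^*(\bbeta_t-\hbeta_1)|\le 1$, we have $\|\xi_1^*\|\le\|\x_1^*\|$, and the conditional expectation rewrites as $\{N\Psi_N(\hbeta_1)\}^{-1}\sumN|y_j-p_j(\hbeta_1)|h(\x_j)\|\x_j\|^2 I(\|\x_j\|^2>n\epsilon^2)$. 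The denominator tends to $2\Phi(\bbeta_t)>0$, and the numerator is bounded by $N^{-1}\sumN h(\x_j)\|\x_j\|^2 I(\|\x_j\|^2>n\epsilon^2)$, whose expectation is $\Exp\{h(\x)\|\x\|^2 I(\|\x\|^2>n\epsilon^2)\}\to 0$ by dominated convergence under Assumption~\ref{as:2}. Applying the Lindeberg--Feller CLT conditionally yields the claimed Gaussian limit with covariance $\bSigma_{\bbeta_t}^{-1}$.

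The main obstacle will be keeping the conditioning straight: the summands $\xi_i^*$ depend on the random pilot $\hbeta_1$ and the full data, so every ``$\op$'' or LLN statement inside the variance and Lindeberg steps must be justified in the conditional sense, with the randomness of $\hbeta_1$ propagated through. This is exactly where Lemma~\ref{lem1} (which allows replacing asymptotically vanishing bounded factors inside averages) does the heavy lifting, while the $o_P(1)$-in-conditional-probability equivalence from \cite{xiong2008some} cited earlier lets me conclude the unconditional-probability statements needed to pass to the limit.
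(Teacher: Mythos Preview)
Your proposal is correct and follows essentially the same route as the paper: recognize the conditional i.i.d.\ structure of the summands given $(\Fn,\hbeta_1)$, compute the conditional mean (matching the centering term via $\pi_j^{\os}(\hbeta_1)$), use Lemma~\ref{lem1} plus the law of large numbers to identify the conditional variance as $\bSigma_{\bbeta_t}^{-1}$ (with the squared-mean piece $O_P(N^{-1})$ by Lemma~\ref{lem2}), and verify the Lindeberg condition by the bound $\|\xi_i^*\|\le\|\x_i^*\|$ together with $\Exp\{h(\x)\|\x\|^2\}<\infty$. The only cosmetic difference is that where you invoke dominated convergence directly for the Lindeberg tail, the paper phrases the same step as another application of Lemma~\ref{lem1}.
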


\begin{lemma}\label{lem4}
  Under Assumptions~\ref{as:1}-\ref{as:3}, as $n_0$, $n$, and $N$ go to infinity, for any $\s_n\rightarrow0$ in probability, 
  \begin{align*}
  &\onen\sumn\phi_i^*(\bbeta_t-\hbeta_0+\s_n)\|\x_i^*\|^2
    -\sumN\pi_i(\hbeta_0)\phi_i(\bbeta_t-\hbeta_0)\|\x_i\|^2
    =\op.
\end{align*}
\end{lemma}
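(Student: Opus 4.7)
The plan is to decompose the difference as $A_n+B_n$, where
\begin{align*}
A_n&=\onen\sumn\phi_i^*(\bbeta_t-\hbeta_1+\s_n)\|\x_i^*\|^2
-\sumN\pi_i(\hbeta_1)\phi_i(\bbeta_t-\hbeta_1+\s_n)\|\x_i\|^2,\\
B_n&=\sumN\pi_i(\hbeta_1)\{\phi_i(\bbeta_t-\hbeta_1+\s_n)-\phi_i(\bbeta_t-\hbeta_1)\}\|\x_i\|^2.
\end{align*}
The first piece is the subsample average minus its conditional expectation given $(\Fn,\hbeta_1)$, while the second measures the smoothness of the conditional mean under the $\s_n$-shift. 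I will show each is $\op$; combined with $\Psi_N(\hbeta_1)\cvp 2\Phi(\bbeta_t)>0$ (which follows from Assumption~\ref{as:1} and keeps the $\pi_i(\hbeta_1)$'s well-behaved), this gives the claim.

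For $B_n$, I use $\phi\in(0,1/4]$ together with $\|\partial\phi_i/\partial\bbeta\|\le C\|\x_i\|$ for a universal constant $C$, which yields the pointwise estimate $|\phi_i(\bbeta_t-\hbeta_1+\s_n)-\phi_i(\bbeta_t-\hbeta_1)|\le\min\{C\|\s_n\|\|\x_i\|,\,1/2\}$. I then split the sum according to $\|\x_i\|\le M_n$ or $\|\x_i\|> M_n$ with $M_n\to\infty$ slow enough that $\|\s_n\|M_n\to 0$. The small-norm contribution is bounded by $CM_n\|\s_n\|\sumN\pi_i(\hbeta_1)\|\x_i\|^2=o(1)\cdot\Op$, where the sum is $\Op$ by Assumption~\ref{as:2} after expanding $\pi_i(\hbeta_1)=|\psi_i(\hbeta_1)|h(\x_i)/\{N\Psi_N(\hbeta_1)\}$ and using $|\psi_i(\hbeta_1)|\le 1$; the large-norm contribution is at most $\tfrac{1}{2}\sumN\pi_i(\hbeta_1)\|\x_i\|^2I(\|\x_i\|>M_n)$, whose unconditional expectation vanishes as $M_n\to\infty$ by dominated convergence applied to $\Exp\{h(\x)\|\x\|^2\}<\infty$.

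For $A_n$, conditional on $(\Fn,\hbeta_1)$ the subsample is i.i.d., so Chebyshev's inequality reduces matters to the conditional variance, which is at most $\tfrac{1}{16n}\sumN\pi_i(\hbeta_1)\|\x_i\|^4$. The main difficulty is that $\Exp\{h(\x)\|\x\|^4\}$ is not assumed finite, so this bound is not directly $o(1)$. I handle this by truncating at $\|\x_i\|^2\le\sqrt{n}$: on the truncated set the variance bound improves to $\tfrac{1}{16\sqrt{n}}\sumN\pi_i(\hbeta_1)\|\x_i\|^2=o(1)\cdot\Op$, so conditional Chebyshev handles the truncated part of $A_n$; on the complementary tail, the conditional mean $\sumN\pi_i(\hbeta_1)\phi_i\|\x_i\|^2I(\|\x_i\|^2>\sqrt{n})$ is $\op$ by Assumption~\ref{as:2} and dominated convergence, and conditional Markov then yields that the untruncated empirical sum $\onen\sumn\phi_i^*\|\x_i^*\|^2 I(\|\x_i^*\|^2>\sqrt n)$ is itself $\op$. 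Assembling these pieces gives $A_n=\op$, hence $A_n+B_n=\op$ as required. The main obstacle throughout is the possible heavy-tailedness of $\|\x\|^4$, bypassed by this moderate truncation; Assumption~\ref{as:3} is not invoked within Lemma~\ref{lem4} itself and will enter elsewhere in the proof of Theorem~\ref{thm:1}.
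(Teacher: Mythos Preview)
Your argument is correct and genuinely different from the paper's. The paper truncates at $\|\x_i^*\|^2\le n$, bounds the conditional variance of the truncated piece by $\tfrac{1}{16n}\Exp\{\|\x^*\|^4 I(\|\x^*\|^2\le n)\mid\Fn,\hbeta_1\}$, and then controls this fourth-moment term via an Abel-summation trick that ultimately invokes Assumption~\ref{as:3}, namely $n\Exp\{h(\x)I(\|\x\|^2>n)\}\to 0$. Your cruder truncation at $\|\x_i^*\|^2\le\sqrt{n}$ buys the direct inequality $\|\x\|^4 I(\|\x\|^2\le\sqrt{n})\le\sqrt{n}\,\|\x\|^2$, so the variance bound collapses to $O_P(n^{-1/2})$ using only $\Exp\{h(\x)\|\x\|^2\}<\infty$ from Assumption~\ref{as:2}; the tail at $\|\x\|^2>\sqrt{n}$ is then handled by Lemma~\ref{lem1} with $g_1=I(\|\x\|^2>\sqrt{n})$ and $g_2=h(\x)\|\x\|^2$. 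Your treatment of $B_n$ via the Lipschitz-plus-boundedness splitting is also clean and replaces the paper's more implicit handling of the $\s_n$-shift.

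One correction to your closing remark: in the paper, Assumption~\ref{as:3} is invoked \emph{precisely} in the proof of Lemma~\ref{lem4} and nowhere else in the proof of Theorem~\ref{thm:1} (Lemmas~\ref{lem2} and~\ref{lem3} use only Assumptions~\ref{as:1}--\ref{as:2}). So your argument does not merely relocate the use of Assumption~\ref{as:3}; it shows that Lemma~\ref{lem4}, and hence Theorem~\ref{thm:1}, holds under Assumptions~\ref{as:1}--\ref{as:2} alone. That is a strengthening worth flagging.
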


\begin{proof}{\bf of Theorem~\ref{thm:1}.}
  The estimator $\hbeta_{uw}$ is the maximizer of 
\begin{align*}
  \lambda_{uw}^*(\bbeta)=\sumn\Big[(\bbeta-\hbeta_0)\tp\x_i^*y_i^*
  -\log\big\{1+e^{(\bbeta-\hbeta_0)\tp\x_i^*}\big\}\Big],
\end{align*}
so $\sqrt{n}(\hbeta_{uw}-\bbeta_t)$ is the maximizer of $\gamma(\s)=\lambda_{uw}^*(\bbeta_t+\s/\sqrt{n})-\lambda_{uw}^*(\bbeta_t)$. By Taylor's expansion,
\begin{align*}
  \gamma(\s)
  &=\frac{1}{\sqrt{n}}\s\tp\dot\lambda_{uw}^*(\bbeta_t)
    +\frac{1}{2n}\sumn
    \phi_i^*(\bbeta_t-\hbeta_0+\Acute\s/\sqrt{n})
    (\s\tp\x_i^*)^2,
\end{align*}
where $\phi_i^*(\bbeta)=p_i^*(\bbeta)\{1-p_i^*(\bbeta)\}$, and $\Acute\s$ lies between $\0$ and $\s$.

From Lemma~\ref{lem4},
\begin{align*}
  &\onen\sumn\phi_i^*(\bbeta_t-\hbeta_0+\Acute\s/\sqrt{n})
    \x_i^*(\x_i^*)\tp
    -\sumN\pi_i(\hbeta_0)\phi_i(\bbeta_t-\hbeta_0)\x_i\x_i\tp=\op.
\end{align*}
From Lemma~\ref{lem1} and the law of large numbers,
\begin{align*}
  \sumN\pi_i(\hbeta_0)\phi_i(\bbeta_t-\hbeta_0)\x_i\x_i\tp
  =\frac{\oneN\sumN|\psi_i(\hbeta_0)|h(\x_i)
    \phi_i(\bbeta_t-\hbeta_0)\x_i\x_i\tp}{\Psi_N(\hbeta_0)}
  =\bSigma_{\bbeta_t}^{-1}+\op.
\end{align*}
Combining the above two equations, we have that $n^{-1}\sumn\phi_i^*(\bbeta_t-\hbeta_0+\Acute\s/\sqrt{n})\x_i^*(\x_i^*)\tp$ converges in probability to $\bSigma_{\bbeta_t}^{-1}$, a positive definite matrix. In addition, from Lemma~\ref{lem2} and Lemma~\ref{lem3}, $\dot\lambda_{uw}^*(\bbeta_t)/\sqrt{n}$ is stochastically bounded.
Thus, from the Basic Corollary in page 2 of \cite{hjort2011asymptotics}, the maximizer of $\gamma(\s)$, $\sqrt{n}(\hbeta_{uw}-\bbeta_t)$, satisfies
\begin{align*}
  \sqrt{n}(\hat\bbeta_{uw}-\bbeta_t)
  =\bSigma_{\bbeta_t}\frac{1}{\sqrt{n}}
  \dot\lambda_{uw}^*(\bbeta_t)+\op%
\end{align*}
given $\Fn$ and $\hbeta_0$. Thus, 
\begin{align}\label{eq:10}
  \sqrt{n}(\hat\bbeta_{uw}-\hbeta_{\wmle})
  =\bSigma_{\bbeta_t}
  \bigg\{\frac{1}{\sqrt{n}}\dot\lambda_{uw}^*(\bbeta_t)
  -\bSigma_{\bbeta_t}^{-1}\sqrt{n}(\hbeta_{\wmle}-\bbeta_t)\bigg\}+\op.
\end{align}
From Lemma~\ref{lem2},
  \begin{align}\label{eq:11}
  \bSigma_{\bbeta_t}^{-1}\sqrt{n}(\hbeta_{\wmle}-\bbeta_t)
  =\frac{\sqrt{n}\sumN\eeta_i}{2N\Exp\{\phi(\bbeta_t)h(\x)\}}
  =&\frac{\sqrt{n}\sumN\eeta_i}{N\Psi_N(\hbeta_0)}+\op,
\end{align}

Combining equations \eqref{eq:10} and \eqref{eq:11}, Lemma~\ref{lem3},  Slutsky's theorem, and the fact that a conditional probability is bounded by one, Theorem 1 follows.
\end{proof}

\subsubsection{Proof of Proposition~\ref{prop1}}
\begin{proof}{\bf of Proposition~\ref{prop1}.}
  To prove that $\bSigma_{\bbeta_t}\le \V^\os=\M^{-1}\V_{c}^\os\M^{-1}$, we just need to show that
  \begin{align*}
    \bSigma_{\hbeta_{\mle}}^{-1}\ge\M(\V_c^\os)^{-1}\M.
  \end{align*}
  From the strong law of large numbers,
\begin{align*}
  \M&=\oneN\sumN\phi_i(\bbeta_t)\x_i\x_i\tp+o(1),\\
  \V_{c}^\os&=4\Phi(\bbeta_t)
    \oneN\sumN\frac{\phi_i(\bbeta_t)\x_i\x_i\tp}{h(\x_i)}+o(1),\\
  \bSigma_{\bbeta_t}&=4\Phi(\bbeta_t)
  \bigg\{\oneN\sumN\phi_i(\bbeta_t)h(\x_i)\x_i\x_i\tp\bigg\}^{-1}+o(1),
\end{align*}
almost surely. Thus, we only need to verify that
\begin{equation*}%
  \sumN\phi_i(\bbeta_t)h(\x_i)\x_i\x_i\tp
  \ge\bigg\{\sumN\phi_i(\bbeta_t)\x_i\x_i\tp\bigg\}
  \bigg\{\sumN\frac{\phi_i(\bbeta_t)\x_i\x_i\tp}{h(\x_i)}\bigg\}^{-1}
  \bigg\{\sumN\phi_i(\bbeta_t)\x_i\x_i\tp\bigg\}.
\end{equation*}

Denote $\Z=\big\{\sqrt{\phi_1(\bbeta_t)}\x_1, ...,
\sqrt{\phi_N(\bbeta_t)}\x_N\big\}\tp$,
and $\H=\mathrm{diag}\{h(\x_1), ..., h(\x_N)\}$. The above inequality %
can be written as
\begin{align*}
  \Z\tp\H\Z \ge& \Z\tp\Z (\Z\tp\H^{-1}\Z)^{-1} \Z\tp\Z,
\end{align*}
which is true if 
\begin{equation}\label{eq:89}
  \H \ge \Z(\Z\tp\H^{-1}\Z)^{-1}\Z\tp
\end{equation}
Note that  $(\H^{-1/2}\Z)\{(\H^{-1/2}\Z)(\H^{-1/2}\Z)\tp\}^{-1}(\H^{-1/2}\Z)\tp$ is the projection matrix of $\H^{-1/2}\Z$, so it is, under the Loewner ordering, smaller than or equal to the identity matrix $\I_N$, namely,
\begin{equation*}
  \I_N\ge\H^{-1/2}\Z(\Z\H^{-1}\Z\tp)^{-1}\Z\tp\H^{-1/2},
\end{equation*}
which implies \eqref{eq:89}. If $h(\x)=1$, the equality can be verified directly.

The first inequality in \eqref{eq:60} can be verified directly using the result in \eqref{eq:12}. For the second inequality in \eqref{eq:60}, noting that $h(\x)=\|L\M^{-1}\x\|$, by the Cauchy–Schwarz inequality, we have
\begin{align*}
  \tr(L\V^\os L\tp)
  &=\tr(L\M^{-1}\V_{c}^\os\M^{-1} L\tp)\\
  &=4\Phi(\bbeta_t)\tr\big[L\M^{-1}
    \Exp\big\{\phi(\bbeta_t)h^{-1}(\x)
    \x\x\tp\big\}\M^{-1}L\tp\big]\\
  &=4\Exp\{\phi(\bbeta_t)h(\x)\}
  \Exp\big\{\phi(\bbeta_t)h^{-1}(\x)\|L\M^{-1}\x\|^2\big\}\\
  &=4\big[\Exp\big\{\phi(\bbeta_t)\|L\M^{-1}\x\|\big\}\big]^2\\
  &\le4\Exp\{\phi(\bbeta_t)\}
    \Exp\big\{\phi(\bbeta_t)\|L\M^{-1}\x\|^2\big\}\\
  &=4\Exp\{\phi(\bbeta_t)\}
    \tr\big[L\M^{-1}\Exp\big\{\phi(\bbeta_t)\x\x\tp\big\}
    \M^{-1}L\tp\big]\\
  &=4\Exp\{\phi(\bbeta_t)\}\tr(L\M^{-1}L\tp)\\
  &<\tr(L\M^{-1}L\tp),
\end{align*}
which finishes the proof.
\end{proof}

\subsubsection{Proof of Lemma \ref{lem1}}
\label{sec:proof-lemma-reflem1}
\begin{proof}{\bf of Lemma~\ref{lem1}.}
  Let $B$ be a bound for $g_{1N}$ i.e., $|g_{1N}|\le B$. For any $\epsilon>0$, by Markov's inequality,
  \begin{align*}
    &\Pr\bigg\{\Big|\oneN\sumN g_{1N}(\v_i)g_2(\v_i)\Big|>\epsilon\bigg\}
    \le\frac{\Exp|g_{1N}(\v)g_2(\v)|}{\epsilon}\\
    &=\frac{\Exp[|g_{1N}(\v)||g_2(\v)|I\{|g_2(\v)|\le K\}]}{\epsilon}
    +\frac{\Exp[|g_{1N}(\v)||g_2(\v)|I\{|g_2(\v)|> K\}]}{\epsilon}\\
    &\le\frac{K}{\epsilon}\Exp|g_{1N}(\v)|
    +\frac{B}{\epsilon}\Exp\{|g_2(\v)|I(|g_2(\v)|>K)\}.
  \end{align*}
  For any $\zeta>0$, we can choose a $K$ large enough such that $\Exp\{|g_2(\v)|I(|g_2(\v)|\le K)\}<\zeta\epsilon/(2B)$, since $\Exp|g_2(\v)|<\infty$. The facts that $g_{1N}(\v_i)\le B$ and $g_{1N}(\v_i)=\op$ imply that $\Exp|g_{1N}(\v)|=o(1)$. Thus, there is a $N_\zeta$ such that $\Exp|g_{1N}(\v)|<\zeta\epsilon/(2K)$ when $N>N_\zeta$. Therefore, for any $\zeta>0$, $\Pr\{|N^{-1}\sumN g_{1N}(\v_i)g_2(\v_i)|>\epsilon\}<\zeta$ for sufficiently large $N$. This finishes the proof. 
\end{proof}

\subsubsection{Proof of Lemma \ref{lem2}}
\label{sec:proof-lemma-reflem2}
\begin{proof}{\bf of Lemma~\ref{lem2}.}
  Since $\hbeta_{\wmle}$ is the maximizer of
  \begin{align}
  \lambda_{\wmle}(\bbeta)\notag=\sumN|y_i-p(\x_i,\hbeta_0)|h(\x_i)
    \big[y_i\x_i\tp(\bbeta-\hbeta_0)
  -\log\{1+e^{\x_i\tp(\bbeta-\hbeta_0)}\}\big],
\end{align}
{$\sqrt{N}(\hbeta_{\wmle}-\bbeta_t)$ is the maximizer of} $\gamma_{\wmle}(\s)=\lambda_{\wmle}(\bbeta_t+\s/\sqrt{N})-\lambda_{\wmle}(\bbeta_t)$. By Taylor's expansion,
\begin{align*}
  \gamma_{\wmle}(\s)
  &=\frac{1}{\sqrt{N}}\s\tp\dot\lambda_{\wmle}(\bbeta_t)
    -\frac{1}{2N}\sumN|y_i-p(\x_i,\hbeta_0)|h(\x_i)
    \phi_i(\bbeta_t-\hbeta_0+\Acute\s/\sqrt{N})(\s\tp\x_i)^2
\end{align*}
where
\begin{align}\label{eq:8}
  \dot\lambda_{\wmle}(\bbeta_t)=\sumN\eeta_i
  =\sumN |\psi_i(\hbeta_0)|\psi_i(\bbeta_t-\hbeta_0)h(\x_i)\x_i,
\end{align}
and $\Acute\s$ lies between $\0$ and $\s$.

{
Since that $n_0/\sqrt{N}\rightarrow0$ and $\|\eeta_i\|$ is bounded by $\|h(\x_i)\x_i\|$, we can ignore the data points that are used in obtaining the pilot $\hbeta_0$. The following discussion focus on $\eeta_i$'s for which $(\x_i,y_i)$'s are not included in the pilot subsample.} Note that for such $\eeta_i$'s, $\Exp(\eeta_i|\hbeta_0)=\0$ because
\begin{align}
  \Exp(\eeta_i|\hbeta_0,\x_i)
  =&\Exp\Big[|\psi_i(\hbeta_0)|
    \big\{y_i-p(\x_i,\bbeta_t-\hbeta_0)\big\}h(\x_i)\x_i
    \Big|\hbeta_0,\x_i\Big]\notag\\
  =&-p(\x_i,\hbeta_0)
    p(\x_i,\bbeta_t-\hbeta_0)\{1-p(\x_i,\bbeta_t)\}h(\x_i)\x_i\notag\\
    &+\{1-p(\x_i,\hbeta_0)\}
    \{1-p(\x_i,\bbeta_t-\hbeta_0)\}p(\x_i,\bbeta_t)h(\x_i)\x_i\notag\\
  =&-\frac{e^{\x_i\tp\hbeta_0}}{1+e^{\x_i\tp\hbeta_0}}
    \frac{e^{\x_i\tp\bbeta_t-\x_i\tp\hbeta_0}}
    {1+e^{\x_i\tp\bbeta_t-\x_i\tp\hbeta_0}}
    \frac{1}{1+e^{\x_i\tp\bbeta_t}}h(\x_i)\x_i\notag\\
  &+\frac{1}{1+e^{\x_i\tp\hbeta_0}}
    \frac{1}{1+e^{\x_i\tp\bbeta_t-\x_i\tp\hbeta_0}}
    \frac{e^{\x_i\tp\bbeta_t}}{1+e^{\x_i\tp\bbeta_t}}h(\x_i)\x_i=0.\label{eq:22}
\end{align}
This also gives that
\begin{align*}
  \Var(\eeta_i|\hbeta_0)
  =\Exp\big\{\Var(\eeta_i|\x_i,\hbeta_0)\big|\hbeta_0\big\}
  +\Var\big\{\Exp(\eeta_i|\x_i,\hbeta_0)\big|\hbeta_0\big\}
  =\Exp\big\{\Var(\eeta_i|\x_i,\hbeta_0)\big|\hbeta_0\big\}.
\end{align*}
Now, since
\begin{align*}
  \Var(\eeta_i|\x_i,\hbeta_0)
  =&\Exp\Big[|y_i-p(\x_i,\hbeta_0)|^2
    \big\{y_i-p(\x_i,\bbeta_t-\hbeta_0)\big\}^2h^2(\x_i)\x_i\x_i\tp
    \Big|\hbeta_0,\x_i\Big]\\
  =&p(\x_i,\bbeta_t)\{1-p(\x_i,\hbeta_0)\}^2
     \big\{1-p(\x_i,\bbeta_t-\hbeta_0)\big\}^2h^2(\x_i)\x_i\x_i\tp\\
  &+\{1-p(\x_i,\bbeta_t)\}\{p(\x_i,\hbeta_0)\}^2
    \big\{p(\x_i,\bbeta_t-\hbeta_0)\big\}^2h^2(\x_i)\x_i\x_i\tp\\
  =&\phi_i(\hbeta_0)\phi_i(\hbeta_0-\bbeta_t)h^2(\x_i)\x_i\x_i\tp,
\end{align*}
we have
\begin{align*}
  \Var(\eeta_i|\hbeta_0)=\Exp\Big\{
  \phi(\hbeta_0)\phi(\hbeta_0-\bbeta_t)h^2(\x_i)\x_i\x_i\tp
  \Big|\hbeta_0\Big\}.
\end{align*}
Let $\|\|$ denote the Frobenius norm if applied on a martix, i.e., for a matrix $A$, $\|A\|^2=\tr(AA\tp)$, and denote 
$\Var(\eeta_i|\bbeta_t) = \Exp\{\phi(\bbeta_t)\phi(\bbeta_t-\bbeta_t)h^2(\x_i)\x_i\x_i\tp\}
=0.25\Exp\{\phi(\bbeta_t)h^2(\x_i)\x_i\x_i\tp\}$. 
Notice that $\big|\phi(\hbeta_0)\phi(\hbeta_0-\bbeta_t)
       -0.25\phi_i(\bbeta_t) \big|h^2(\x_i)\|\x_i\|^2$ converges to 0 in probability and it is bounded by $h^2(\x_i)\|\x_i\|^2$, an integrable random variable under Assumption~\ref{as:2}. Thus,
\begin{align*}
  &\Exp\big\|\Var(\eeta_i|\hbeta_0)-\Var(\eeta_i|\bbeta_t)\big\|
  \le\Exp\big\{\big|\{\phi(\hbeta_0)\phi(\hbeta_0-\bbeta_t)
    -0.25\phi_i(\bbeta_t) \big|h^2(\x_i)\|\x_i\|^2\big\}=o(1).
\end{align*}
 This implies that
\begin{align*}
  &\Var(\eeta_i|\hbeta_0)=\Var(\eeta_i|\bbeta_t)+\op
    =0.25\Exp\big\{\phi(\bbeta_t)h^2(\x_i)\x_i\x_i\big\}+\op.
\end{align*}

For $\eeta_i$'s that $(\x_i,y_i)$'s are not included in the pilot subsample, conditional on $\hbeta_0$, $\eeta_i$'s are i.i.d. with mean $\0$ and variance $\Var(\eeta_i|\hbeta_0)$. Since for any $\epsilon>0$, 
\begin{align*}
  \oneN\sumN\Exp\Big\{
    \|\eeta_i\|^2I(\|\eeta_i\|>\sqrt{N}\epsilon)\Big|\hbeta_0\Big\}
  \le&\oneN\sumN\Exp\Big\{\|h(\x_i)\x_i\|^2
       I(\|h(\x_i)\x_i\|>\sqrt{N}\epsilon)\Big|\hbeta_0\Big\}\\
  =&\Exp\{\|h(\x)\x\|^2
       I(\|h(\x)\x\|>\sqrt{N}\epsilon)\}\rightarrow0,
\end{align*}
the Lindeberg-Feller central limit theorem \citep[Section $^*$2.8 of][]{Vaart:98} applies conditional on $\hbeta_0$. Thus, we have, conditional on $\hbeta_0$, 
\begin{align*}
  \frac{\dot\lambda_{\wmle}(\bbeta_t)}{\sqrt{N}}
    \longrightarrow \Nor\bigg[\0,\
  \frac{\Exp\big\{\phi(\bbeta_t)h^2(\x)\x\x\big\}}{4}
  \bigg],
\end{align*}
in distribution. 
From Lemma~\ref{lem1}, conditional on $\hbeta_0$,  
\begin{align*}
 &\oneN\sumN|y_i-p(\x_i,\hbeta_0)|h(\x_i)
    \phi_i(\bbeta_t-\hbeta_0+\Acute\s/\sqrt{N})\x_i\x_i\tp\\
    &=\frac{1}{4}\Exp\{|\psi(\bbeta_t)|h(\x)\x\x\tp\}+\opb
    =\frac{1}{2}\Exp\{\phi(\bbeta_t)h(\x)\x\x\tp\}+\op.
\end{align*}
Thus, from the Basic Corollary in page 2 of \cite{hjort2011asymptotics}, the maximizer of $\gamma_{\wmle}(\s)$, $\sqrt{N}(\hbeta_{\wmle}-\bbeta_t)$, satisfies
\begin{align}\label{eq:7}
  \sqrt{N}(\hbeta_{\wmle}-\bbeta_t)
  =&2[\Exp\{\phi(\bbeta_t)h(\x)\x\x\tp\}]^{-1}
    \frac{1}{\sqrt{N}}\dot\lambda_{\wmle}(\bbeta_t)+\op.
\end{align}
Note that
\begin{equation}\label{eq:9}
  [\Exp\{\phi(\bbeta_t)h(\x)\x\x\tp\}]^{-1}
  =\frac{\bSigma_{\bbeta_t}}{4\Phi(\bbeta_t)}.
\end{equation}
Combining equations~\eqref{eq:8}, \eqref{eq:7}, and \eqref{eq:9}, we have
\begin{align*}
  \sqrt{N}(\hbeta_{\wmle}-\bbeta_t)
  =&\frac{\bSigma_{\bbeta_t}}{2\Phi(\bbeta_t)}
    \frac{1}{\sqrt{N}}\dot\lambda_{\wmle}(\bbeta_t)+\op.
\end{align*}
An application of Slutsky's theorem yields the result for the asymptotic normality. 

\end{proof}

\subsubsection{Proof of Lemma \ref{lem3}} %

\begin{proof}{\bf of Lemma~\ref{lem3}.}
  Note that given $\Fn$ and $\hbeta_0$, $\{y_i^*-p_i^*(\bbeta_t-\hbeta_0)\big\}\x_i^*$ are i.i.d. random vectors. We now exam their mean and variance, and check the Lindeberg-Feller condition \citep[Section $^*$2.8 of][]{Vaart:98} under the conditional distribution given $\Fn$ and $\hbeta_0$. 
For the expectation, we have, 
\begin{align*}
  &\Exp\big[\big\{y^*-p(\x_i^*,\bbeta_t-\hbeta_0)\big\}\x^*
    \big|\Fn,\hbeta_0\big]
  =\sumN\pi_i(\hbeta_0)\psi_i(\bbeta_t-\hbeta_0)\x_i
  =\frac{\sumN\eeta_i}{N\Psi_N(\hbeta_0)},
\end{align*}
where $\Psi_N(\bbeta)=N^{-1}\sumN|y_i-p(\x_i,\bbeta)|h(\x_i)$. 
From Lemma~\ref{lem2} and its proof, $\sumN\eeta_i=O_P(\sqrt{N})$ conditional on $\hbeta_0$ in probability, i.e., for any $\epsilon>0$, there exits a $K$ such that $\Pr\{\Pr(\sumN\eeta_i/\sqrt{N}>K\big|\hbeta_0)<\epsilon\}\rightarrow1$ as $n_0, N\rightarrow\infty$. %
From \cite{xiong2008some}, we know that $\sumN\eeta_i=O_P(\sqrt{N})$ unconditionally. Thus, for the expectation, we have
\begin{align*}
  \Delta
  =&\Exp\big[\big\{y^*-p(\x_i^*,\bbeta_t-\hbeta_0)\big\}\x^*\big|\Fn,\hbeta_0\big]
  =O_P(1/\sqrt{N}).
\end{align*}

For the variance,
\begin{align*}
  &\Var\big[\big\{y^*-p(\x_i^*,\bbeta_t-\hbeta_0)\big\}\x^*
    |\Fn,\hbeta_0\big]\\
  =&\sumN\pi_i(\hbeta_0)\{y_i-p(\x_i,\bbeta_t-\hbeta_0)\}^2
     \x_i\x_i\tp-\Delta^2\\
  =&\frac{\oneN\sumN|\psi_i(\hbeta_0)|
    \psi_i^2(\bbeta_t-\hbeta_0)h(\x_i)\x_i\x_i\tp}
    {\Psi_N(\hbeta_0)}-O_P(1/N)\\
  =&\frac{\oneN\sumN|\psi_i(\bbeta_t)|
    (y_i-0.5)^2h(\x_i)\x_i\x_i\tp}
    {\Psi_N(\bbeta_t)}+\op\\
  =&\frac{1}{4}\frac{\Exp\{|\psi(\bbeta_t)|h(\x)\x\x\tp\}}
     {\Psi(\bbeta_t)}+\op
  =\bSigma_{\bbeta_t}^{-1}+\op
\end{align*}
where the third equality is from Lemma~\ref{lem1} and the fact that $\Exp\{h(\x)\|\x\|^2\}<\infty$, and the forth equality is from the law of large numbers. 

Now we check the Lindeberg-Feller condition \citep[Section $^*$2.8 of][]{Vaart:98} under the conditional distribution. Denote $\dot\lambda_{ri}^*=\{y_i^*-p_i^*(\bbeta_t-\hbeta_0)\}\x_i^*$.
\begin{align*}
  &\Exp\onen\sumn\big\{\|\dot\lambda_{ri}^*\|^2
    I(\|\dot\lambda_{ri}^*\|>\sqrt{n}\epsilon)
    \big|\Fn,\hbeta_0\big\}\\
  \le&\Exp\big\{\|\x^*\|^2I(\|\x\|>\sqrt{n}\epsilon)
    \big|\Fn,\hbeta_0\big\}\\
  =&\sumN\pi_i(\hbeta_0)\big\{\|\x_i\|^2
    I(\|\x_i\|>\sqrt{n}\epsilon)\big\}\\
  \le&\frac{\oneN\sumN\big\{h(\x_i)\|\x_i\|^2
    I(\|\x_i\|>\sqrt{n}\epsilon)\big\}}{\Psi_N(\hbeta_0)}\\
  \le&\frac{\oneN\sumN\big\{h(\x_i)\|\x_i\|^2
       I(\|\x_i\|>\sqrt{n}\epsilon)\big\}}{\Psi_N(\hbeta_0)}=\op,
\end{align*}
by Lemma~\ref{lem1}  and the fact that $\Exp\{h(\x)\|\x\|^2\}<\infty$. Thus, applying the Lindeberg-Feller central limit theorem \citep[Section $^*$2.8 of][]{Vaart:98} finishes the proof.
\end{proof}

\subsubsection{Proof of Lemma \ref{lem4}}
\label{sec:proof-lemma-reflem4}
\begin{proof}{\bf of Lemma~\ref{lem4}.}
We begin with the following partition,
\begin{align*}
  &\onen\sumn\phi_i^*(\bbeta_t-\hbeta_0+\s_n)\|\x_i^*\|^2\\
  =&\onen\sumn\phi_i^*(\bbeta_t-\hbeta_0+\s_n)\|\x_i^*\|^2
     I(\|\x_i^*\|^2\le n)
     +\onen\sumn\phi_i^*(\bbeta_t-\hbeta_0+\s_n)\|\x_i^*\|^2
     I(\|\x_i^*\|^2> n)\notag\\
  \equiv&\Delta_1+\Delta_2.
\end{align*}
  The second term $\Delta_2$ is $\op$ because it is non-negative and
\begin{align*}
  \Exp(\Delta_2|\Fn,\hbeta_0)
  &=\sumN\pi_i(\hbeta_0)\phi_i(\bbeta_t-\hbeta_0+\s_n)
    \|\x_i\|^2I(\|\x_i\|^2> n)\\
  &\le\frac{\sumN|\psi_i(\hbeta_0)|h(\x_i)
    \|\x_i\|^2I(\|\x_i\|^2> n)}
    {\sumN|\psi_i(\hbeta_0)|h(\x_i)}\\
  &\le\frac{\oneN\sumN h(\x_i)\|\x_i\|^2I(\|\x_i\|^2> n)}
    {\Psi_N(\hbeta_0)}
  =\op
\end{align*}
as $n,N\rightarrow\infty$, where the last step is from Lemma~\ref{lem1}. 

Similarly, we can show that
\begin{align*}
  \Exp(\Delta_1|\Fn,\hbeta_0)
  -\sumN\pi_i(\hbeta_0)\phi_i(\bbeta_t-\hbeta_0)\|\x_i\|^2
  =\op. 
\end{align*}
Thus, we only need to show that $\Delta_1-\Exp(\Delta_1|\Fn,\hbeta_0)=\op$. For this, we show that the conditional variance of $\Delta_1$ goes to 0 in probability. Notice that
\begin{align*}
  &\Var(\Delta_1|\Fn,\hbeta_0)\\
  =&\onen\Var\big\{
     \phi^*(\bbeta_t-\hbeta_0)\|\x^*\|^2
     I(\|\x^*\|^2\le n)\big|\Fn,\hbeta_0\big\}\\
  \le&\frac{1}{16n}\Exp\big\{\|\x^*\|^4
     I(\|\x^*\|^2\le n)\big|\Fn,\hbeta_0\big\}\\
  =&\frac{1}{16n}\sumn\Exp\big\{\|\x^*\|^4
     I(i-1<\|\x^*\|^2\le i)\big|\Fn,\hbeta_0\big\}\\
  \le&\frac{1}{16n}\sumn i^2\Exp\big\{I(i-1<\|\x^*\|^2\le i)
       \big|\Fn,\hbeta_0\big\}\\
  \le&\frac{1}{16n}\sumn i^2
       \big\{\Pr(\|\x^*\|^2>i-1\big|\Fn,\hbeta_0)
  -\Pr(\|\x^*\|^2>i\big|\Fn,\hbeta_0)\big\}\\
  =&\frac{1}{16n}\bigg\{\Pr(\|\x^*\|^2>0\big|\Fn,\hbeta_0)
     -n^2\Pr(\|\x^*\|^2>n\big|\Fn,\hbeta_0)
     +\sum_{i=1}^{n-1}(2i+1)
     \Pr(\|\x^*\|^2>i\big|\Fn,\hbeta_0)\bigg\}\notag\\
  \le&\frac{1}{16n}\bigg\{1+\sum_{i=1}^{n-1}
       3i\Pr(\|\x^*\|^2>i\big|\Fn,\hbeta_0)\bigg\}
\end{align*}
This is $\op$ because
\begin{align*}
  &\onen\sumn i\Pr(\|\x^*\|^2>i|\Fn,\hbeta_0)
  =\onen\sumn i\sum_{j=1}^n\pi_j(\hbeta_0)I(\|\x_j\|^2>i)\\
  &=\onen\sum_{j=1}^n\sumn i\pi_j(\hbeta_0)I(\|\x_j\|^2>i)\\
  &=\frac{\oneN\sum_{j=1}^n\onen\sumn
    i|\psi_j(\hbeta_0)|h(\x_j)I(\|\x_j\|^2>i)}
    {\Psi_N(\hbeta_0)}\\
  &\le\frac{\oneN\sum_{j=1}^n\onen\sumn
    ih(\x_j)I(\|\x_j\|^2>i)}
    {\Psi_N(\hbeta_0)},
\end{align*}
and the numerator is non-negative and has an expectation
\begin{align*}
    &\oneN\sum_{j=1}^n\onen\sumn i\Exp\{h(\x)I(\|\x\|^2>i)\}
\end{align*}
which is $o(1)$ since $i\Exp\{h(\x)I(\|\x\|^2>i)\}=o(1)$ as $i\rightarrow\infty$.
\end{proof}

\subsection{Proofs for Poisson subsampling}
\label{sec:proofs-poiss-subs}
In this section we prove the results in
Section~\ref{sec:poisson-sampling} about Poisson subsampling.

Define $\delta_i^{\hbeta_0}=I\{u_i\le n\pi_i^p(\hbeta_0)\}$, and use notation $\lambda_p$ to denote the log-likelihood shifted by $\hbeta_0$, i.e., $\lambda_p(\bbeta)=\ell_p^*(\bbeta-\hbeta_0)$. Using these notations, the estimator $\hbeta_p$ is the maximizer of
\begin{align}\label{eq:40}
  \lambda_p(\bbeta)=\sumN\delta_i^{\hbeta_0}\{n\pi_i^p(\hbeta_0)\vee1\}
  \big[(\bbeta-\hbeta_0)\tp\x_iy_i
  -\log\{1+e^{(\bbeta-\hbeta_0)\tp\x_i}\}\big],
\end{align}

Denote the first and second derivatives of $\lambda_p(\bbeta)$ as $\dot\lambda_p(\bbeta)=\partial\lambda_p(\bbeta)/\partial\bbeta$ and $\ddot\lambda_p(\bbeta)=\partial^2\lambda_p(\bbeta)/(\partial\bbeta\partial\bbeta\tp)$. 
Two lemmas similar to Lemmas~\ref{lem3} and \ref{lem4} are derived below which will be used to prove Theorem \ref{thm:2}. {We will prove these two lemmas in Sections \ref{sec:proof-lemma-reflem5} and \ref{sec:proof-lemma-reflem6}}.

\begin{lemma}\label{lem5}
  Let
  \begin{equation*}
  \dot\lambda_p(\bbeta_t)
  =\sumN\delta_i^{\hbeta_0}\{n\pi_i^p(\hbeta_0)\vee1\}
  \{y_i-p(\x_i,\bbeta_t-\hbeta_0)\}\x_i.
\end{equation*}
  Under Assumptions~\ref{as:1} and \ref{as:2}, conditional on $\Fn$, the consistent estimator $\hbeta_0$, and $\hat\Psi_0$, if $n=o(N)$, then
\begin{align*}
  \frac{\dot\lambda_p(\bbeta_t)}{\sqrt{n}}
  -\frac{\sqrt{n}\sumN\eeta_i}{N\Psi_N(\hbeta_0)}
  \longrightarrow \Nor\big(\0,\ \bSigma_{\bbeta_t}^{-1}\big),
\end{align*}
in distribution; if $n/N\rightarrow\rho\in(0,1)$, then
\begin{align*}
  \frac{\dot\lambda_p(\bbeta_t)}{\sqrt{n}}
  -\frac{\sqrt{n}\sumN\eeta_i}{N\Psi_N(\hbeta_0)}
  \longrightarrow \Nor\big(\0,\ \bLambda_{\rho}\big),
\end{align*}
in distribution. 
\end{lemma}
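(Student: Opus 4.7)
The strategy is to apply the Lindeberg--Feller central limit theorem conditionally on $\Fn$, $\hbeta_1$, and $\hat\Psi_1$, under which the $N$ summands $\xi_i:=\delta_i^{\hbeta_1}\{n\pi_i^p(\hbeta_1)\vee1\}\{y_i-p_i(\bbeta_t-\hbeta_1)\}\x_i$ are independent. The plan is (i) compute the conditional mean of $\dot\lambda_p(\bbeta_t)$ and match it with the drift in the lemma, (ii) compute the conditional covariance of $\dot\lambda_p(\bbeta_t)/\sqrt n$ and identify its limit in each regime for $n/N$, and (iii) verify the Lindeberg truncation condition.

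The key identity for the mean is $(n\pi_i^p\wedge1)(n\pi_i^p\vee1)=n\pi_i^p$, which gives
\[
\Exp\big[\dot\lambda_p(\bbeta_t)\,\big|\,\Fn,\hbeta_1,\hat\Psi_1\big]
=n\sumN\pi_i^p\{y_i-p_i(\bbeta_t-\hbeta_1)\}\x_i
=\frac{n}{N\hat\Psi_1}\sumN\eta_i.
\]
Since $\hat\Psi_1$ and $\Psi_N(\hbeta_1)$ both converge in probability to $\Psi(\bbeta_t)$ their ratio is $1+\op$, and combined with $\sumN\eta_i=\Op(\sqrt N)$ (shown inside the proof of Lemma~\ref{lem2}) this yields $\Exp[\dot\lambda_p(\bbeta_t)\mid\cdot]/\sqrt n=\sqrt n\sumN\eta_i/(N\Psi_N(\hbeta_1))+\op$, so the non-random shift agrees with the centering in the lemma.

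For the conditional covariance, write $A_i=\{y_i-p_i(\bbeta_t-\hbeta_1)\}\x_i$. Because $\delta_i^{\hbeta_1}$ is Bernoulli with parameter $n\pi_i^p\wedge1$, a case split on $n\pi_i^p\le1$ versus $n\pi_i^p>1$ (the indicator has zero variance in the second case) collapses $\Var(\xi_i\mid\cdot)$ to $n\pi_i^p(1-n\pi_i^p)_+A_iA_i\tp$, so
\[
V_n:=\Var\big(\dot\lambda_p(\bbeta_t)/\sqrt n\,\big|\,\cdot\big)
=\frac{1}{N\hat\Psi_1}\sumN|y_i-p_i(\hbeta_1)|h(\x_i)(1-n\pi_i^p)_+A_iA_i\tp.
\]
A Lemma~\ref{lem1}-style law of large numbers, together with $\hbeta_1\to\bbeta_t$ so that $p_i(\bbeta_t-\hbeta_1)\to 1/2$ and $A_iA_i\tp\to\x_i\x_i\tp/4$, identifies the two limits. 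When $n=o(N)$ the factor $(1-n\pi_i^p)_+\to1$ pointwise and $V_n$ converges to the limit claimed in the first assertion of the lemma. When $n/N\to\rho\in(0,1)$ the positive-part factor converges pointwise to $(1-\rho|\psi_i(\bbeta_t)|h(\x_i)/\Psi(\bbeta_t))_+$, and the binary moment identity $\Exp\{|\psi(\bbeta_t)|g(|\psi(\bbeta_t)|)\mid\x\}=\phi(\bbeta_t)\{g(1-p)+g(p)\}$ reduces the expectation to the form $\bLambda_\rho$.

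Finally, the Lindeberg condition $n^{-1}\sumN\Exp[\|\xi_i\|^2 I(\|\xi_i\|>\sqrt n\epsilon)\mid\cdot]\to 0$ is handled from the bound $\|\xi_i\|\le\{n\pi_i^p\vee1\}\|\x_i\|$ by splitting according to whether $n\pi_i^p\le1$, where the tail reduces to $\Exp\{\|\x\|^2 I(\|\x\|>\sqrt n\epsilon)\}\to0$ under Assumption~\ref{as:2}, or $n\pi_i^p>1$, where the interaction between the inflated weight and the vanishing Bernoulli variance is controlled using $\Exp\{h^2(\x)\|\x\|^2\}<\infty$. The main obstacle is this second regime: the weight can be of order $n$, so a naive uniform bound fails, and a careful truncation, analogous to the partition $\{\|\x_i\|^2\le n\}$ versus $\{\|\x_i\|^2>n\}$ used in Lemma~\ref{lem4}, is needed both to verify Lindeberg and to legitimise the pointwise-convergence argument for $V_n$ when $n/N\to\rho$.
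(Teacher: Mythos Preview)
Your proposal is correct and follows essentially the same route as the paper: Lindeberg--Feller conditionally on $\Fn,\hbeta_1,\hat\Psi_1$, with the identity $(n\pi_i^p\wedge1)(n\pi_i^p\vee1)=n\pi_i^p$ for the mean and a Lemma~\ref{lem1}-type LLN for the covariance limit. Your closed-form variance $\Var(\xi_i\mid\cdot)=n\pi_i^p(1-n\pi_i^p)_+A_iA_i\tp$ is in fact a tidier repackaging of the paper's $\Delta_3$--$\Delta_4$ split, since $(n\pi^p\wedge1)\{1-(n\pi^p\wedge1)\}(n\pi^p\vee1)^2=n\pi^p(1-n\pi^p)_+$.

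One comment: the ``main obstacle'' you flag for the Lindeberg check in the regime $n\pi_i^p>1$ is not actually an obstacle. In that regime $\delta_i^{\hbeta_1}\equiv1$ deterministically, so $\xi_i-\Exp(\xi_i\mid\cdot)=0$ and the contribution to the Lindeberg sum vanishes. Correspondingly, the paper does not split or truncate \`a la Lemma~\ref{lem4}; it simply bounds the (uncentered) summand via $n\pi_i^p\vee1\le n\pi_i^p+1\le h(\x_i)/\hat\Psi_1+1$ uniformly, and then invokes Lemma~\ref{lem1} with the moment conditions $\Exp\{h(\x)\|\x\|^2\}<\infty$ and $\Exp\{h^2(\x)\|\x\|^2\}<\infty$ from Assumption~\ref{as:2}. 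Either observation removes the difficulty you anticipate.
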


\begin{lemma}\label{lem6}
  Under Assumptions~\ref{as:1} and \ref{as:2}, as $n_0$, $n$, and $N$ go to infinity, for any $\s_n\rightarrow0$ in probability, 
  \begin{equation*}
    \onen\sumN\delta_i^{\hbeta_0}\{n\pi_i^p(\hbeta_0)\vee1\}
    \phi_i(\bbeta_t-\hbeta_0+\s_n)\|\x_i\|^2
    -\sumN\pi_i^p(\hbeta_0)\phi_i(\bbeta_t-\hbeta_0)\|\x_i\|^2
    =\op.
  \end{equation*}
\end{lemma}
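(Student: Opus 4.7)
The plan is to mirror the proof of Lemma~\ref{lem4} while exploiting two features of Poisson sampling. Conditional on $\Fn$, $\hbeta_1$, and $\hat\Psi_1$, the indicators $\delta_i^{\hbeta_1}$ are independent Bernoullis with parameter $n\pi_i^p(\hbeta_1)\wedge 1$. Moreover, whenever $n\pi_i^p(\hbeta_1)>1$ we have $\delta_i^{\hbeta_1}\equiv 1$ and $n\pi_i^p\vee 1 = n\pi_i^p$, so the $i$th summand on the left equals $\pi_i^p\phi_i(\bbeta_t-\hbeta_1+\s_n)\|\x_i\|^2$ deterministically. All the sampling randomness therefore lives on $\mathcal{A}_N := \{i : n\pi_i^p(\hbeta_1)\le 1\}$, on which $n\pi_i^p\vee 1 = 1$ and $\delta_i^{\hbeta_1}$ is Bernoulli$(n\pi_i^p)$.

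\paragraph{First step: split off the $\s_n$ perturbation.} I would write the displayed difference as $T_1+T_2$ where
\begin{align*}
  T_1 &:= \frac{1}{n}\sumN\bigl[\delta_i^{\hbeta_1}\{n\pi_i^p\vee 1\}-n\pi_i^p\bigr]\phi_i(\bbeta_t-\hbeta_1+\s_n)\|\x_i\|^2,\\
  T_2 &:= \sumN\pi_i^p(\hbeta_1)\bigl[\phi_i(\bbeta_t-\hbeta_1+\s_n)-\phi_i(\bbeta_t-\hbeta_1)\bigr]\|\x_i\|^2.
\end{align*}
Using the identity $(a\vee 1)(a\wedge 1)=a$ one checks $\Exp[\delta_i^{\hbeta_1}\{n\pi_i^p\vee 1\}\mid\Fn,\hbeta_1,\hat\Psi_1]=n\pi_i^p$, so $T_1$ is conditionally centered. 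The perturbation term $T_2$ is disposed of as in the proof of Lemma~\ref{lem4}: continuity of $\phi$ makes its integrand go to zero pointwise, while $\sumN\pi_i^p\|\x_i\|^2 = O_P(1)$ by the law of large numbers since the integrand is dominated by $(N\hat\Psi_1)^{-1}h(\x_i)\|\x_i\|^2$ with $\Exp\{h(\x)\|\x\|^2\}<\infty$ by Assumption~\ref{as:2}, so Lemma~\ref{lem1} gives $T_2=\op$.

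\paragraph{Second step: conditional variance of $T_1$ with a truncation.} Conditional on $\Fn,\hbeta_1,\hat\Psi_1$, $T_1$ is a sum of independent mean-zero terms that vanish outside $\mathcal{A}_N$. For any $M>0$ I would split $T_1=T_1^{\le M}+T_1^{>M}$ according to $\|\x_i\|\le M$ or $\|\x_i\|>M$. On the truncated part, the bound $\phi_i^2\|\x_i\|^4\le M^4/16$ together with $\Var(\delta_i^{\hbeta_1})\le n\pi_i^p$ on $\mathcal{A}_N$ yields
\begin{equation*}
  \Var\bigl(T_1^{\le M}\bigm|\Fn,\hbeta_1,\hat\Psi_1\bigr)
  \le \frac{M^4}{16 n^2}\sum_{i\in\mathcal{A}_N,\,\|\x_i\|\le M}n\pi_i^p
  \le \frac{M^4}{16 n}\sumN\pi_i^p
  = O_P\!\bigl(M^4/n\bigr),
\end{equation*}
so $T_1^{\le M}=\op$ for every fixed $M$. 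For the tail, the triangle inequality and $\Exp[\delta_i^{\hbeta_1}\{n\pi_i^p\vee 1\}\mid\cdot]=n\pi_i^p$ give
\begin{equation*}
  \Exp\bigl(|T_1^{>M}|\bigm|\Fn,\hbeta_1,\hat\Psi_1\bigr)
  \le 2\sumN\pi_i^p\,\phi_i(\bbeta_t-\hbeta_1+\s_n)\|\x_i\|^2\,I(\|\x_i\|>M),
\end{equation*}
which, via $\pi_i^p\le h(\x_i)/(N\hat\Psi_1)$, Lemma~\ref{lem1}, and dominated convergence under $\Exp\{h(\x)\|\x\|^2\}<\infty$, can be made arbitrarily small by choosing $M$ large, uniformly in $n$ and $N$. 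Markov's inequality then gives $\limsup_{n,N}\Pr(|T_1|>\epsilon)\le c(M)\to 0$ as $M\to\infty$, so $T_1=\op$, and $T_1+T_2=\op$ concludes the proof.

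\paragraph{Main obstacle.} The subtlety is to push the argument through without Assumption~\ref{as:3}. Lemma~\ref{lem4} invoked Assumption~\ref{as:3} to damp the effect of rare, heavy $\|\x_i\|$ that could be drawn several times under sampling with replacement. Poisson sampling avoids this because $\delta_i^{\hbeta_1}\in\{0,1\}$ caps any single observation's contribution to $T_1$ by $\phi_i\|\x_i\|^2/n$; this cap, combined with the structural bound $|\psi_i(\hbeta_1)|h(\x_i)\le N\hat\Psi_1/n$ available on $\mathcal{A}_N$, is what allows the truncation/$L^1$ argument to close using only the second-moment information in Assumption~\ref{as:2}. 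Verifying this cleanly---in particular that the truncated variance bound and the tail $L^1$ bound may be combined via the standard \textquotedblleft $n,N\to\infty$ then $M\to\infty$\textquotedblright\ ordering---is the step I would devote the most care to.
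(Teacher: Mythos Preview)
Your argument is correct but follows a genuinely different route from the paper. The paper does not decompose into a centered sampling-noise part $T_1$ and a perturbation part $T_2$. Instead, it introduces an \emph{oracle} Poisson sum built with the true parameter,
\[
\frac{1}{n}\sumN\delta_i^{\bbeta_t}\{n\pi_i^p(\bbeta_t)\vee1\}\phi_i(\0)\|\x_i\|^2,
\]
where $\delta_i^{\bbeta_t}=I\{u_i\le n\pi_i^p(\bbeta_t)\}$ uses the same uniforms $u_i$. This oracle sum is an average of i.i.d.\ terms (no pilot involved), so the strong law gives its limit directly. The paper then shows that the absolute difference $\Delta_5$ between the actual Poisson sum and the oracle sum is $\op$ by bounding its conditional expectation given $\Fn,\hbeta_1,\hat\Psi_1$; this splits into a piece $\Delta_6$ coming from the common inclusion region and a piece $\Delta_7$ coming from $|n\pi_i^p(\hbeta_1)-n\pi_i^p(\bbeta_t)|$, each handled by Lemma~\ref{lem1} together with $\Exp\{h(\x)\|\x\|^2\}<\infty$ and $\Exp\{h^2(\x)\|\x\|^2\}<\infty$.

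Your decomposition is arguably more transparent about \emph{why} Assumption~\ref{as:3} is unnecessary here: because $\delta_i^{\hbeta_1}\in\{0,1\}$ and the only random indices lie in $\mathcal{A}_N$ where $n\pi_i^p\vee1=1$, each summand in $T_1$ is bounded by $\|\x_i\|^2/(4n)$, which is exactly what lets the fixed-$M$ truncation plus $L^1$-tail argument close under Assumption~\ref{as:2} alone. The paper's coupling-to-the-oracle device avoids the truncation parameter $M$ altogether, at the cost of introducing the auxiliary indicators $\delta_i^{\bbeta_t}$ and checking $|\{n\pi_i^p(\hbeta_1)\vee1\}-\{n\pi_i^p(\bbeta_t)\vee1\}|=\op$. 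Both routes rely on Lemma~\ref{lem1} at the key step and on the same moment conditions; neither dominates the other, though your approach is somewhat more self-contained while the paper's reuses the same coupling idea in other lemmas.
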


\begin{proof}{\bf of Theorem \ref{thm:2}.}
The estimator $\hbeta_p$ is the maximizer of \eqref{eq:40}, 
so $\sqrt{n}(\hbeta_p-\bbeta_t)$ is the maximizer of $\gamma_p(\s)=\lambda_p(\bbeta_t+\s/\sqrt{n})-\lambda_p(\bbeta_t)$. By Taylor's expansion,
\begin{align*}
  \gamma_p(\s)
  &=\frac{1}{\sqrt{n}}\s\tp\dot\lambda_p(\bbeta_t)
    +\frac{1}{2n}\sumN\delta_i^{\hbeta_0}\{n\pi_i^p(\hbeta_0)\vee1\}
    \phi_i(\bbeta_t-\hbeta_0+\Acute\s/\sqrt{n})(\s\tp\x_i)^2
\end{align*}
where $\phi_i(\bbeta)=p(\x_i,\bbeta)\{1-p(\x_i,\bbeta)\}$, and $\Acute\s$ lies between $\0$ and $\s$.

From Lemmas~\ref{lem5} and \ref{lem6}, conditional on $\Fn$, and $\hbeta_0$,  
\begin{align*}
 &\onen\sumN\delta_i^{\hbeta_0}\{n\pi_i^p(\hbeta_0)\vee1\}
    \phi_i(\bbeta_t-\hbeta_0+\Acute\s/\sqrt{n})\x_i\x_i\tp
    =\bSigma_{\bbeta_t}^{-1}+\op.
\end{align*}
In addition, from Lemma~\ref{lem5}, conditional on $\Fn$, $\hbeta_0$, and $\hat\Psi_0$, $\dot\lambda_p(\bbeta_t)/\sqrt{n}$ converges in distribution to a normal limit. Thus, from the Basic Corollary in page 2 of \cite{hjort2011asymptotics}, the maximizer of $\gamma_p(\s)$, $\sqrt{n}(\hbeta_p-\bbeta_t)$, satisfies
\begin{align*}
  \sqrt{n}(\hat\bbeta_p-\bbeta_t)
  =\bSigma_{\bbeta_t}\frac{1}{\sqrt{n}}\dot\lambda_p(\bbeta_t)+\op
\end{align*}
given $\Fn$, $\hbeta_0$, and $\hat\Psi_0$. Combining this with {Lemma~\ref{lem5}, Slutsky's theorem, and the fact that a conditional probability is bounded,} Theorem~\ref{thm:2} follows. 
\end{proof}

\subsubsection{Proof of Proposition~\ref{prop2}}
\begin{proof}{\bf of Proposition~\ref{prop2}.}
  To prove that $\bSigma_{\bbeta_t}\bLambda_{\rho}\bSigma_{\bbeta_t}<\bSigma_{\bbeta_t}$, we just need to show that $\bLambda_{\rho}<\bSigma_{\bbeta_t}^{-1}$. This is true because 
\begin{align*}
  \bLambda_{\rho}
  &=\frac{\Exp\big[|\psi(\bbeta_t)|
  \{\Psi(\bbeta_t)-\rho|\psi(\bbeta_t)|h(\x)\}_+h(\x)\x\x\tp\big]}
    {4\Psi^2(\bbeta_t)}\\
  &<\frac{\Exp\big\{|\psi(\bbeta_t)|
  \Psi(\bbeta_t)h(\x)\x\x\tp\big\}}{4\Psi^2(\bbeta_t)}
  =\frac{\Exp\big\{|\psi(\bbeta_t)|
  h(\x)\x\x\tp\big\}}{4\Psi(\bbeta_t)}
  =\frac{\Exp\big\{\phi(\bbeta_t)h(\x)\x\x\tp\big\}}{4\Phi(\bbeta_t)}
  =\bSigma_{\bbeta_t}^{-1}.
\end{align*}
\end{proof}

\subsubsection{Proof of Lemma \ref{lem5}}
\label{sec:proof-lemma-reflem5}
\begin{proof}{\bf of Lemma~\ref{lem5}.}
Note that, $\delta_i^{\hbeta_0}=I\{u_i\le n\pi_i^p(\hbeta_0)\}$, where $u_i$ are i.i.d. with the standard uniform distribution. Thus, given $\Fn$, $\hbeta_0$, and $\hat\Psi_0$, $\dot\lambda_p(\bbeta_t)$ is a sum of $N$ independent random vectors. We now exam the mean and variance of $\dot\lambda_p(\bbeta_t)$. Recall that $\eeta_i=|\psi_i(\hbeta_0)|\psi_i(\bbeta_t-\hbeta_0)h(\x_i)\x_i$, and $\psi_i(\bbeta)=y_i-p(\x_i,\bbeta)$. 
For the mean, we have, 
\begin{align*}
  &\frac{1}{\sqrt{n}}\Exp
  \big\{\dot\lambda_p(\bbeta_t)|\Fn,\hbeta_0,\hat\Psi_0\big\}\\
  &=\frac{1}{\sqrt{n}}\sumN
    \{n\pi_i^p(\hbeta_0)\wedge1\}\{n\pi_i^p(\hbeta_0)\vee1\}
    \psi_i(\bbeta_t-\hbeta_0)\x_i\\
  &=\frac{1}{\sqrt{n}}\sumN
    n\pi_i^p(\hbeta_0)\psi_i(\bbeta_t-\hbeta_0)\x_i
    =\frac{\sqrt{n}}{\sqrt{N}}
    \frac{\sumN\eeta_i}{\hat\Psi_0\sqrt{N}}=O_P(\sqrt{n/N}),
\end{align*}
where the last equality is from Lemma~\ref{lem2}. 

For the variance,
\begin{align}
  \onen&\Var\big\{
    \dot\lambda_p(\bbeta_t)|\Fn,\hbeta_0,\hat\Psi_0\big\}\notag\\
  =&\onen\sumN[\{n\pi_i^p(\hbeta_0)\wedge1\}
     -\{n\pi_i^p(\hbeta_0)\wedge1\}^2]\{n\pi_i^p(\hbeta_0)\vee1\}^2
    \psi_i^2(\bbeta_t-\hbeta_0)\x_i\x_i\tp\notag\\
  =&\sumN\pi_i^p(\hbeta_0)\{n\pi_i^p(\hbeta_0)\vee1\}
     \psi_i^2(\bbeta_t-\hbeta_0)\x_i\x_i\tp
  -n\sumN\{\pi_i^p(\hbeta_0)\}^2
    \psi_i^2(\bbeta_t-\hbeta_0)\x_i\x_i\tp\notag\\
  =&\frac{\oneN\sumN
     |\psi_i(\hbeta_0)|\{n\pi_i^p(\hbeta_0)\vee1\}
    \psi_i^2(\bbeta_t-\hbeta_0)h(\x_i)\x_i\x_i\tp}
    {\hat\Psi_0}\notag\\
  &-\frac{n}{N}\frac{\oneN\sumN\psi_i^2(\hbeta_0)
    \psi_i^2(\bbeta_t-\hbeta_0)h^2(\x_i)\x_i\x_i\tp}
    {\hat\Psi_0^2}\notag\\
 \equiv&\Delta_3-\Delta_4\label{eq:15}
\end{align}
Note that $\Exp\{h(\x)\|\x\|^2\}<\infty$, $\Exp\{h^2(\x)\|\x\|^2\}<\infty$, and $|\psi_i(\cdot)|$ are bounded. Thus, from Lemma~\ref{lem1}, if $n/N\rightarrow\rho$, 
\begin{align}
 \Delta_4\rightarrow\label{eq:16} \rho\frac{\Exp\{\psi^2(\bbeta_t)h^2(\x)\x\x\tp\}}{4\Psi^2(\bbeta_t)},
\end{align}
in probability. 

For the term $\Delta_3$ in~\eqref{eq:15}, it is equal to
\begin{align*}
  \Delta_3
  &=\frac{1}{\hat\Psi_0^2}\oneN\sumN|\psi_i(\hbeta_0)|
    \Big\{\frac{n|\psi_i(\hbeta_0)|h(\x_i)}{N}
    \vee\hat\Psi_0\Big\}
    \psi_i^2(\bbeta_t-\hbeta_0)h(\x_i)\x_i\x_i\tp\notag\\
  &=\frac{1}{\hat\Psi_0^2}\frac{n}{N^2}\sumN\psi_i^2(\hbeta_0)
     \psi_i^2(\bbeta_t-\hbeta_0)h^2(\x_i)\x_i\x_i\tp
     I\Big\{\frac{n|\psi_i(\hbeta_0)|h(\x_i)}{N}
     >\hat\Psi_0\Big\}\notag\\
  &\quad+\frac{1}{\hat\Psi_0}\oneN\sumN|\psi_i(\hbeta_0)|
    \psi_i^2(\bbeta_t-\hbeta_0)h(\x_i)\x_i\x_i\tp
    I\Big\{\frac{n|\psi_i(\hbeta_0)|h(\x_i)}{N}
    \le\hat\Psi_0\Big\}.
\end{align*}
Since $\Exp\{h(\x)\|\x\|^2\}<\infty$, $\Exp\{h^2(\x)\|\x\|^2\}<\infty$, and $|\psi_i(\cdot)|$ are bounded, from Lemma~\ref{lem1}, if $n/N\rightarrow\rho$, as $n_0$, $n$, and $N$ go to infinity,
\begin{align}
  \Delta_3\rightarrow
  &\frac{\rho\Exp\big[\psi^2(\bbeta_t)h^2(\x)\x\x\tp
    I\big\{\rho|\psi(\bbeta_t)|h(\x)
    \ge\Psi(\bbeta_t)\big\}\big]}{4\Psi^2(\bbeta_t)}\notag\\
  &+\frac{\Exp\big[|\psi(\bbeta_t)|h(\x)\x\x\tp
    I\big\{\rho|\psi(\bbeta_t)|h(\x)
    \le\Psi(\bbeta_t)\}\big\}\big]}{4\Psi(\bbeta_t)}\notag\\
  =&\frac{\Exp\Big(|\psi(\bbeta_t)|h(\x)\x\x\tp
     \big[\{\rho|\psi(\bbeta_t)|h(\x)\}\vee
     \Psi(\bbeta_t)\}\big]\Big)}{4\Psi^2(\bbeta_t)},\label{eq:17}
\end{align}
 in probability. 
From, \eqref{eq:15}, \eqref{eq:16}, and \eqref{eq:17}, if $n/N\rightarrow\rho$,
\begin{align*}
  \onen
  &\Var\big\{\dot\lambda_p(\bbeta_t)|\Fn,\hbeta_0,\hat\Psi_0\big\}
  =\frac{\Exp\big[|\psi(\bbeta_t)|h(\x)\x\x\tp
    \{\Psi(\bbeta_t)-\rho|\psi(\bbeta_t)|h(\x)\}_+\big]}
    {4\Psi^2(\bbeta_t)}+\op.
\end{align*}
Specifically, when $\rho=0$,
\begin{align*}
  \onen
  &\Var\big\{\dot\lambda_p(\bbeta_t)|\Fn,\hbeta_0,\hat\Psi_0\big\}
  =\bSigma_{\bbeta_t}^{-1}+\op.
\end{align*}

Now we check the Lindeberg-Feller condition \citep[Section $^*$2.8 of][]{Vaart:98} under the condition distribution. Denote $\dot\lambda_{pi}=\delta_i^{\hbeta_0}\{n\pi_i^p(\hbeta_0)\vee1\}
    \psi_i(\bbeta_t-\hbeta_0)\x_i$. 
For any $\epsilon>0$
\begin{align*}
  &\onen\sumN\Exp\Big\{\big\|\dot\lambda_{pi}\big\|^2
    I(\big\|\dot\lambda_{pi}\big\|>\sqrt{n}\epsilon)
    \Big|\Fn,\hbeta_0,\hat\Psi_0\Big\}\\
  &\le\onen\sumN\Exp\Big[
    \big\|\delta_i^{\hbeta_0}\{n\pi_i^p(\hbeta_0)\vee1\}\x_i\big\|^2
    I(\big\|\delta_i^{\hbeta_0}\{n\pi_i^p(\hbeta_0)\vee1\}
    \x_i\big\|>\sqrt{n}\epsilon)\Big|\Fn,\hbeta_0,\hat\Psi_0\Big]\\
  &=\sumN\pi_i^p(\hbeta_0)\{n\pi_i^p(\hbeta_0)\vee1\}\|\x_i\|^2
    I(\{n\pi_i^p(\hbeta_0)\vee1\}\|\x_i\|>\sqrt{n}\epsilon)\\
  &\le\frac{|\psi_i(\hbeta_0)|h(\x_i)\{
    n/N|\psi_i(\hbeta_0)|h(\x_i)+\hat\Psi_0\}\|\x_i\|^2
    I(\{n\pi_i^p(\hbeta_0)+1\}\|\x_i\|>\sqrt{n}\epsilon)}
    {\hat\Psi_0^2}\\
  &\le\frac{\oneN\sumN h^2(\x_i)\|\x_i\|^2
    I(\{h(\x_i)/\hat\Psi_0+1\}\|\x_i\|
    >\sqrt{n}\epsilon)}{\hat\Psi_0^2}\\
  &+\frac{\oneN\sumN h(\x_i)\|\x_i\|^2
    I(\{h(\x_i)/\hat\Psi_0+1\}\|\x_i\|>\sqrt{n}\epsilon)}{\hat\Psi_0}
  =\op,
\end{align*}
where the last equality is from Lemma~\ref{lem1}. Thus, applying the Lindeberg-Feller central limit theorem \citep[Section $^*$2.8 of][]{Vaart:98} finishes the proof.
\end{proof}

\subsubsection{Proof of Lemma \ref{lem6}}
\label{sec:proof-lemma-reflem6}
\begin{proof}{\bf of Lemma~\ref{lem6}.}
  Note that, from Lemma~\ref{lem1},
  \begin{equation*}
    \sumN\pi_i^p(\hbeta_0)\phi_i(\bbeta_t-\hbeta_0)\x_i\x_i\tp
    =\frac{1}{\hat\Psi_0N}\sumN
    |\psi_i(\hbeta_0)|\phi_i(\bbeta_t-\hbeta_0)h(\x_i)\x_i\x_i\tp
    =\bSigma_{\bbeta_t}^{-1}+\op;
  \end{equation*}
  and from the strong law of large numbers
  \begin{equation*}
    \onen\sumN\delta_i^{\bbeta_t}\{n\pi_i^p(\bbeta_t)\vee1\}
    \phi_i(\bbeta_t-\bbeta_t)\x_i\x_i\tp
    =\bSigma_{\bbeta_t}^{-1}+\op,
  \end{equation*}
where $\delta_i^{\bbeta_t}=I\{u_i\le n\pi_i^p(\bbeta_t)\}$. Thus, if we show that
\begin{align*}
  \Delta_5\equiv&\onen\sumN\Big|\delta_i^{\hbeta_0}
    \{n\pi_i^p(\hbeta_0)\vee1\}
    \phi_i(\bbeta_t-\hbeta_0+\s_n)
    -\delta_i^{\bbeta_t}\{n\pi_i^p(\bbeta_t)\vee1\}
    \phi_i(\bbeta_t-\bbeta_t)\Big|\|\x_i\|^2=\op,%
\end{align*}
then the result in Lemma~\ref{lem6} follows. Noting that $\Delta_5$ is nonnegative, we prove $\Delta_5=\op$ %
by showing that $\Exp(\Delta_5|\Fn,\hbeta_0,\hat\Psi_0)=\op$. Note that given $\Fn$, $\hbeta_0$, and $\hat\Psi_0$, the only random terms in $\Delta_5$ are $\delta_i^{\hbeta_0}=I\{u_i\le n\pi_i^p(\hbeta_0)\}$ and $\delta_i^{\bbeta_t}=I\{u_i\le n\pi_i^p(\bbeta_t)\}$. We have that
\begin{align*}
  &\Exp(\Delta_5|\Fn,\hbeta_0,\hat\Psi_0)\notag\\
  \le&
  \onen\sumN
    \{n\pi_i^p(\hbeta_0)\wedge n\pi_i^p(\bbeta_t)\wedge1\}\notag\\
  &\qquad\qquad\times\Big|\{n\pi_i^p(\hbeta_0)\vee1\}
    \phi_i(\bbeta_t-\hbeta_0+\s_n)-\{n\pi_i^p(\bbeta_t)\vee1\}
    \phi_i(\bbeta_t-\bbeta_t)\Big|\|\x_i\|^2\notag\\
 &+\onen\sumN|n\pi_i^p(\hbeta_0)-n\pi_i^p(\bbeta_t)|
    \Big|n\pi_i^p(\hbeta_0)+n\pi_i^p(\bbeta_t)+2\Big|\|\x_i\|^2\notag\\
  \equiv&\Delta_6+\Delta_7.
\end{align*}
Note that $n\pi_i^p(\hbeta_0)\wedge n\pi_i^p(\bbeta_t)\wedge1\le n\pi_i^p(\hbeta_0)$. Thus $\Delta_6$ is bounded by
\begin{align*}
  &\frac{1}{\hat\Psi_0}\oneN\sumN\Big|\{n\pi_i^p(\hbeta_0)\vee1\}
    \phi_i(\bbeta_t-\hbeta_0+\s_n)-\{n\pi_i^p(\bbeta_t)\vee1\}
    \phi_i(\bbeta_t-\bbeta_t)\Big|h(\x_i)\|\x_i\|^2,
\end{align*}
which is $\op$ by Lemma~\ref{lem1} if $|\{n\pi_i^p(\hbeta_0)\vee1\}
 -\{n\pi_i^p(\bbeta_t)\vee1\}|=\op$. This is true because
 \begin{align*}
  |\{n\pi_i^p(\hbeta_0)\vee1\} -\{n\pi_i^p(\bbeta_t)\vee1\}|
 \le&n|\pi_i^p(\hbeta_0)-\pi_i^p(\bbeta_t)|\\
 \le&\frac{nh(\x_i)}{N}\bigg|\frac{|\psi_i(\hbeta_0)|}{\hat\Psi_0}
      -\frac{|\psi_i(\bbeta_t)|}{\Psi_N(\bbeta_t)}\bigg|
      =\op.
 \end{align*}
The term $\Delta_7$ is bounded by
\begin{align*}
  \oneN\sumN\bigg|
    \frac{|\psi_i(\hbeta_0)|}{\hat\Psi_0}
    -\frac{|\psi_i(\bbeta_t)}{\Psi_N(\bbeta_t)}\bigg|
    \bigg|\frac{|\psi_i(\hbeta_0)|}{\hat\Psi_0}
  +\frac{|\psi_i(\bbeta_t)|}{\Psi_N(\bbeta_t)}
  +\frac{2}{h(\x_i)}\bigg|
     h^2(\x_i)\|\x_i\|^2=\op,
\end{align*}
where the last equality is from Lemma~\ref{lem1} and the fact that $\Exp\{h^2(\x)\|\x\|^2\}<\infty$. 
\end{proof}

\subsection{Proofs for unconditional distribution}
\label{sec:proofs-uncond-distr}
In this section we prove Theorem~\ref{thm:3} in
Section~\ref{sec:uncond-distr}. A lemma similar to Lemma~\ref{lem5} is presented below and will be proved later in this section. Lemma~\ref{lem6} can be used in the proof of Theorem \ref{thm:3} because for the problem considered in this paper, convergence to zero in probability is equivalent to convergence to zero in probability under the conditional probability measure \citep{xiong2008some}.

For the pilot subsample taken according to the subsampling probabilities $\pi_{0i}$ in~\eqref{eq:34}, we define $\delta_i^{(1)}=I\{u_{0i}\le\frac{c_0(1-y_i)+c_1y_i}{N}\}$, where $u_{0i}$ are i.i.d. standard uniform random variables. With this notation, the estimator $\hat\Psi_0$ defined in~\eqref{eq:18} can be written as
\begin{align}\label{eq:41}
  \hat\Psi_0&=\oneN\sum_{i=1}^{N}
    \frac{\delta_i^{(1)}|y_i-p(\x_i,\hbeta_0)|h(\x_i)}
    {n\pi_{0i}\wedge1}.
\end{align}

\begin{lemma}\label{lem7}
  Let $\hbeta_0$ and $\hat\Psi_0$ be constructed according to Step 1 of Algorithm~\ref{alg:3}, respectively. For
  \begin{equation*}
  \dot\lambda_p(\bbeta_t)
  =\sumN\delta_i^{\hbeta_0}\{n\pi_i^p(\hbeta_0)\vee1\}
  \{y_i-p(\x_i,\bbeta_t-\hbeta_0)\}\x_i,
\end{equation*}
  under the same assumptions of Theorem~\ref{thm:3}, if $n=o(N)$, then
\begin{align*}
  \frac{\dot\lambda_p(\bbeta_t)}{\sqrt{n}}
  \longrightarrow \Nor\big(\0,\ \bSigma_{\bbeta_t}^{-1}\big),
\end{align*}
in distribution; if $n/N\rightarrow\rho\in(0,1)$, then
\begin{align*}
  \frac{\dot\lambda_p(\bbeta_t)}{\sqrt{n}}
  \longrightarrow \Nor\big(\0,\ \bLambda_{u}\big),
\end{align*}
in distribution. 
\end{lemma}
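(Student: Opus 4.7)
The plan is to derive the unconditional limit from the conditional one in Lemma~\ref{lem5} by separating the score into its conditional mean given the data and the conditionally centered sampling noise, and then combining these two asymptotically independent Gaussian contributions.

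Set $R_n=\sqrt{n}\sumN\eta_i/(N\Psi_N(\hbeta_1))$, which is essentially the scaled conditional mean of $\dot\lambda_p(\bbeta_t)/\sqrt{n}$, and $Q_n=\dot\lambda_p(\bbeta_t)/\sqrt{n}-R_n$. By Lemma~\ref{lem5}, conditional on $\Fn,\hbeta_1,\hat\Psi_1$ we have $Q_n\cvd\Nor(\0,\bLambda_\rho)$ with a nonrandom limit variance; hence the conditional characteristic function of $Q_n$ converges in probability to the corresponding Gaussian characteristic function. For $R_n$, Lemma~\ref{lem2} yields $N^{-1/2}\sumN\eta_i\cvd\Nor(\0,V_\eta)$ unconditionally with $V_\eta=\Exp\{\phi(\bbeta_t)h^2(\x)\x\x\tp\}/4$; combined with $\Psi_N(\hbeta_1)\cvp\Psi(\bbeta_t)=2\Phi(\bbeta_t)$, this gives $R_n=\op$ when $n/N\to 0$ and $R_n\cvd\Nor(\0,V_R)$ with $V_R=\rho V_\eta/\Psi^2(\bbeta_t)$ when $n/N\to\rho\in(0,1)$.

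To combine $Q_n$ and $R_n$, I would use the identity
\begin{equation*}
\Exp\bigl[e^{i\mathbf{t}\tp\dot\lambda_p(\bbeta_t)/\sqrt{n}}\bigr]
=\Exp\Bigl[e^{i\mathbf{t}\tp R_n}\Exp\{e^{i\mathbf{t}\tp Q_n}\mid\Fn,\hbeta_1,\hat\Psi_1\}\Bigr],
\end{equation*}
then invoke bounded convergence to replace the inner conditional characteristic function in the limit by $\exp(-\mathbf{t}\tp\bLambda_\rho\mathbf{t}/2)$, and let $R_n$ converge to its unconditional Gaussian limit. The resulting unconditional limit is Gaussian with variance $\bLambda_\rho+V_R$; the final step is the algebraic identification $\bLambda_\rho+V_R=\bLambda_u$, verified by case-splitting on whether $\rho\phi(\bbeta_t)h(\x)$ exceeds $\Phi(\bbeta_t)$ or not. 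When $\rho=0$ both extras vanish and one recovers the Gaussian limit reported in Lemma~\ref{lem5} at $\rho=0$.

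The main obstacle will be the case in which $\hbeta_1$ and $\hat\Psi_1$ are built from a data-dependent pilot subsample, so that independence arguments cannot be used directly. The additional moment hypotheses $\Exp\{h^3(\x)\|\x\|^3\}$, $\Exp\{h^3(\x)\|\x\|^2\}$, $\Exp\{h(\x)\|\x\|^3\}$, $\Exp\{h^2(\x)\}<\infty$ are what permit (i)~discarding the $n_1=o(\sqrt{N})$ pilot observations from $\sumN\eta_i$ using $\|\eta_i\|\le h(\x_i)\|\x_i\|$; (ii)~Taylor-expanding $\pi_i^p(\hbeta_1)$ around $\pi_i^p(\bbeta_t)$ and $\hat\Psi_1$ around $\Psi(\bbeta_t)$ with remainders summing to $\op$ uniformly in $i$; and (iii)~decoupling $\hbeta_1$ from $(\x_i,y_i)$ for the non-pilot indices so that the standard independence-based CLT can be applied. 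When instead $\hbeta_1$ and $\hat\Psi_1$ are independent of the data, these three steps become trivial, so the characteristic function combination above goes through identically.
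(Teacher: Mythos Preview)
Your approach differs substantially from the paper's. The paper treats the summands $\tau_{Ni}=\sqrt{N/n}\,\hat\Psi_1\,\delta_i^{\hbeta_1}\{n\pi_i^p(\hbeta_1)\vee1\}\psi_i(\bbeta_t-\hbeta_1)\x_i\tp l$ as an exchangeable array (conditionally i.i.d.\ given $\hbeta_1,\hat\Psi_1$) and invokes the CLT for interchangeable variables of \cite{blum1958central}, after verifying the cross-moment conditions $\Exp(\nu_{Ni}\nu_{Nj})=0$, $\Exp(\nu_{Ni}^2\nu_{Nj}^2)\to1$, and $N^{-1/2}\Exp|\nu_{Ni}|^3\to0$; the third-moment bound is precisely where the extra hypotheses $\Exp\{h^3(\x)\|\x\|^3\}<\infty$ etc.\ enter. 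Your characteristic-function splicing of Lemma~\ref{lem5} with Lemma~\ref{lem2} is more economical and is, in principle, a legitimate route to an unconditional Gaussian limit with covariance $\bLambda_\rho+V_R$.

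The gap is the last step. The identity $\bLambda_\rho+V_R=\bLambda_u$ does \emph{not} hold for the matrices as stated. A direct computation from the definitions in Theorems~\ref{thm:2} and~\ref{thm:3} gives
\[
\bLambda_u-\bLambda_\rho
=\frac{\rho\,\Exp\{\phi^2(\bbeta_t)h^2(\x)\x\x\tp\}}{4\Phi^2(\bbeta_t)},
\qquad
V_R=\frac{\rho\,V_\eta}{\Psi^2(\bbeta_t)}
=\frac{\rho\,\Exp\{\phi(\bbeta_t)h^2(\x)\x\x\tp\}}{16\Phi^2(\bbeta_t)},
\]
since $\{\rho\phi h\vee\Phi\}-\{\Phi-\rho\phi h\}_+=\rho\phi h$ identically, while $V_\eta=\tfrac14\Exp\{\phi(\bbeta_t)h^2(\x)\x\x\tp\}$ from Lemma~\ref{lem2}. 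These two expressions agree only when $\phi(\bbeta_t)\equiv\tfrac14$, so the promised ``case-splitting'' verification cannot succeed. The source of the mismatch is that the conditional variance in Lemma~\ref{lem5} is built from the $y$-dependent quantity $|\psi_i(\bbeta_t)|$, not from $\phi_i(\bbeta_t)$; the passage from $|\psi|$ to $\phi$ after integrating out $y$ is exactly where $\Exp_y[|\psi|^2]=\phi$ but $\Exp_y[|\psi|\cdot g(|\psi|)]\ne 2\phi\, g(\phi)$ in general. Before your additive decomposition can be reconciled with the target $\bLambda_u$, you would have to re-derive the conditional and unconditional pieces keeping $|\psi|$ explicit and only averaging over $y$ at the very end.
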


\begin{proof}{\bf of Theorem \ref{thm:3}.}
  The proof of this theorem is similar to that of Theorem~\ref{thm:2}. The key difference is that Lemma~\ref{lem7} is about asymptotic distribution unconditionally. 
  
  The estimator $\hbeta_p$ is the maximizer of 
\begin{align*}
  \lambda_p(\bbeta)=\sumN\delta_i^{\hbeta_0}\{n\pi_i^p(\hbeta_0)\vee1\}
  \big[(\bbeta-\hbeta_0)\tp\x_iy_i
  -\log\{1+e^{(\bbeta-\hbeta_0)\tp\x_i}\}\big],
\end{align*}
so $\sqrt{n}(\hbeta_p-\bbeta_t)$ is the maximizer of $\gamma_p(\s)=\lambda_p(\bbeta_t+\s/\sqrt{n})-\lambda_p(\bbeta_t)$. By Taylor's expansion,
\begin{align*}
  \gamma_p(\s)
  &=\frac{1}{\sqrt{n}}\s\tp\dot\lambda_p(\bbeta_t)
    +\frac{1}{2n}\sumN\delta_i^{\hbeta_0}\{n\pi_i^p(\hbeta_0)\vee1\}
    \phi_i(\bbeta_t-\hbeta_0+\Acute\s/\sqrt{n})(\s\tp\x_i)^2
\end{align*}
where $\Acute\s$ lies between $\0$ and $\s$.

From {Lemma~\ref{lem6},}
\begin{align*}
 &\onen\sumN\delta_i^{\hbeta_0}\{n\pi_i^p(\hbeta_0)\vee1\}
    \phi_i(\bbeta_t-\hbeta_0+\Acute\s/\sqrt{n})\x_i(\x_i)\tp
    =\bSigma_{\bbeta_t}^{-1}+\op.
\end{align*}
In addition, from Lemma~\ref{lem7}, $\dot\lambda_p(\bbeta_t)/\sqrt{n}$ converges in distribution to a normal limit. Thus, from the Basic Corollary in page 2 of \cite{hjort2011asymptotics}, the maximizer of $\gamma_p(\s)$, $\sqrt{n}(\hbeta_p-\bbeta_t)$, satisfies
\begin{align*}
  \sqrt{n}(\hat\bbeta_p-\bbeta_t)
  =\bSigma_{\bbeta_t}\frac{1}{\sqrt{n}}\dot\lambda_p(\bbeta_t)+\op.
\end{align*}
Combining this with Lemma~\ref{lem7} and Slutsky's theorem, Theorem~\ref{thm:3} follows. 
\end{proof}

\subsubsection{Proof of Proposition~\ref{prop3}}
\begin{proof}{\bf of Proposition~\ref{prop3}}
  To prove \eqref{eq:39}, we just need to show that $\bLambda_{u}\ge\bSigma_{\bbeta_t}^{-1}>\bLambda_{\rho}$. From Proposition~\ref{prop2}, we know that $\bSigma_{\bbeta_t}^{-1}>\bLambda_{\rho}$. To show that $\bLambda_{u}\ge\bSigma_{\bbeta_t}^{-1}$, we notice that
\begin{align*}
\bLambda_{u}
  &=\frac{\Exp[\phi(\bbeta_t)\{
    \rho\phi(\bbeta_t)h(\x)\vee\Phi(\bbeta_t)\}
  h(\x)\x\x\tp]}{4\Phi^2(\bbeta_t)}.\\
  &\ge\frac{\Exp\big\{\phi(\bbeta_t)h(\x)
  \Phi(\bbeta_t)\x\x\tp\big\}}{4\Phi^2(\bbeta_t)}
  =\frac{\Exp\big\{\phi(\bbeta_t)h(\x)\x\x\tp\big\}}{4\Phi(\bbeta_t)}
  =\bSigma_{\bbeta_t}^{-1},
\end{align*}
where the strict inequality holds if $\rho\phi(\bbeta_t)h(\x)\vee\Phi(\bbeta_t)\neq\rho\phi(\bbeta_t)h(\x)$ with positive probability, i.e., $\Pr\{\rho\phi(\bbeta_t)h(\x)>\Phi(\bbeta_t)\}>0$. 
\end{proof}

\subsubsection{Proof of Lemma \ref{lem7}}
\begin{proof}{\bf of Lemma~\ref{lem7}.}

  We first proof the case when the pilot estimates $\hbeta_0$ and $\hat\Psi_0$ depend on the data. 
For any $l\in\mathbb{R}^d$, denote
$\tau_{Ni}= \sqrt{N/n}\hat\Psi_0\delta_i^{\hbeta_0}\{n\pi_i^p(\hbeta_0)\vee1\}
\psi_i(\bbeta_t-\hbeta_0)\x_i\tp l$, $i=1, ..., N$, where $\delta_i^{\hbeta_0}=I\{u_i\le n\pi_i^p(\hbeta_0)\}$, and $u_i$ are i.i.d. standard uniform random variables. Note that $\tau_{Ni}$'s have the same distribution but they are not independent. {Again, since $n_0=o(\sqrt{N})$,
we can focus on $\tau_{Ni}$'s that $(\x_i,y_i)$'s are not included in the pilot subsample. We now exam the mean and variance of these $\tau_{Ni}$'s.} For the mean, based on calculation similar to that in~\eqref{eq:22}, we have, 
\begin{align*}
  &\Exp\big(\tau_{Ni}\big|\hbeta_0,\hat\Psi_0\big)
    =\sqrt{nN}\hat\Psi_0\Exp\big\{\pi_i^p(\hbeta_0)
    \psi_i(\bbeta_t-\hbeta_0)\x_i\tp l\big|\hbeta_0,\hat\Psi_0\big\}
    =\frac{\sqrt{n}\Exp(\eeta_i\big|\hbeta_0,\hat\Psi_0)}
    {\sqrt{N}}=0,
\end{align*}
which implies that
\begin{align*}
  &\Exp\tau_{Ni}=0.
\end{align*} 
For the variance, $\Var(\tau_{Ni})=\Exp(\tau_{Ni}^2)$, we start with the condition expectation,
\begin{align*}
  \Exp\big(\tau_{Ni}^2\big|\hbeta_0,\hat\Psi_0\big)\notag
  =&N\hat\Psi_0^2
     \Exp\Big[\pi_i^p(\hbeta_0)\{n\pi_i^p(\hbeta_0)\vee1\}
     \psi_i^2(\bbeta_t-\hbeta_0)(\x_i\tp l)^2
     \Big|\hbeta_0,\hat\Psi_0\Big]\notag\\
  =&\Exp\Big[|\psi_i(\hbeta_0)|\Big\{
    \frac{n}{N}|\psi_i(\hbeta_0)|h(\x_i)\vee \hat\Psi_0\Big\}
    \psi_i^2(\bbeta_t-\hbeta_0)h(\x_i)(\x_i\tp l)^2
    \Big|\hbeta_0,\hat\Psi_0\Big]\notag.
\end{align*}
If we let
\begin{align*}
  \Upsilon_{Ni}=|\psi_i(\hbeta_0)|\Big\{
    \frac{n}{N}|\psi_i(\hbeta_0)|h(\x_i)\vee \hat\Psi_0\Big\}
    \psi_i^2(\bbeta_t-\hbeta_0)h(\x_i)(\x_i\tp l)^2,
\end{align*}
then $\Var(\tau_{Ni})=\Exp(\Upsilon_{Ni})$.  
Note that
\begin{align*}
  \Upsilon_{Ni}
  &\rightarrow \Upsilon_i=0.25|\psi_i(\bbeta_t)|\{
    \rho|\psi_i(\bbeta_t)|h(\x_i)\vee\Psi(\bbeta_t)\}h(\x_i)(\x_i\tp l)^2,
\end{align*}
in probability. We now show that
\begin{align*}
  \Exp(\Upsilon_{Ni})\rightarrow\Exp(\Upsilon_i)
  &=0.25\Exp[|\psi(\bbeta_t)|\{\rho|\psi(\bbeta_t)|h(\x)
  \vee\Psi(\bbeta_t)\}h(\x)(\x\tp l)^2].
\end{align*}

Let $\Xi_i=|\Upsilon_{Ni}-\Upsilon_i|$. For any $\epsilon$, 
\begin{align*}
  |\Exp(\Upsilon_{Ni})-\Exp(\Upsilon_i)|
  &\le\Exp\{\Xi_i I(\Xi_i >\epsilon)\}+\Exp\{\Xi_i I(\Xi_i \le\epsilon)\}\\
  &\le\Exp\big[\{h^2(\x_i)(\x_i\tp l)^2+\Upsilon_i
    + \hat\Psi_0h(\x_i)(\x_i\tp l)^2\}I(\Xi_i >\epsilon)\big]+\epsilon
\end{align*}
We know that $\Exp\big[\{h^2(\x_i)(\x_i\tp l)^2 +\Upsilon_i\}
I(\Xi_i >\epsilon)\big]\rightarrow0$ since
$\Exp\{h^2(\x_i)(\x_i\tp l)^2 +\Upsilon_i\}<\infty$ for any $l\in\mathbb{R}^d$, and $I(\Xi_i >\epsilon)$ is bounded and is $\op$. Similarly, $\Exp\{\hat\Psi_0h(\x_i)(\x_i\tp l)^2I(\Xi_i >\epsilon)\}\le
\Exp\{h(\x)\}\Exp\{h(\x_i)(\x_i\tp l)^2I(\Xi_i >\epsilon)\}\rightarrow0$. 
Thus, $\Exp(\Upsilon_{Ni})-\Exp(\Upsilon_i)\rightarrow0$, 
and we have finished proving that
\begin{align}\label{eq:31}
  \Var(\tau_{Ni})\rightarrow\Exp(\Upsilon_i).
\end{align}

In the following, we exam the third moment of $\tau_{Ni}$ and prove that 
\begin{align}\label{eq:23}
  \Exp|\tau_{Ni}|^3=o(\sqrt{N}).
\end{align}
For the conditional expectation,
\begin{align*}
  &\Exp\big(|\tau_{Ni}|^3\big|\hbeta_0,\hat\Psi_0\big)\notag\\
  &=N\sqrt{N/n}\hat\Psi_0^3
     \Exp\Big[\pi_i^p(\hbeta_0)\{n\pi_i^p(\hbeta_0)\vee1\}^2
     \psi_i^3(\bbeta_t-\hbeta_0)(\x_i\tp l)^3
     \Big|\hbeta_0,\hat\Psi_0\Big]\notag\\
  &=\sqrt{N/n}\Exp\Big[|\psi_i(\hbeta_0)|\Big\{
    \frac{n}{N}|\psi_i(\hbeta_0)|h(\x_i)\vee \hat\Psi_0\Big\}^2
    \psi_i^3(\bbeta_t-\hbeta_0)h(\x_i)(\x_i\tp l)^3
    \Big|\hbeta_0,\hat\Psi_0\Big]\notag\\
  &\le2\|l\|^3\sqrt{N/n}\Exp\big[\{
    h^2(\x_i)+\hat\Psi_0^2\}h(\x_i)\|\x_i\|^3
    \big|\hbeta_0,\hat\Psi_0\big]\\
  &\le2\|l\|^3\sqrt{N/n}\big[\Exp\{h^3(\x_i)\|\x_i\|^3\}
    +\Exp\{\hat\Psi_0^2|\hbeta_0,\hat\Psi_0\}\Exp\{h(\x_i)\|\x_i\|^3\}\big].
\end{align*}
Since $\Exp\{h^3(\x_i)\|\x_i\|^3\}<\infty$ and $\Exp\{h(\x_i)\|\x_i\|^3\}<\infty$, \eqref{eq:23} follows if $\Exp\{\hat\Psi_0^2\}=O(1)$. 
This is true because
\begin{align}
  \Exp\{\hat\Psi_0^2\}\notag
  &=\Exp\bigg\{\oneN\sum_{k_1=1}^{N}
    \frac{\delta_{k_1}^{(1)}|y_{k_1}-p(\x_{k_1},\hbeta_0)|h(\x_{k_1})}{n_0/N}
    \oneN\sum_{k_2=1}^{N}
    \frac{\delta_{k_2}^{(1)}|y_{k_2}-p(\x_{k_2},\hbeta_0)|h(\x_{k_2})}{n_0/N}
    \bigg\}\notag\\
  &\le\frac{1}{N^2}\sum_{k_1\neq k_2}^N\Exp\bigg\{
    \frac{\delta_{k_1}^{(1)}h(\x_{k_1})}{n_0/N}
    \frac{\delta_{k_2}^{(1)}h(\x_{k_2})}{n_0/N}
    \bigg\}
  +\frac{1}{N^2}\sum_{k=1}^N\Exp\bigg\{
    \frac{\delta_{k}^{(1)}h^2(\x_{k})}{\{n_0/N\}^2}
    \bigg\}\notag\\
  &=\frac{1}{N^2}\sum_{k_1\neq k_2}^{N}\Exp\{
    h(\x_{k_1})h(\x_{k_2})\}
    +\frac{1}{N^2}\sum_{j=1}^N\Exp\bigg\{
    \frac{1}{n_0/N}
    h^2(\x_{k})\bigg\}\rightarrow\Exp\{h^2(\x)\}.\notag
\end{align}

Denote $\nu_{Ni}=\tau_{Ni}\{\Var(\tau_{Ni})\}^{-1/2}$. We know that {$\nu_{Ni}$'s, for which $(\x_i,y_i)$'s are not included in the pilot subsample, are i.i.d. conditional on $\hbeta_0$ and $\hat\Psi_0$. Thus, from Theorem 7.3.2 of \cite{ChowTeicher2003}, they are interchangeable. The fact that $\hbeta_0$ and $\hat\Psi_0$ are consistent estimators implies that they are a sequence of two estimators, and for each $\hbeta_0$ and $\hat\Psi_0$, $\tau_{Ni}$ are interchangeable and can be . For this setup, the central limit theorem in Theorem 2 of \cite{blum1958central} can be applied to prove the asymptotic normality.}

It is evident that $\nu_{Ni}$ have mean 0 and variance 1. It is also easy to verify that, for $i\neq j$,
\begin{align}\label{eq:28}
  &\Exp(\nu_{Ni}\nu_{Nj})
    =\Exp\{\Exp(\nu_{Ni}\nu_{Nj}|\hbeta_0,\hat\Psi_0)\}=0,
\end{align}
and
\begin{align}\label{eq:29}
  &\frac{1}{\sqrt{N}}\Exp\{|\nu_{Ni}|^3\}
    =\Exp|\tau_{Ni}|^3\{\Var(\tau_{Ni})\}^{-3/2}\rightarrow0
\end{align}
which follows from~\eqref{eq:23}. We now show that for $i\neq j$,
\begin{align}\label{eq:30}
  &\Exp\{\nu_{Ni}^2\nu_{Nj}^2\}
    \rightarrow1.
\end{align}

Since $\nu_{Ni}=\tau_{Ni}\{\Var(\tau_{Ni})\}^{-1/2}$, from~\eqref{eq:31}, to prove~\eqref{eq:30}, we only need to show that $\Exp(\tau_{Ni}^2\tau_{Nj}^2)\rightarrow\Exp(\Upsilon_i)\Exp(\Upsilon_j)=\Exp(\Upsilon_i\Upsilon_j)$, where the equality is because $\Upsilon_i$ and $\Upsilon_j$ are independent.
Noting that $\tau_{Ni}^2$ and $\tau_{Nj}^2$ are conditionally independent, we have $\Exp\big(\tau_{Ni}^2\tau_{Nj}^2\big|\hbeta_0,\hat\Psi_0\big)=\Exp\big(\tau_{Ni}^2\big|\hbeta_0,\hat\Psi_0\big)\Exp\big(\tau_{Ni}^2\big|\hbeta_0,\hat\Psi_0\big)=\Exp\big(\Upsilon_{Ni}\big|\hbeta_0,\hat\Psi_0\big)\Exp\big(\Upsilon_{Nj}\big|\hbeta_0,\hat\Psi_0\big)=\Exp\big(\Upsilon_{Ni}\Upsilon_{Nj}\big|\hbeta_0,\hat\Psi_0\big)$, so we know that $\Exp\big(\tau_{Ni}^2\tau_{Nj}^2\big)=\Exp\big(\Upsilon_{Ni}\Upsilon_{Nj}\big)$. 

Now we prove that $\Exp\big(\Upsilon_{Ni}\Upsilon_{Nj}\big)\rightarrow\Exp\big(\Upsilon_{i}\Upsilon_{j}\big)$. 
Let $\Xi_{ij}=|\Upsilon_{Ni}\Upsilon_{Nj}-\Upsilon_i\Upsilon_j|$. For any $\epsilon>0$,
\begin{align}
  |\Exp&(\Upsilon_{Ni}\Upsilon_{nj})-\Exp(\Upsilon_i\Upsilon_j)|\notag\\
  \le&\Exp\{\Xi_{ij}I(\Xi_{ij}>\epsilon)\}
     +\Exp\{\Xi_{ij}I(\Xi_{ij}\le\epsilon)\}\notag\\
  \le&\Exp\big[\{h^2(\x_i)(\x_i\tp l)^2h^2(\x_j)(\x_j\tp l)^2
       +\Upsilon_i\Upsilon_j\}
       I(\Xi_{ij}>\epsilon)\big]\notag\\
  &+\Exp\big[\hat\Psi_0^2h(\x_i)\|\x_i\|^2h(\x_j)\|\x_j\|^2
    I(\Xi_{ij}>\epsilon)\big]\notag\\
  &+\Exp(\hat\Psi_0)\Exp\big[\{h(\x_i)(\x_i\tp l)^2h^2(\x_j)(\x_j\tp l)^2
    +h(\x_j)(\x_j\tp l)^2h^2(\x_i)(\x_i\tp l)^2\}
    I(\Xi_{ij}>\epsilon)\big]+\epsilon\notag%
\end{align}
Since $i\neq j$, $\Exp\{h^2(\x)\|\x\|^2\}<\infty$, $\Exp\{h(\x)\|\x\|^2\}<\infty$, $\Exp(\hat\Psi_0)<\infty$, and $I(\Xi_{ij}>\epsilon)=\op$ is bounded, the above results indicates that \eqref{eq:30} holds. 

Since \eqref{eq:28}, \eqref{eq:29}, \eqref{eq:30} are satisfied, the central limit theorem in Theorem 2 of \cite{blum1958central} holds for $\nu_{Ni}$, which gives that
\begin{align*}
  \frac{1}{\sqrt{N}}\sumN\nu_{Ni}\rightarrow\Nor(0,1),
\end{align*}
in distribution. Note that
\begin{align*}
  \frac{1}{\sqrt{N}}\sumN\nu_i
  &=\frac{\hat\Psi_0}{\sqrt{n}\{\Var(\tau_{Ni})\}^{1/2}}
  \sumN\delta_i^{\hbeta_0}\{n\pi_i^p(\hbeta_0)\vee1\}
  \psi_i(\bbeta_t-\hbeta_0)\x_i\tp l\\
  &=\frac{\hat\Psi_0}{\sqrt{n}\{\Var(\tau_{Ni})\}^{1/2}}
  l\tp\dot\lambda_p(\bbeta_t)
  =\frac{\Psi}{\sqrt{n}\{\Var(\tau_{Ni})\}^{1/2}}
  l\tp\dot\lambda_p(\bbeta_t)+\op.
\end{align*}
Thus, from Slutsky's theorem, for any $l\in\mathbb{R}^d$, 
\begin{align}\label{eq:33}
  &\frac{1}{\sqrt{n}}l\tp\dot\lambda_p(\bbeta_t)
    \rightarrow\Nor(0,l\tp\bLambda_{u}l)
\end{align}
in distribution, where
\begin{align*}
  \bLambda_{u}
  =\frac{\Var(\tau_{Ni})}{\Psi^2(\bbeta_t)}
  &=\frac{\Exp[|\psi(\bbeta_t)|\{
    \rho|\psi(\bbeta_t)|h(\x)\vee\Psi(\bbeta_t)\}
  h(\x)\x\x\tp]}{4\Psi^2(\bbeta_t)}\\
  &\ge\frac{\Exp[|\psi(\bbeta_t)|
    h(\x)\x\x\tp]}{4\Psi(\bbeta_t)}
    =\frac{\Exp[\phi(\bbeta_t)
    h(\x)\x\x\tp]}{4\Phi(\bbeta_t)}
    =\bSigma_{\bbeta_t}^{-1},
\end{align*}
and the equality holds if $\rho=0$, i.e., $n/N\rightarrow0$. Based on \eqref{eq:33}, from the Cram\'{e}r-Wold theorem, we have that
\begin{align*}
  &\frac{1}{\sqrt{n}}\dot\lambda_p(\bbeta_t)
    \rightarrow\Nor(0,\bLambda_{u})
\end{align*}
in distribution.

  When the pilot estimates $\hbeta_0$ and $\hat\Psi_0$ are independent of the data, if we can prove the results in Lemma~\ref{lem7} under the conditional distribution given $\hbeta_0$ and $\hat\Psi_0$, then the result follows unconditionally. We provide the proof under the conditional distribution in the following. The proof is similar to the proof of Lemma~\ref{lem5} and thus we provide only the outline. The difference is we do not conditional on the full data $\Fn$ here.
  
Note that, given $\hbeta_0$ and $\hat\Psi_0$, $\dot\lambda_p(\bbeta_t)$ is a sum of $N$ independent random vectors. We now exam the mean and variance of $\dot\lambda_p(\bbeta_t)$ given $\hbeta_0$ and $\hat\Psi_0$. 
For the mean, 
\begin{align*}
  \frac{1}{\sqrt{n}}\Exp
  \big\{\dot\lambda_p(\bbeta_t)|\hbeta_0,\hat\Psi_0\big\}=\0.
\end{align*}
For the variance, 
\begin{align*}
  \onen\Var\big\{
    \dot\lambda_p(\bbeta_t)|\hbeta_0,\hat\Psi_0\big\}
  =\frac{\Exp\Big[
     |\psi_i(\hbeta_0)|\{n\pi_i^p(\hbeta_0)\vee1\}
     \psi_i^2(\bbeta_t-\hbeta_0)h(\x_i)\x_i\x_i\tp
     \Big|\hbeta_0,\hat\Psi_0\Big]}
    {\hat\Psi_0},
\end{align*}
which, under Assumptions~\ref{as:1} and \ref{as:2}, converges in probability to $\bLambda_{u}$. 

To check the Lindeberg-Feller condition \citep[Section $^*$2.8 of][]{Vaart:98} under the condition distribution, we note that for any $\epsilon>0$,
\begin{align*}
  &\onen\sumN\Exp\Big\{\big\|\dot\lambda_{pi}\big\|^2
    I(\big\|\dot\lambda_{pi}\big\|>\sqrt{n}\epsilon)
    \Big|\hbeta_0,\hat\Psi_0\Big\}\\
  &\le\frac{\Exp\Big[\{h(\x)\|\x\|^2+\hat\Psi_0\}h(\x)
    I(\{h(\x)/\hat\Psi_0+1\}\|\x\|
    >\sqrt{n}\epsilon)\Big|\hbeta_0,\hat\Psi_0\Big]}{\hat\Psi_0^2}
  =\op.
\end{align*}
Thus, applying the Lindeberg-Feller central limit theorem \citep[Section $^*$2.8 of][]{Vaart:98} finishes the proof.

\end{proof}

\subsection{Proofs for cases of misspecifications}
\subsubsection{Proofs with pilot misspecification}
\label{sec:proofs-pilot-missp}
\begin{proof}{\bf of Theorem~\ref{thm:R3}.}
By similar arguments used in the proof of Theorem~\ref{thm:1}, we know that $\sqrt{n}(\hbeta_{uw}-\bbeta_t)$ is the maximizer of
\begin{align*}
  \gamma(\s)
  &=\frac{1}{\sqrt{n}}\s\tp\dot\lambda_{uw}^*(\bbeta_t)
    +\frac{1}{2n}\sumn
    \phi_i^*(\bbeta_t-\hbeta_0+\Acute\s/\sqrt{n})
    (\s\tp\x_i^*)^2,
\end{align*}
where %
$\Acute\s$ lies between $\0$ and $\s$, and 
\begin{equation*}
  \dot\lambda_{uw}^*(\bbeta_t)
  =\sumn\big\{y_i^*-p_i^*(\bbeta_t-\hbeta_0)\big\}\x_i^*.
\end{equation*}

Given $\Fn$ and $\hbeta_0$, $\{y_i^*-p_i^*(\bbeta_t-\hbeta_0)\big\}\x_i^*$ are i.i.d. random vectors. We exam their mean and variance, and check the Lindeberg-Feller condition under the conditional distribution.
For the expectation, direct calculations give
\begin{align}\label{eq:24}
  &\Exp\big[\big\{y^*-p(\x_i^*,\bbeta_t-\hbeta_0)\big\}\x^*
    \big|\Fn,\hbeta_0\big]
  =\sumN\pi_i(\hbeta_0)\psi_i(\bbeta_t-\hbeta_0)\x_i
  =\frac{\sumN\eeta_i}{N\Psi_N(\hbeta_0)},
\end{align}
where $\eeta_i=|\psi_i(\hbeta_0)|\psi_i(\bbeta_t-\hbeta_0)h(\x_i)\x_i$. 
Conditional on $\hbeta_0$, $\eeta_i$'s are i.i.d., and we still have $\Exp(\eeta_i|\x_i,\hbeta_0)=\0$ and thus $\Exp(\eeta_i|\hbeta_0)=\0$ due to \eqref{eq:22}. Thus
\begin{align*}
  \Var(\eeta_i|\hbeta_0)
  =\Exp(\eeta_i\eeta_i\tp|\hbeta_0\big)
  &=\Exp\big\{\psi^2(\hbeta_0)\psi^2(\bbeta_t-\hbeta_0)h^2(\x)\x\x\tp
  \big|\hbeta_0\big\}\\
  &=\Exp\big\{\psi^2(\bbeta_0)\psi^2(\bbeta_t-\bbeta_0)h^2(\x)\x\x\tp
  \big\}+\op
  =\bvsigma_{b}+\op,
\end{align*}
where the third equality is from Lemma~\ref{lem1} and the facts that $\psi^2(\cdot)\le1$ and $\Exp\{h^2(\x)\|\x|^2\}<\infty$.

Similar to the proof of Lemma~\ref{lem2}, the Lindeberg-Feller central limit theorem applies conditional on $\hbeta_0$. Thus, we have that, conditional on $\hbeta_0$, 
\begin{align}\label{eq:25}
  \frac{\sumN\eeta_i}{\sqrt{N}}\longrightarrow \Nor(\0,\ \bvsigma_{b}),
\end{align}
in distribution. 

From \eqref{eq:24} and \eqref{eq:25}, we have
\begin{align}\label{eq:56}
  \Delta=&\Exp\big[\big\{y^*-p(\x_i^*,\bbeta_t-\hbeta_0)\big\}
  \x^*\big|\Fn,\hbeta_0\big]=O_P(1/\sqrt{N}).
\end{align}

For the conditional variance of $\{y_i^*-p_i^*(\bbeta_t-\hbeta_0)\big\}\x_i^*$, using similar approach to the proof of Lemma~\ref{lem3}, we have
\begin{align}
  &\Var\big[\big\{y^*-p(\x_i^*,\bbeta_t-\hbeta_0)\big\}\x^*
    |\Fn,\hbeta_0\big]\notag\\
  =&\sumN\pi_i(\hbeta_0)\psi_i^2(\bbeta_t-\hbeta_0)
     \x_i\x_i\tp-\Delta^2\notag\\
  =&\frac{\oneN\sumN|\psi_i(\hbeta_0)|
    \psi_i^2(\bbeta_t-\hbeta_0)h(\x_i)\x_i\x_i\tp}
    {\Psi_N(\hbeta_0)}-O_P(1/N)\notag\\
  =&\frac{\Exp\{|\psi(\bbeta_0)|
    \psi^2(\bbeta_t-\bbeta_0)h(\x)\x\x\tp\}}
     {\Psi(\bbeta_0)}+\op
  =\frac{\bvsigma_{a}}{\Psi(\bbeta_0)}+\op\label{eq:57}
\end{align}

The Lindeberg-Feller condition under the conditional distribution can be verified similarly to the proof of Lemma~\ref{lem3}.  Thus, we have
\begin{align}\label{eq:26}
  \frac{\dot\lambda_{uw}^*(\bbeta_t)}{\sqrt{n}}
  -\frac{\sqrt{n}\sumN\eeta_i}{N\Psi_N(\hbeta_0)}
  \longrightarrow \Nor\Big\{\0,\ \frac{\bvsigma_{a}}
    {\Psi(\bbeta_0)}\Big\},
\end{align}
in conditional distribution.

From Lemmas~\ref{lem1} and \ref{lem4}, and the law of large numbers, we have
\begin{align*}
  &\onen\sumn\phi_i^*(\bbeta_t-\hbeta_0+\Acute\s/\sqrt{n})
    \x_i^*(\x_i^*)\tp\\
  &=\sumN\pi_i(\hbeta_0)\phi_i(\bbeta_t-\bbeta_0)\x_i\x_i\tp+\op\\
  &=\frac{\oneN\sumN|\psi_i(\hbeta_0)|h(\x_i)
    \phi_i(\bbeta_t-\bbeta_0)\x_i\x_i\tp}{\Psi_N(\hbeta_0)}+\op\\
  &=\frac{\Exp\{|\psi(\bbeta_0)|\phi(\bbeta_t-\bbeta_0)
    h(\x)\x\x\tp\}}{\Psi(\bbeta_0)}+\op
  =\frac{\bvsigma_{a}}{\Psi(\bbeta_0)}+\op.
\end{align*}
Since $\bvsigma_{a}$ is a positive definite matrix, and combining \eqref{eq:24}, \eqref{eq:25}, \eqref{eq:56} and \eqref{eq:57} we know that $\dot\lambda_{uw}^*(\bbeta_t)/\sqrt{n}$ is stochastically bounded, from the Basic Corollary in page 2 of \cite{hjort2011asymptotics}, we have that
\begin{align}\label{eq:58}
  \sqrt{n}(\hat\bbeta_{uw}-\bbeta_t)
  =\Psi(\bbeta_0)\bvsigma_{a}^{-1}\frac{1}{\sqrt{n}}
  \dot\lambda_{uw}^*(\bbeta_t)+\op%
\end{align}
given $\Fn$ and $\hbeta_0$.

Now note that $\sqrt{N}(\hbeta_{\wmle}-\bbeta_t)$ is the maximizer of
  \begin{align*}
    \frac{1}{\sqrt{N}}\s\tp\sumN\eeta_i
    -\frac{1}{2N}\sumN|y_i-p(\x_i,\hbeta_0)|h(\x_i)
    \phi_i(\bbeta_t-\hbeta_0+\Acute\s/\sqrt{N})(\s\tp\x_i)^2
\end{align*}
with $\Acute\s$ between $\0$ and $\s$.
From Lemma~\ref{lem1}, conditional on $\hbeta_0$,  
\begin{align*}
 &\oneN\sumN|y_i-p(\x_i,\hbeta_0)|h(\x_i)
    \phi_i(\bbeta_t-\hbeta_0+\Acute\s/\sqrt{N})\x_i\x_i\tp\\
 &=\oneN\sumN|y_i-p(\x_i,\bbeta_0)|h(\x_i)
    \phi_i(\bbeta_t-\bbeta_0)\x_i\x_i\tp+\op\\
 &=\Exp\{|\psi(\bbeta_0)|
   \phi_i(\bbeta_t-\bbeta_0)h(\x_i)\x_i\x_i\tp\}+\op
  =\bvsigma_{a}+\op
\end{align*}
Thus, from the Basic Corollary in page 2 of \cite{hjort2011asymptotics}, we have
\begin{align}\label{eq:55}
  \sqrt{N}(\hbeta_{\wmle}-\bbeta_t)
  =&\bvsigma_{a}^{-1}
    \frac{1}{\sqrt{N}}\sumN\eeta_i+\op.
\end{align}
Combining this with \eqref{eq:58}, we have
\begin{align*}
  \sqrt{n}(\hat\bbeta_{uw}-\hbeta_{\wmle})
  =\Psi(\bbeta_0)\bvsigma_{a}^{-1}\Big\{\frac{1}{\sqrt{n}}
  \dot\lambda_{uw}^*(\bbeta_t)
  -\frac{\sqrt{n}}{N\Psi(\bbeta_0)}\sumN\eeta_i\Big\}+\op.
\end{align*}

Combining the above two equations with \eqref{eq:26}, Slutsky's theorem, and the fact that a conditional probability is bounded, Theorem \ref{thm:R3} follows.
\end{proof}

\begin{proof}{\bf of Remark~\ref{prop:R1}}
We first observe the following equations by direct calculations. 
\begin{align}
  \Psi(\bbeta_0)
  &=\Exp\big[\{p(\x,\bbeta_t)+p(\x,\bbeta_0)
  -2p(\x,\bbeta_0)p(\x,\bbeta_t)\}h(\x)\big],\notag\text{ and }\\
  p(\x,\bbeta_t-\bbeta_0)
  &=\frac{p(\x,\bbeta_t)\{1-p(\x,\bbeta_0)\}}
  {p(\x,\bbeta_t)\{1-p(\x,\bbeta_0)\}
  +p(\x,\bbeta_0)\{1-p(\x,\bbeta_t)\}}.\label{eq:59}
\end{align}
We need to verify that
\begin{align*}
  \frac{\Exp[\{1-p(\x,\bbeta_t)\}p(\x,\bbeta_0)
  p(\x,\bbeta_t-\bbeta_0)h(\x)\x\x\tp]}
  {\Exp\big[\{p(\x,\bbeta_t)+p(\x,\bbeta_0)
  -2p(\x,\bbeta_0)p(\x,\bbeta_t)\}h(\x)\big]}
  <\frac{\Exp\{\phi(\bbeta_t)h(\x)\x\x\tp\}}
    {4\Exp\{\phi(\bbeta_t)h(\x)\}}
\end{align*}
Note that from \eqref{eq:59}
\begin{align*}
  &2\{1-p(\x,\bbeta_t)\}p(\x,\bbeta_0)p(\x,\bbeta_t-\bbeta_0)
    <\phi(\bbeta_t)\notag\\
  \Leftrightarrow\ 
  &\{p(\x,\bbeta_t)-p(\x,\bbeta_0)\}
    \{1-2p(\x,\bbeta_0)\}>0,%
\end{align*}
and
\begin{align*}
  &p(\x,\bbeta_t)+p(\x,\bbeta_0)-2p(\x,\bbeta_0)p(\x,\bbeta_t)
  >2\phi(\bbeta_t)\notag\\
  \Leftrightarrow\ 
  &\{p(\x,\bbeta_t)-p(\x,\bbeta_0)\}
    \{2p(\x,\bbeta_t)-1\}>0.
\end{align*}
Thus the inequality holds if 
\begin{align*}
  &p(\x,\bbeta_t)>0.5>p(\x,\bbeta_0)
  \quad\text{ or }\quad
  p(\x,\bbeta_t)<0.5<p(\x,\bbeta_0)\notag\\
  \Leftrightarrow\
  & \x\tp\bbeta_t>0>\x\tp\bbeta_0
    \quad\text{ or }\quad
    \x\tp\bbeta_t<0<\x\tp\bbeta_0\notag\\
  \Leftrightarrow\
  & \x\tp\bbeta_t\x\tp\bbeta_0<0.
\end{align*}
This finishes the proof. 
\end{proof}

\begin{proof}{\bf of Theorem \ref{thm:R4}.} 
  Similarly to the proof of Theorem \ref{thm:2}, $\sqrt{n}(\hbeta_p-\bbeta_t)$ is the maximizer of
\begin{align*}  %
   \frac{1}{\sqrt{n}}\s\tp\dot\lambda_p(\bbeta_t)
    +\frac{1}{2n}\sumN\delta_i^{\hbeta_0}\{n\pi_i^p(\hbeta_0)\vee1\}
    \phi_i(\bbeta_t-\hbeta_0+\Acute\s/\sqrt{n})(\s\tp\x_i)^2,
\end{align*}
where $\Acute\s$ lies between $\0$ and $\s$, and
\begin{equation*}
  \dot\lambda_p(\bbeta_t)
  =\sumN\delta_i^{\hbeta_0}\{n\pi_i^p(\hbeta_0)\vee1\}
  \{y_i-p(\x_i,\bbeta_t-\hbeta_0)\}\x_i.
\end{equation*}
Similarly to the proof of Lemma~\ref{lem5}, we first notice that given $\Fn$, $\hbeta_0$, and $\hat\Psi_0$, $\dot\lambda_p(\bbeta_t)$ is a sum of $N$ independent random vectors. We now exam the mean and variance of $\dot\lambda_p(\bbeta_t)$. 
For the mean, we have, 
\begin{align*}
  &\frac{1}{\sqrt{n}}\Exp
  \big\{\dot\lambda_p(\bbeta_t)|\Fn,\hbeta_0,\hat\Psi_0\big\}
    =\frac{\sqrt{n}}{\sqrt{N}}
    \frac{\sumN\eeta_i}{\hat\Psi_0\sqrt{N}}=O_P(\sqrt{n/N}),
\end{align*}
where the last equality is due to \eqref{eq:25}. 

For the variance,
\begin{align*}
  \onen\Var\big\{
    \dot\lambda_p(\bbeta_t)|\Fn,\hbeta_0,\hat\Psi_0\big\}\notag
  =&\frac{\oneN\sumN
     |\psi_i(\hbeta_0)|\{n\pi_i^p(\hbeta_0)\vee1\}
    \psi_i^2(\bbeta_t-\hbeta_0)h(\x_i)\x_i\x_i\tp}
    {\hat\Psi_0}\notag\\
  &-\frac{n}{N}\frac{\oneN\sumN\psi_i^2(\hbeta_0)
    \psi_i^2(\bbeta_t-\hbeta_0)h^2(\x_i)\x_i\x_i\tp}
    {\hat\Psi_0^2}\notag\\
 \equiv&\Delta_{8}-\Delta_{9}
\end{align*}
From Lemma~\ref{lem1}, if $n/N\rightarrow\rho$, using a similar approach used in the proof of Lemma~\ref{lem5}, we have
\begin{align*}
  \Delta_{8}
  &=\frac{1}{\hat\Psi_0^2}\oneN\sumN|\psi_i(\hbeta_0)|
    \Big\{\frac{n|\psi_i(\hbeta_0)|h(\x_i)}{N}
    \vee\hat\Psi_0\Big\}
    \psi_i^2(\bbeta_t-\hbeta_0)h(\x_i)\x_i\x_i\tp\notag\\
  &=\frac{1}{\Psi_0^2}\Exp\Big[|\psi(\bbeta_0)|
    \big\{\rho|\psi(\bbeta_0)|h(\x)
    \vee\Psi(\bbeta_0)\big\}
    \psi^2(\bbeta_t-\bbeta_0)h(\x)\x\x\tp\Big]+\op
\end{align*}
and
\begin{align*}
 \Delta_{9}=\rho\frac{\Exp\{\psi^2(\bbeta_0)
  \psi^2(\bbeta_t-\bbeta_0)h^2(\x)\x\x\tp\}}{\Psi_0^2}
  =\rho\frac{\bvsigma_{b}}{\Psi_0^2}+\op.
\end{align*}
Thus,
\begin{align*}
  &\onen
  \Var\big\{\dot\lambda_p(\bbeta_t)|\Fn,\hbeta_0,\hat\Psi_0\big\}\\
  &=\frac{\Exp\Big[|\psi(\bbeta_0)|
    \big\{1-\Psi_0^{-1}\rho|\psi(\bbeta_0)|h(\x)\big\}_+
    \psi^2(\bbeta_t-\bbeta_0)h(\x)\x\x\tp\Big]}{\Psi_0}+\op
  =\frac{\bvsigma_{c}}{\Psi_0}+\op.
\end{align*}

Applying the Lindeberg-Feller central limit theorem, we have
\begin{align}\label{eq:32}
  \frac{\dot\lambda_p(\bbeta_t)}{\sqrt{n}}
  -\frac{\sqrt{n}\sumN\eeta_i}{N\Psi_N(\hbeta_0)}
  \longrightarrow \Nor\Big(\0,\ \frac{\bvsigma_{c}}{\Psi_0^2}\Big),
\end{align}
in conditional distribution. 

Using a similar approach used to prove Lemma~\ref{lem6}, we have
\begin{align*}
  &\onen\sumN\delta_i^{\hbeta_0}\{n\pi_i^p(\hbeta_0)\vee1\}
    \phi_i(\bbeta_t-\hbeta_0+\s_n)\x_i\x_i\tp\\
  &=\sumN\pi_i^p(\hbeta_0)\phi_i(\bbeta_t-\bbeta_0)
    \x_i\x_i\tp+\op\\
  &=\frac{\oneN\sumN|\psi_i(\hbeta_0)|h(\x_i)
    \phi_i(\bbeta_t-\bbeta_0)\x_i\x_i\tp}{\hat\Psi_0}+\op
    =\frac{\bvsigma_{a}}{\Psi_0}+\op.
\end{align*}
From \eqref{eq:25} and \eqref{eq:32}, $\dot\lambda_p(\bbeta_t)/\sqrt{n}$ is stochastically bounded. In addition, $\bvsigma_{a}$ is finite and positive-definite. Thus, from the Basic Corollary in page 2 of \cite{hjort2011asymptotics}, $\sqrt{n}(\hbeta_p-\bbeta_t)$ satisfies
\begin{align*}
  \sqrt{n}(\hat\bbeta_p-\bbeta_t)
  =\Psi_0\bvsigma_{a}^{-1}\frac{1}{\sqrt{n}}
  \dot\lambda_p(\bbeta_t)+\op,
\end{align*}
given $\Fn$, $\hbeta_0$, and $\hat\Psi_0$. Combining this with \eqref{eq:55}, \eqref{eq:32}, Slutsky's theorem, and the fact that a conditional probability is bounded by one, Theorem~\ref{thm:R4} follows. 
\end{proof}

\subsubsection{Proofs with model misspecification}

\begin{proof}{\bf of Theorem~\ref{thm:R1}.}
  By similar arguments used in the proof of Theorem~\ref{thm:1}, we know that $\sqrt{n}(\hbeta_{uw}-\bbeta_l)$ is the maximizer of 
\begin{align*}
    \frac{1}{\sqrt{n}}\s\tp\dot\lambda_{uw}^*(\bbeta_l)
    +\frac{1}{2n}\sumn
    \phi_i^*(\bbeta_l-\hbeta_0+\Acute\s/\sqrt{n})
    (\s\tp\x_i^*)^2,
\end{align*}
where $\Acute\s$ lies between $\0$ and $\s$, and
\begin{equation*}
  \dot\lambda_{uw}^*(\bbeta_l)
  =\sumn\big\{y_i^*-p(\x_i^*,\bbeta_l-\hbeta_0)\big\}\x_i^*.
\end{equation*}
We abuse the notation and redefine $\eeta_i=|\psi_i(\hbeta_0)|\psi_i(\bbeta_l-\hbeta_0)h(\x_i)\x_i$ in this proof. 
By similar arguments used in the proof of Lemma~\ref{lem3}, we have that 
\begin{align*}
  &\Exp\big[\big\{y^*-p(\x^*,\bbeta_l-\hbeta_0)\big\}\x^*
    \big|\Fn,\hbeta_0\big]
  =\sumN\pi_i(\hbeta_0)\psi_i(\bbeta_l-\hbeta_0)\x_i
  =\frac{\sumN\eeta_i}{N\Psi_N(\hbeta_0)},%
\end{align*}
and
\begin{align*}
  &\Var\big[\big\{y^*-p(\x^*,\bbeta_l-\hbeta_0)\big\}\x^*
    \big|\Fn,\hbeta_0\big]\\
  &=\sumN\pi_i(\hbeta_0)\{y_i-p(\x_i,\bbeta_l-\hbeta_0)\}^2
    \x_i\x_i\tp-\Delta^2\\
  &=\frac{\oneN\sumN|\psi_i(\bbeta_l)|
    (y_i-0.5)^2h(\x_i)\x_i\x_i\tp}{\Psi_N(\bbeta_l)}+\op
  =\frac{\bkappa_{a}}{\omega}+\op.
\end{align*}
The Lindeberg-Feller condition under the conditional distribution can be verified using a similar approached used in the proof of Lemma~\ref{lem3}. Thus, conditional on $\Fn$ and $\hbeta_0$, as $n_0$, $n$, and $N$ go to infinity, 
\begin{align}\label{eq:42}
  \frac{\dot\lambda_{uw}^*(\bbeta_t)}{\sqrt{n}}
  -\frac{\sqrt{n}\sumN\eeta_i}{N\Psi_N(\hbeta_0)}
  \longrightarrow \Nor\Big(\0,\ \frac{\bkappa_{a}}{\omega}\Big),
\end{align}
in conditional distribution. Now we exam $\eeta_i$. 
For the $j$-th element of $\eeta_i$,
\begin{align}
  \eta_{ij}=
  &|\psi_i(\hbeta_0)|\{y_i-p_i(\x_i,\bbeta_l-\hbeta_0)\}
    h(\x_i)x_{ij}\notag\\
  =&(2y_i-1)\{y_i-p_i(\x_i,\hbeta_0)\}
     \{y_i-p_i(\x_i,\bbeta_l-\hbeta_0)\}h(\x_i)x_{ij}\notag\\
  =&0.5(2y_i-1)^2\{y_i-p_i(\x_i,\bbeta_l)\}h(\x_i)x_{ij}
     +\Acute{\eta}_{ij}h(\x_i)x_{ij}\x_i\tp(\hbeta_0-\bbeta_l),
     \label{eq:43}
\end{align}
where
\begin{align*}
  \Acute{\eta}_{ij}=
  &(2y_i-1)\Big[\{y_i-p_i(\x_i,\Acute\bbeta)\}p_i(\x_i,\Acute\s)
    \{1-p_i(\x_i,\Acute\s)\}\notag\\
  &\hspace{3cm}-p_i(\x_i,\Acute\bbeta)\{1-p_i(\x_i,\Acute\bbeta)\}
    \{y_i-p_i(\x_i,\Acute\s)\}\Big],
\end{align*}
with $\Acute\s=\bbeta_l-\Acute\bbeta$, and $\Acute\bbeta$ being between $\bbeta_l$ and $\hbeta_0$. 
Note that $|\dot\eta_{ij}|\le2$ and
\begin{align*}
  \Acute\eta_{ij}\rightarrow
  &\frac{(2y_i-1)}{4}\big[\{y_i-p_i(\x_i,\bbeta_l)\}
  -2(2y_i-1)p_i(\x_i,\bbeta_l)
    \{1-p_i(\x_i,\bbeta_l)\}\big],
\end{align*}
in probability. %
Thus, from Lemma~\ref{lem1} and direct calculations, we have that
\begin{align}
  &\oneN\sumN h(\x_i)(\Acute{\eeta}_i\circ\x_i)\x_i\tp
  =\bkappa_{c}+\op, \label{eq:44}
\end{align}
where $\Acute{\eeta}_i=(\Acute{\eta}_{i1}, ..., \Acute{\eta}_{id})\tp$ and $\circ$ is the Hadamard product. From \eqref{eq:43} and \eqref{eq:44}, 
we have that
\begin{align}
  \sumN\eeta_i
  =&\frac{1}{2}\sumN(2y_i-1)^2\{y_i-p_i(\x_i,\bbeta_l)\}h(\x_i)\x_i
  +\bkappa_{c} N(\hbeta_0-\bbeta_l)\notag
  +o_P\{N(\hbeta_0-\bbeta_l)\}.
\end{align}
Thus, from the central limit theorem and Slutsky's theorem, and the fact that $\hbeta_0$ is independent of $\Fn$, 
\begin{align}\label{eq:45}
  \frac{\sumN\eeta_i}{\sqrt{N}}
  &\longrightarrow \Nor\Big(\0,\ \bkappa_{b}+
    \frac{\bkappa_{c}\bSigma_0\bkappa_{c}}{{\rho_0}}\Big).
\end{align}

From Lemma~\ref{lem1} and using a similar approach to prove Lemma~\ref{lem4}, we have
\begin{align*}
  &\onen\sumn\phi_i^*(\bbeta_l-\hbeta_0+\Acute\s/\sqrt{n})
    \x_i^*(\x_i^*)\tp\\
  &=\sumN\pi_i(\hbeta_0)\phi_i(\bbeta_l-\hbeta_0)\x_i\x_i\tp+\op\\
  &=\frac{\oneN\sumN|\psi_i(\hbeta_0)|h(\x_i)
    \phi_i(\bbeta_l-\hbeta_0)\x_i\x_i\tp}
    {\oneN\sumN|y_i-p(\x_i,\hbeta_0)|h(\x_i)}+\op
  =\frac{\bkappa_{a}}{\omega}+\op.
\end{align*}

Since $\bkappa_{a}$ is a positive definite matrix, and  $\dot\lambda_{uw}^*(\bbeta_l)/\sqrt{n}$ is stochastically bounded due to \eqref{eq:42} and \eqref{eq:45}, from the Basic Corollary in page 2 of \cite{hjort2011asymptotics}, $\sqrt{n}(\hbeta_{uw}-\bbeta_l)$ satisfies
\begin{align}\label{eq:48}
  \sqrt{n}(\hat\bbeta_{uw}-\bbeta_l)
  =\omega\bkappa_{a}^{-1}\frac{1}{\sqrt{n}}
  \dot\lambda_{uw}^*(\bbeta_l)+\op
\end{align}
given $\Fn$ and $\hbeta_0$.

Using similar arguments used in the proof of Lemma~\ref{lem2}, we know that $\sqrt{N}(\hbeta_{\wmle}-\bbeta_l)$ is the maximizer of
\begin{align*}
    \frac{1}{\sqrt{N}}\s\tp\sumN\eeta_i
    -\frac{1}{2N}\sumN|y_i-p(\x_i,\hbeta_0)|h(\x_i)
    \phi_i(\bbeta_l-\hbeta_0+\Acute\s/\sqrt{N})(\s\tp\x_i)^2,
\end{align*}
where $\Acute\s$ lies between $\0$ and $\s$. 
From Lemma~\ref{lem1}, 
\begin{align*}
  &\oneN\sumN|y_i-p(\x_i,\hbeta_0)|h(\x_i)
    \phi_i(\bbeta_l-\hbeta_0+\Acute\s/\sqrt{N})\x_i\x_i\tp
    =\bkappa_{a}+\op.
\end{align*}
Thus, from \eqref{eq:45} and the Basic Corollary in page 2 of \cite{hjort2011asymptotics}, we know that $\sqrt{N}(\hbeta_{\wmle}-\bbeta_l)$ satisfies
\begin{align}\label{eq:46}
  \sqrt{N}(\hbeta_{\wmle}-\bbeta_l)
  =&\bkappa_{a}^{-1}\frac{1}{\sqrt{N}}\sumN\eeta_i+\op.
\end{align}
From \eqref{eq:45}, Slutsky's theorem, and the fact that a conditional probability is bounded by one, the result in \eqref{eq:47} follows. 

From \eqref{eq:48} and \eqref{eq:46}, we have
\begin{align*}
  \sqrt{n}(\hat\bbeta_{uw}-\hbeta_{\wmle})
  &=\omega\bkappa_{a}^{-1}
  \bigg\{\frac{1}{\sqrt{n}}\dot\lambda_{uw}^*(\bbeta_l)
  -\frac{\sqrt{n}\sumN\eeta_i}{\omega N}\bigg\}+\op.
\end{align*}
Thus, from \eqref{eq:42}, Slutsky's theorem, and the fact that a conditional probability is bounded by one, \eqref{eq:49} of Theorem~\ref{thm:R1} follows. 
\end{proof}

\begin{proof}{\bf of Theorem \ref{thm:R2}.}
  Using similar arguments used in the proof Theorem \ref{thm:2}, we know that $\sqrt{n}(\hbeta_p-\bbeta_l)$ is the maximizer of
  \begin{align*}
    \frac{1}{\sqrt{n}}\s\tp\dot\lambda_p(\bbeta_l)
    +\frac{1}{2n}\sumN\delta_i^{\hbeta_0}\{n\pi_i^p(\hbeta_0)\vee1\}
    \phi_i(\bbeta_l-\hbeta_0+\Acute\s/\sqrt{n})(\s\tp\x_i)^2,
\end{align*}
where $\Acute\s$ lies between $\0$ and $\s$, and 
  \begin{equation*}
  \dot\lambda_p(\bbeta_l)
  =\sumN\delta_i^{\hbeta_0}\{n\pi_i^p(\hbeta_0)\vee1\}
  \{y_i-p(\x_i,\bbeta_l-\hbeta_0)\}\x_i.
\end{equation*}

Given $\Fn,\hbeta_0$ and $\hat\Psi_0$, $\dot\lambda_p(\bbeta_l)$ is a sum of independent variables, and the Lindeberg-Feller condition under the condition distribution can be verified similarly to the proof of Lemma~\ref{lem5}. Now we exam the conditional mean and variance of $\dot\lambda_p(\bbeta_l)$. 
For the mean, from \eqref{eq:45} we have, 
\begin{align*}
  &\frac{1}{\sqrt{n}}\Exp
    \big\{\dot\lambda_p(\bbeta_l)|\Fn,\hbeta_0,\hat\Psi_0\big\}
    =\frac{\sqrt{n}}{\sqrt{N}}
    \frac{\sumN\eeta_i}{\hat\Psi_0\sqrt{N}}=O_P(\sqrt{n/N}).
\end{align*}
For the variance,
\begin{align}
  \onen&\Var\big\{
    \dot\lambda_p(\bbeta_l)|\Fn,\hbeta_0,\hat\Psi_0\big\}\notag\\
  =&\frac{\oneN\sumN
     |\psi_i(\hbeta_0)|\{n\pi_i^p(\hbeta_0)\vee1\}
    \psi_i^2(\bbeta_l-\hbeta_0)h(\x_i)\x_i\x_i\tp}
    {\hat\Psi_0}\notag\\
  &-\frac{n}{N}\frac{\oneN\sumN\psi_i^2(\hbeta_0)
    \psi_i^2(\bbeta_l-\hbeta_0)h^2(\x_i)\x_i\x_i\tp}
    {\hat\Psi_0^2}
 \equiv\Delta_{10}+\Delta_{11}. \label{eq:52}
\end{align}
Note that $\Exp\{h(\x)\|\x\|^2\}<\infty$, $\Exp\{h^2(\x)\|\x\|^2\}<\infty$, and $|\psi_i(\cdot)|$ are bounded. Thus, from Lemma~\ref{lem1}, if $n/N\rightarrow\rho$, 
\begin{align}\label{eq:53}
 \Delta_{11}\rightarrow\rho\frac{\bkappa_{b}}{\omega^2},
\end{align}
in probability. 
For the term $\Delta_{10}$ in~\eqref{eq:52}, since $\Exp\{h(\x)\|\x\|^2\}<\infty$, $\Exp\{h^2(\x)\|\x\|^2\}<\infty$, and $|\psi_i(\cdot)|$ are bounded, from Lemma~\ref{lem1}, if $n/N\rightarrow\rho$, as $n_0$, $n$, and $N$ go to infinity, by a similar approach used in the proof of Lemma~\ref{lem5}, %
we have
\begin{align}
  \Delta_{10}
  &=\frac{1}{\hat\Psi_0^2}\frac{n}{N^2}\sumN\psi_i^2(\hbeta_0)
     \psi_i^2(\bbeta_l-\hbeta_0)h^2(\x_i)\x_i\x_i\tp
     I\Big\{\frac{n|\psi_i(\hbeta_0)|h(\x_i)}{N}
     >\hat\Psi_0\Big\}\notag\\
  &\quad+\frac{1}{\hat\Psi_0}\oneN\sumN|\psi_i(\hbeta_0)|
    \psi_i^2(\bbeta_l-\hbeta_0)h(\x_i)\x_i\x_i\tp
    I\Big\{\frac{n|\psi_i(\hbeta_0)|h(\x_i)}{N}
    \le\hat\Psi_0\Big\}\notag\\
  &=\frac{\Exp\Big(|\psi(\bbeta_l)|
    [\{\rho|\psi(\bbeta_l)|h(\x)\}\vee\omega]
    h(\x)\x\x\tp\Big)}{4\omega^2}+\op.\label{eq:54}
\end{align}
From, \eqref{eq:52}, \eqref{eq:53}, and \eqref{eq:54}, if $n/N\rightarrow\rho$,
\begin{align*}
  \onen
  &\Var\big\{\dot\lambda_p(\bbeta_l)|\Fn,\hbeta_0,\hat\Psi_0\big\}
  =\frac{\bkappa_{d}}{\omega}+\op.
\end{align*}

From the above results, conditional on $\Fn$, $\hbeta_0$, and $\hat\Psi_0$, we know that
\begin{align}\label{eq:50}
  \frac{\dot\lambda_p(\bbeta_l)}{\sqrt{n}}
  -\frac{\sqrt{n}\sumN\eeta_i}{N\hat\Psi_0}
  \longrightarrow \Nor\Big(\0,\ \frac{\bkappa_{d}}{\omega^2}\Big),
\end{align}
in distribution. 

In addition, from Lemma~\ref{lem1}, using an approach similar to the proof of Lemma~\ref{lem6}, we have
\begin{align}\label{eq:51}
 &\onen\sumN\delta_i^{\hbeta_0}\{n\pi_i^p(\hbeta_0)\vee1\}
    \phi_i(\bbeta_l-\hbeta_0+\Acute\s/\sqrt{n})\x_i\x_i\tp\notag\\
 &=\frac{1}{4n}\sumN\delta_i^{\bbeta_l}\{n\pi_i^p(\bbeta_l)\vee1\}
   \x_i\x_i\tp+\op\notag\\
 &=\frac{1}{4N\hat\Psi_0}\sumN|\psi_i(\bbeta_l)|h(\x_i)
   \x_i\x_i\tp+\op
 =\frac{\bkappa_{a}}{\omega}+\op.
\end{align}

Thus, based on \eqref{eq:45}, \eqref{eq:50}, and \eqref{eq:51}, from the Basic Corollary in page 2 of \cite{hjort2011asymptotics}, $\sqrt{n}(\hbeta_p-\bbeta_l)$, satisfies
\begin{align*}
  \sqrt{n}(\hat\bbeta_p-\bbeta_l)
  =\omega\bkappa_{a}\frac{1}{\sqrt{n}}\dot\lambda_p(\bbeta_l)+\op,
\end{align*}
given $\Fn$, $\hbeta_0$, and $\hat\Psi_0$. Combining this with \eqref{eq:46}, \eqref{eq:50}, Slutsky's theorem, and the fact that a conditional probability is bounded by one, Theorem~\ref{thm:R2} follows. 
\end{proof}

\color{black}

\vskip 0.2in
\bibliography{logisticref}

\end{document}